\definecolor{subtler}{rgb}{1,0,0.1}    
\newcommand{\be}{\begin{equation}}
\newcommand{\ee}{\end{equation}}
\newcommand{\ben}{\begin{equation*}}
\newcommand{\een}{\end{equation*}}
\newcommand{\ba}{\begin{eqnarray}}
\newcommand{\ea}{\end{eqnarray}}
\newcommand\Var {{\rm Var}}
\def\techrep{\cite{FisherTechrep08}}
\newtheorem{thm}{Theorem}
\newtheorem{cor}[thm]{Corollary}
\newtheorem{lem}[thm]{Lemma}
\newtheorem{prop}[thm]{Proposition}
\newtheorem{defn}[thm]{Definition}
\newtheorem{rem}[thm]{Remark}
\newcommand{\Real}{\mathbb R}
\newcommand{\tr}{\mathbf{tr}}
\newcommand{\T}{\mathrm{T}}
\newcommand{\E}{\mathbb E}
\newcommand{\Count}{1}
\newcommand{\uncon}{\mathbf{J}}
\newcommand{\con}{\boldsymbol{\mathcal{I}}}
\newcommand{\constr}{\mathbf{G}}
\newcommand{\diag}{\mathbf{diag}}
\newcommand{\rank}{\mathrm{rank}}
\newcommand{\ind}{\mathbbm 1}
\newcommand{\veth}{\boldsymbol{\hat \theta}}
\def\ppopt{\pp'} 
\def\si{m_i}       
\def\nf{N_{\!f}}       
\def\pp{p_p}     
\def\qp{q_p}	 
\def\pf{p_{\!f}}      
\def\qf{q_{\!f}} 
\def\np{n}
\def\naive{na\"{\i}ve\ }
\def\ppd{p_{p, \text{DS}}}
\def\bjk{b_{jk}}
\def\vth{\boldsymbol\theta}
\def\vgm{\boldsymbol\gamma}
\def\Bt{\mathbf{\tilde B}}
\def\Dt{\mathbf{\tilde D}}
\def\Jt{\mathbf{\tilde J}}
\def\onew{\mathbf{1}_W}
\def\bt{\mathbf{b}^\T_0}
\def\dt{\mathbf{d}}
\def\ppo{p_{p,1}}
\def\pfo{p_{\!f,1}}
\def\ppt{p_{p,2}}
\def\pft{p_{\!f,2}}
\def\threeup{63mm}
\def\oneup{86mm}
\def\DSfamily                  {\includegraphics[width=83mm]{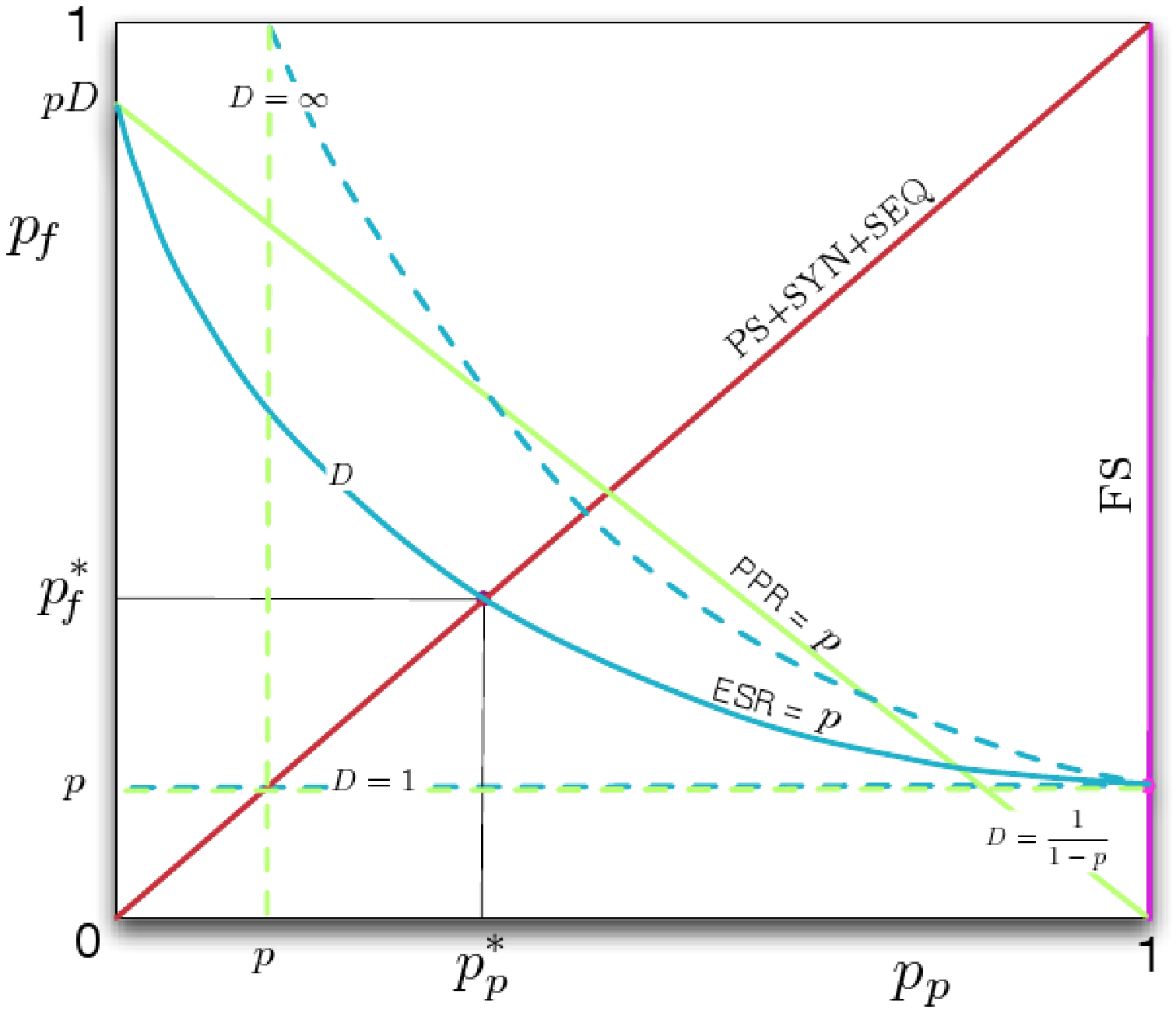}}
\def\shortComparePPR       {\includegraphics[width=90mm]{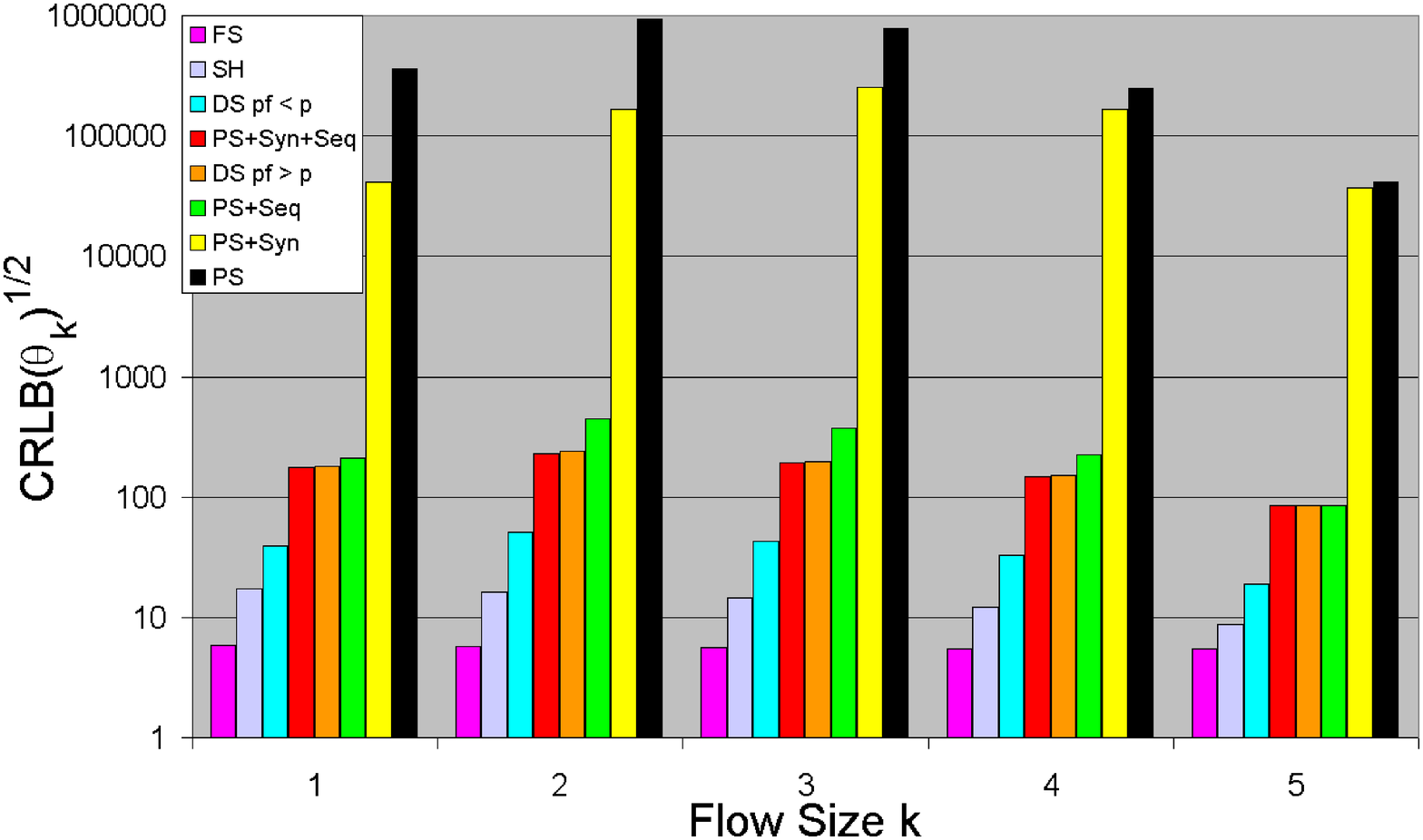}}
\def\shortCompareESR       {\includegraphics[width=\oneup]{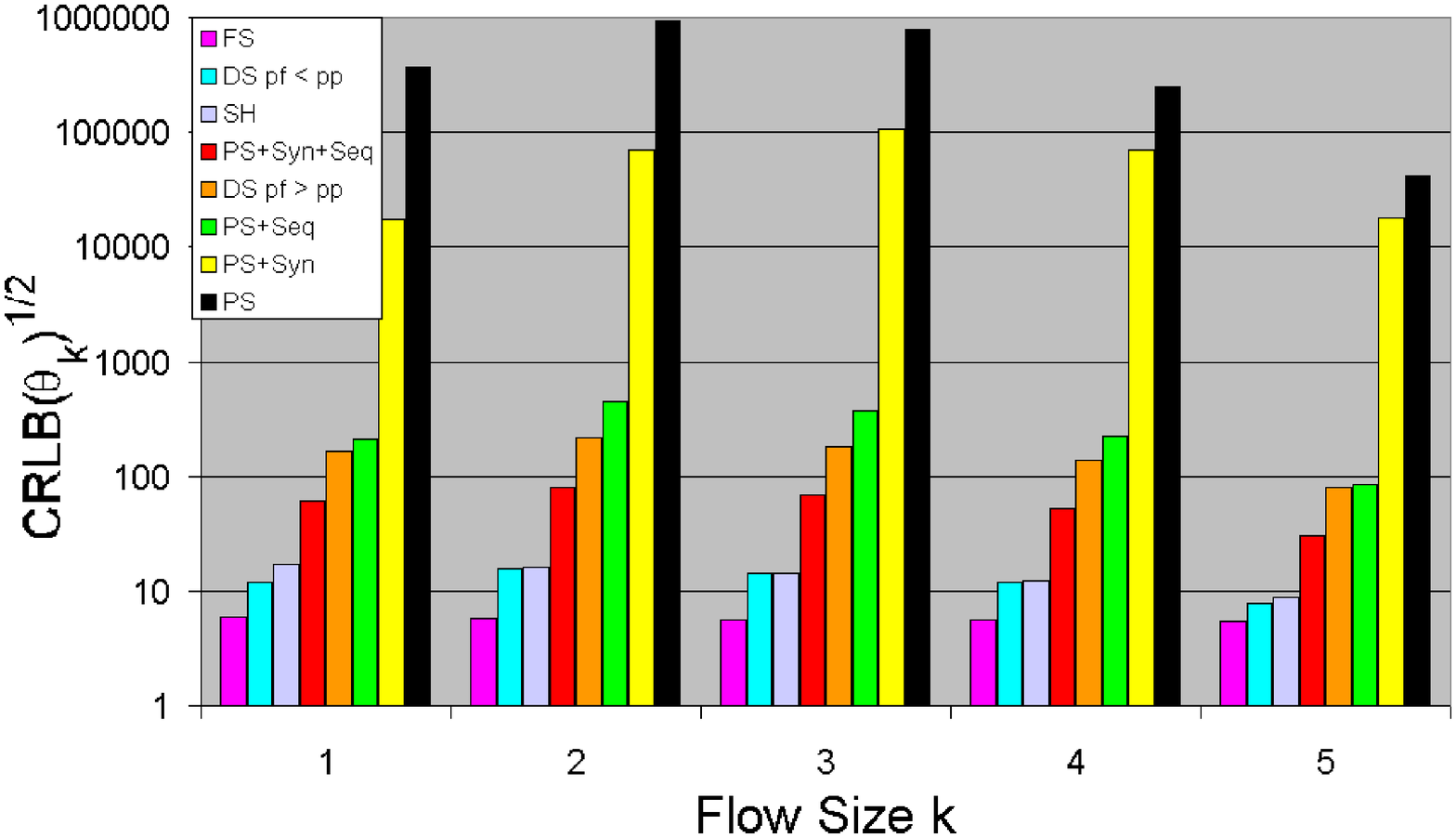}}
\def\sampvaryPPR           {\includegraphics[width=\oneup]{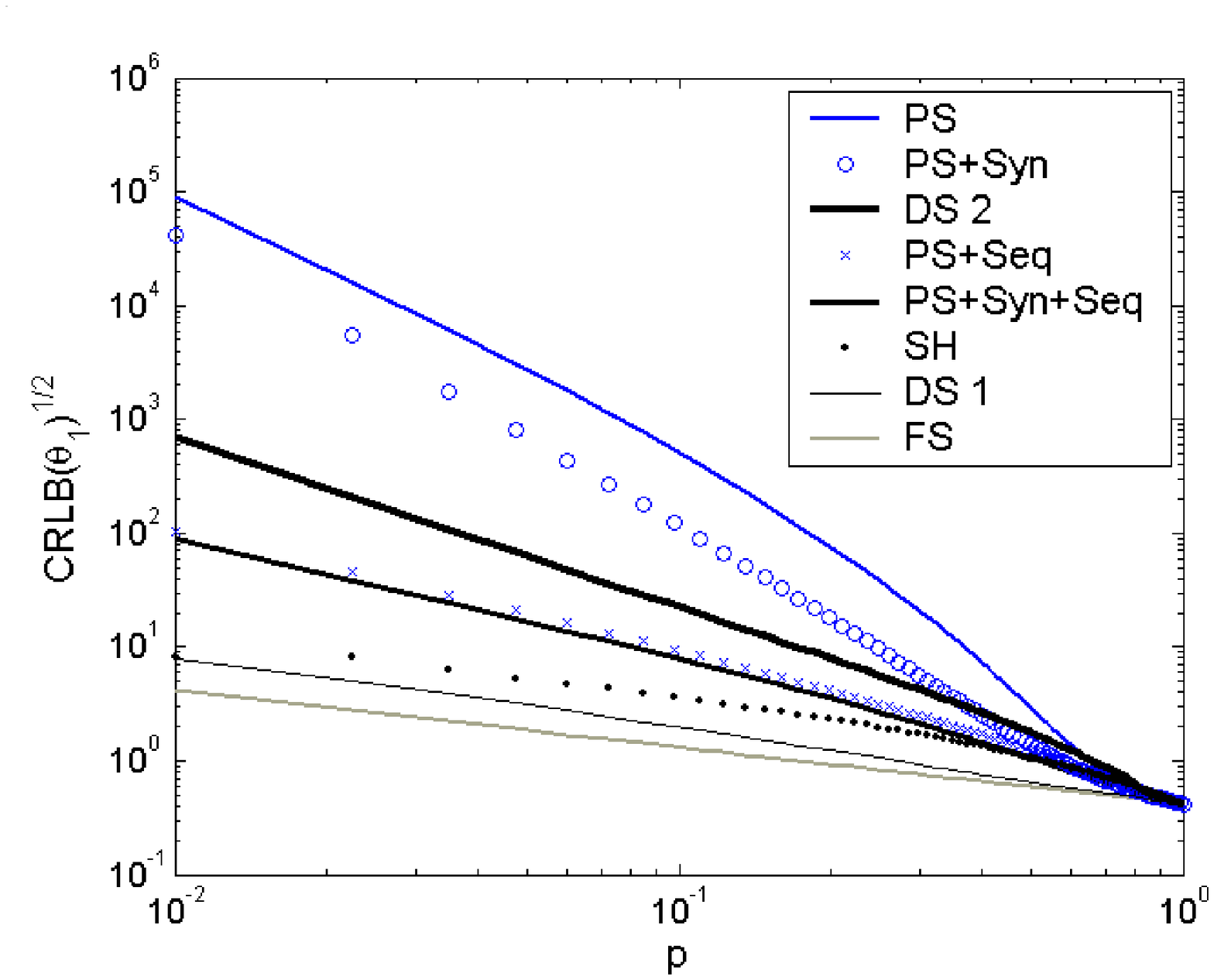}}
\def\longComparePPR        {\includegraphics[width=\threeup]{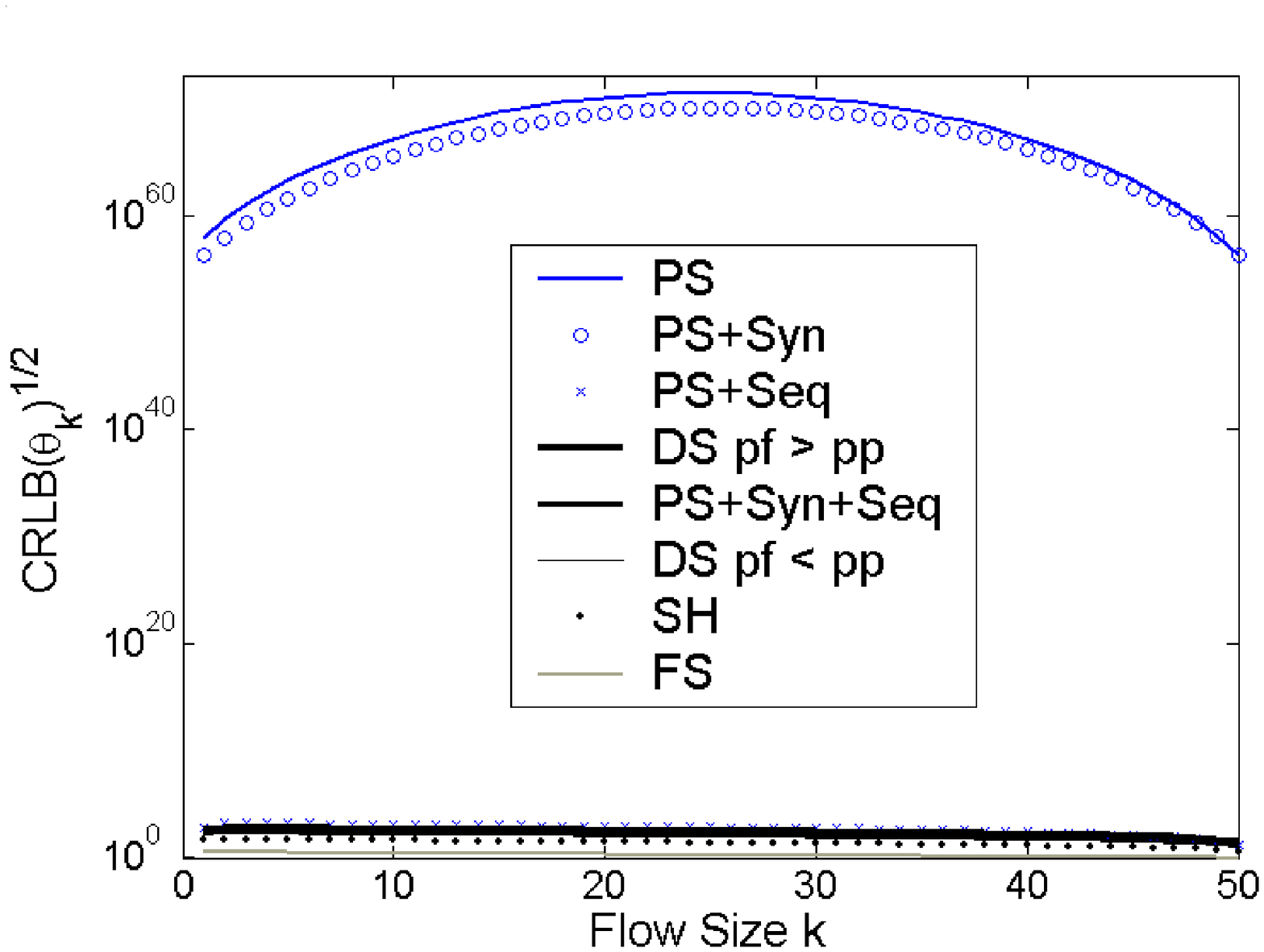}}
\def\longCompareESR        {\includegraphics[width=\threeup]{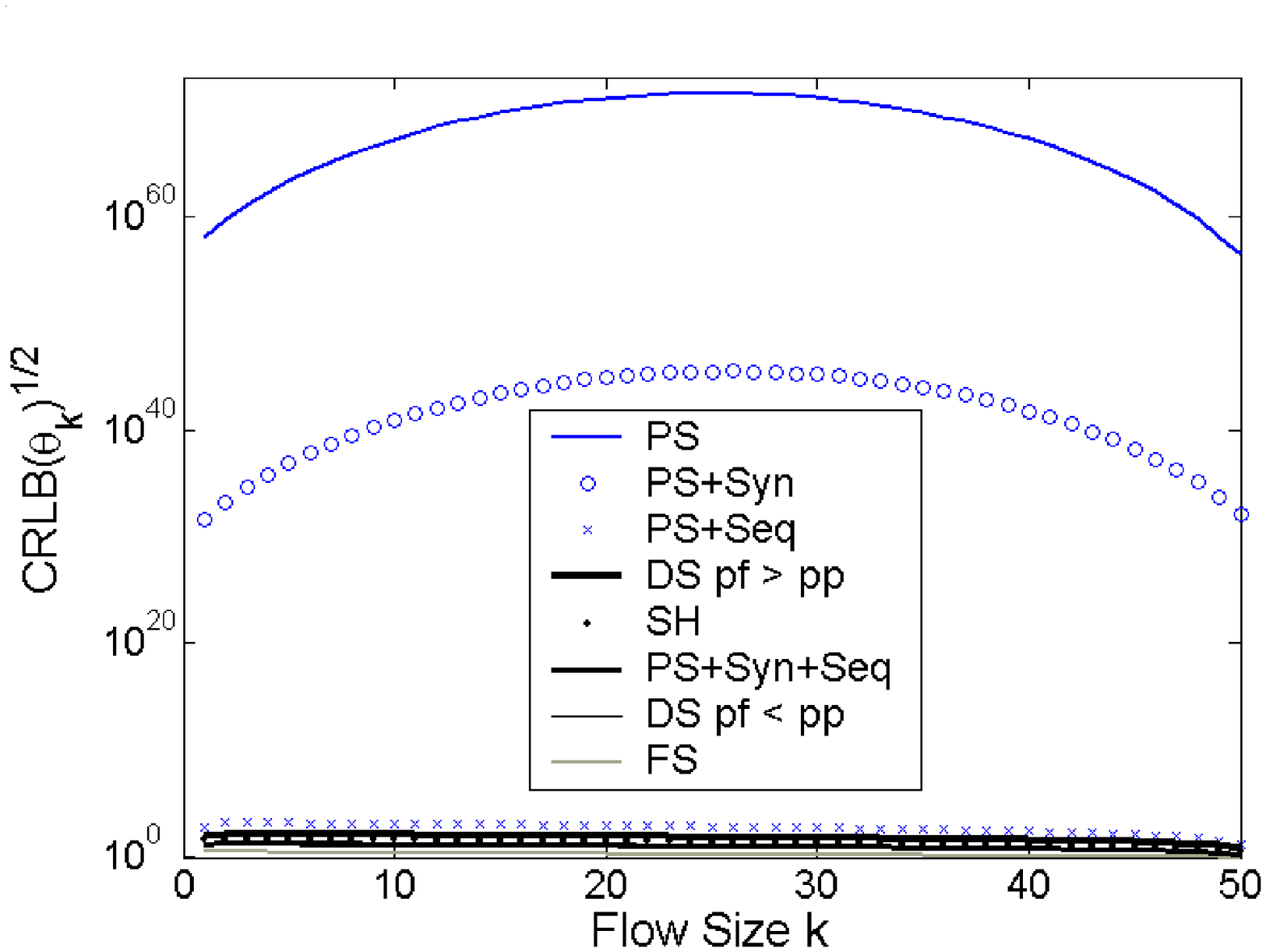}}
\def\longCompareESRzoom    {\includegraphics[width=\threeup]{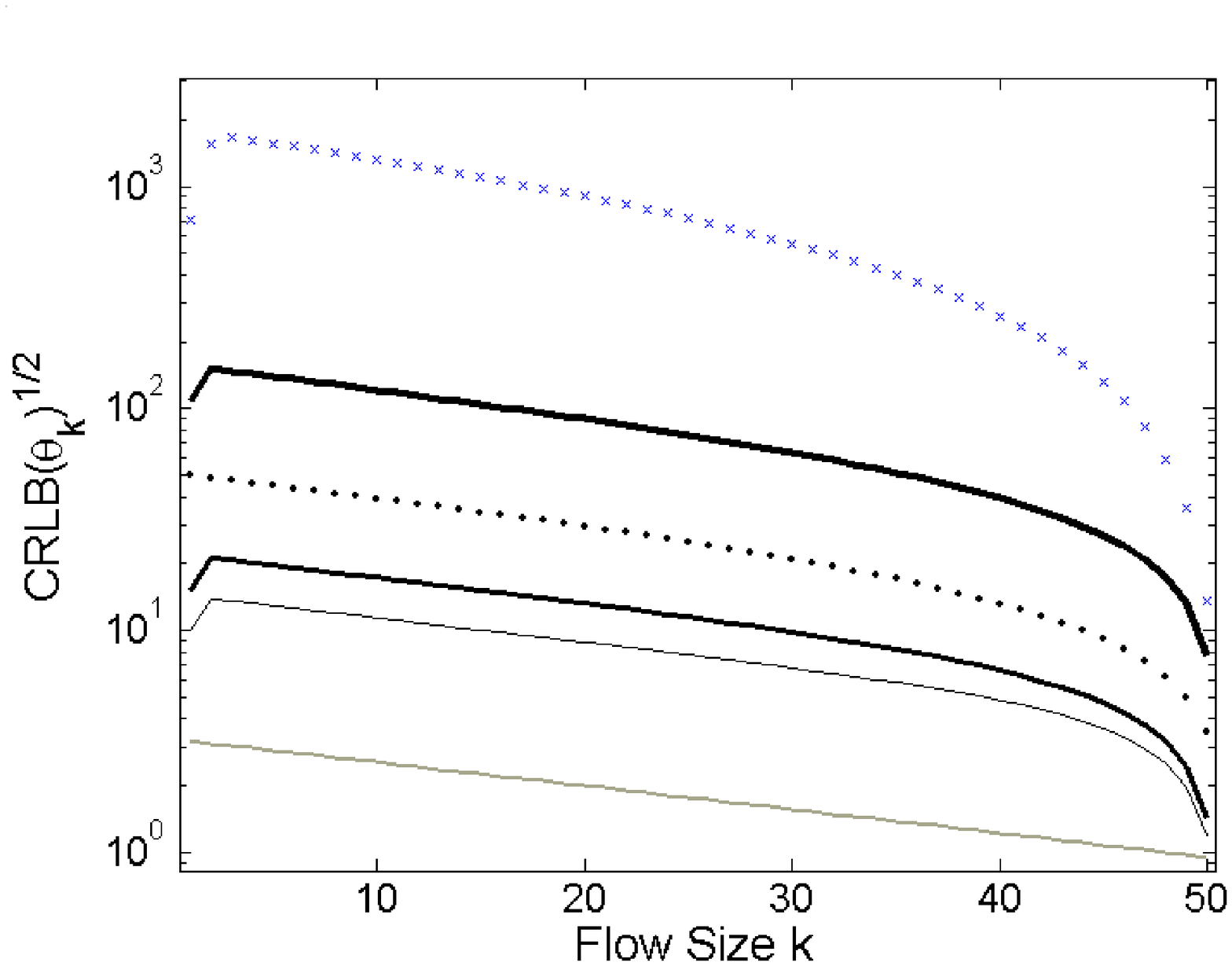}}
\def\PSPSSComp	           {\includegraphics[width=\oneup]{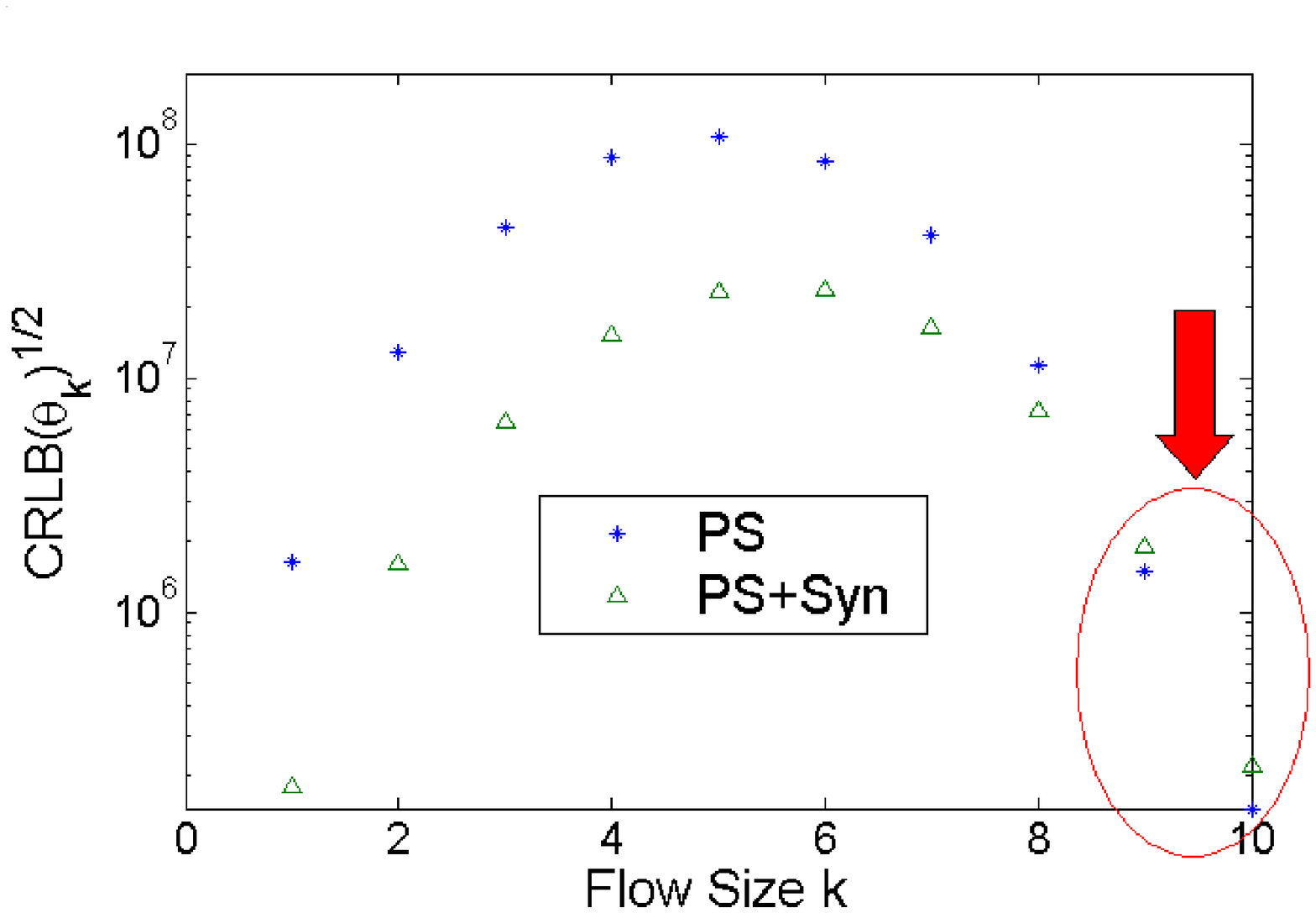}}	
\def\SHFSComp	           {\includegraphics[width=\oneup]{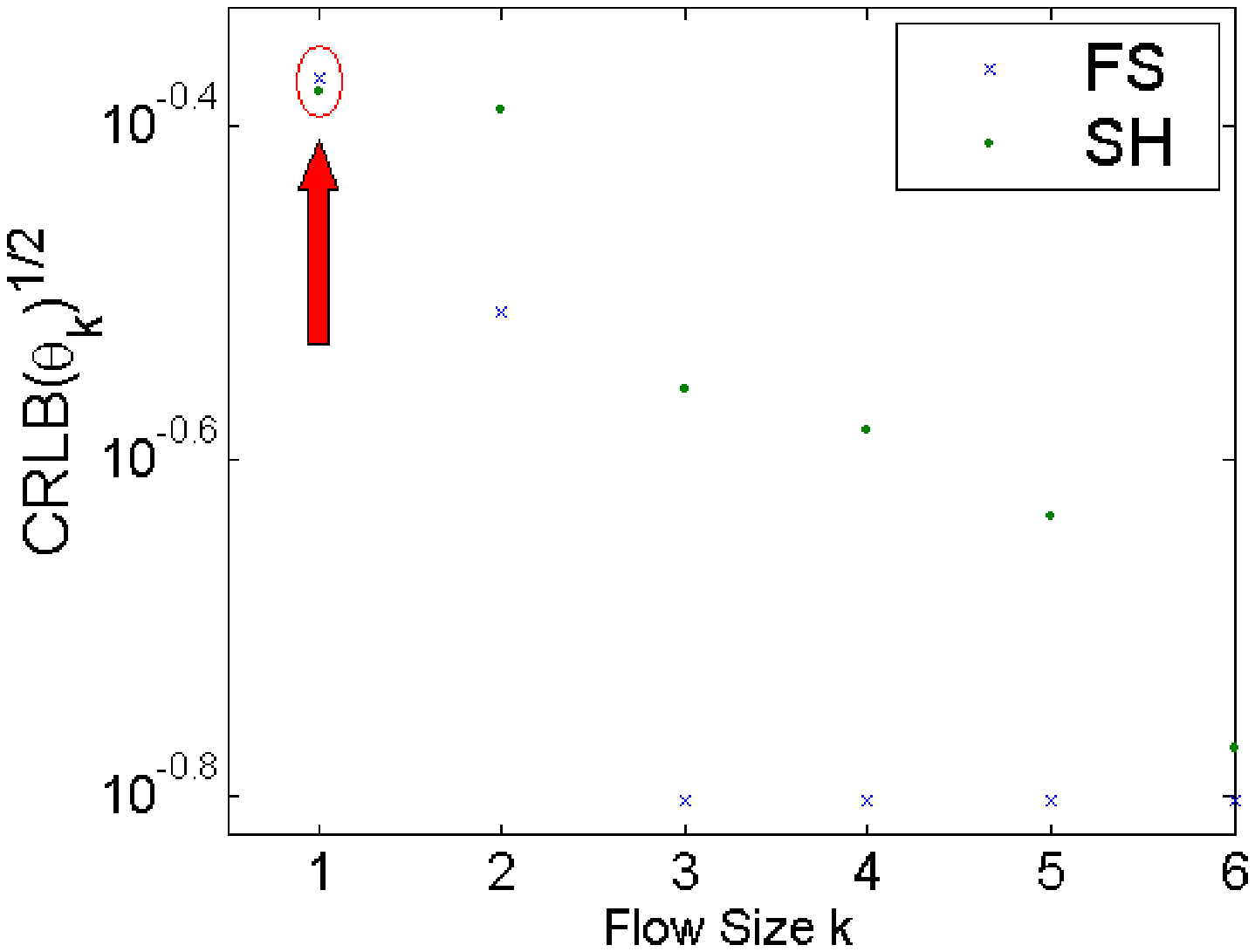}}
\def\datacompare           {\includegraphics[width=85mm]{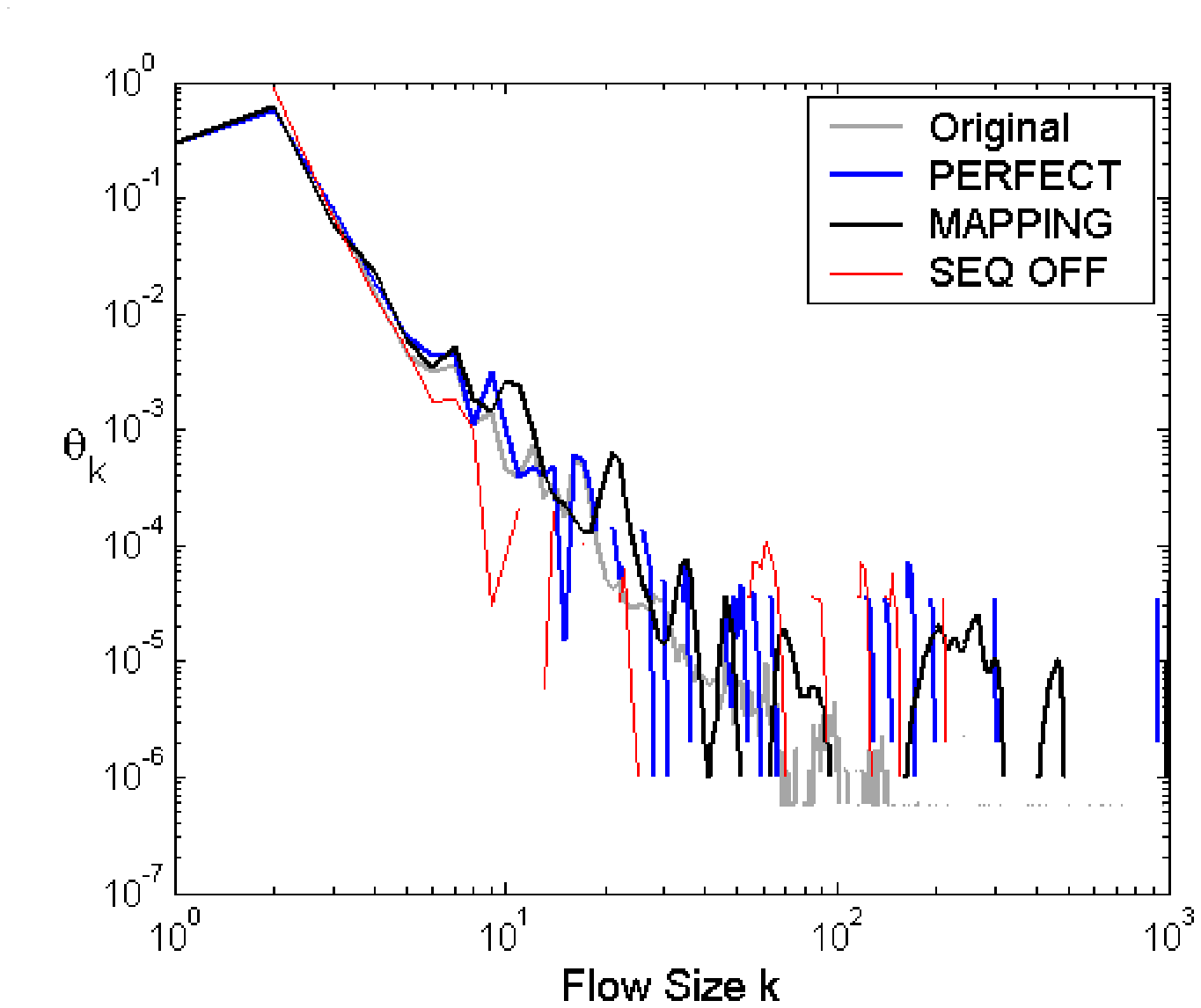}}
\def\ESRcompareone         {\includegraphics[width=\oneup]{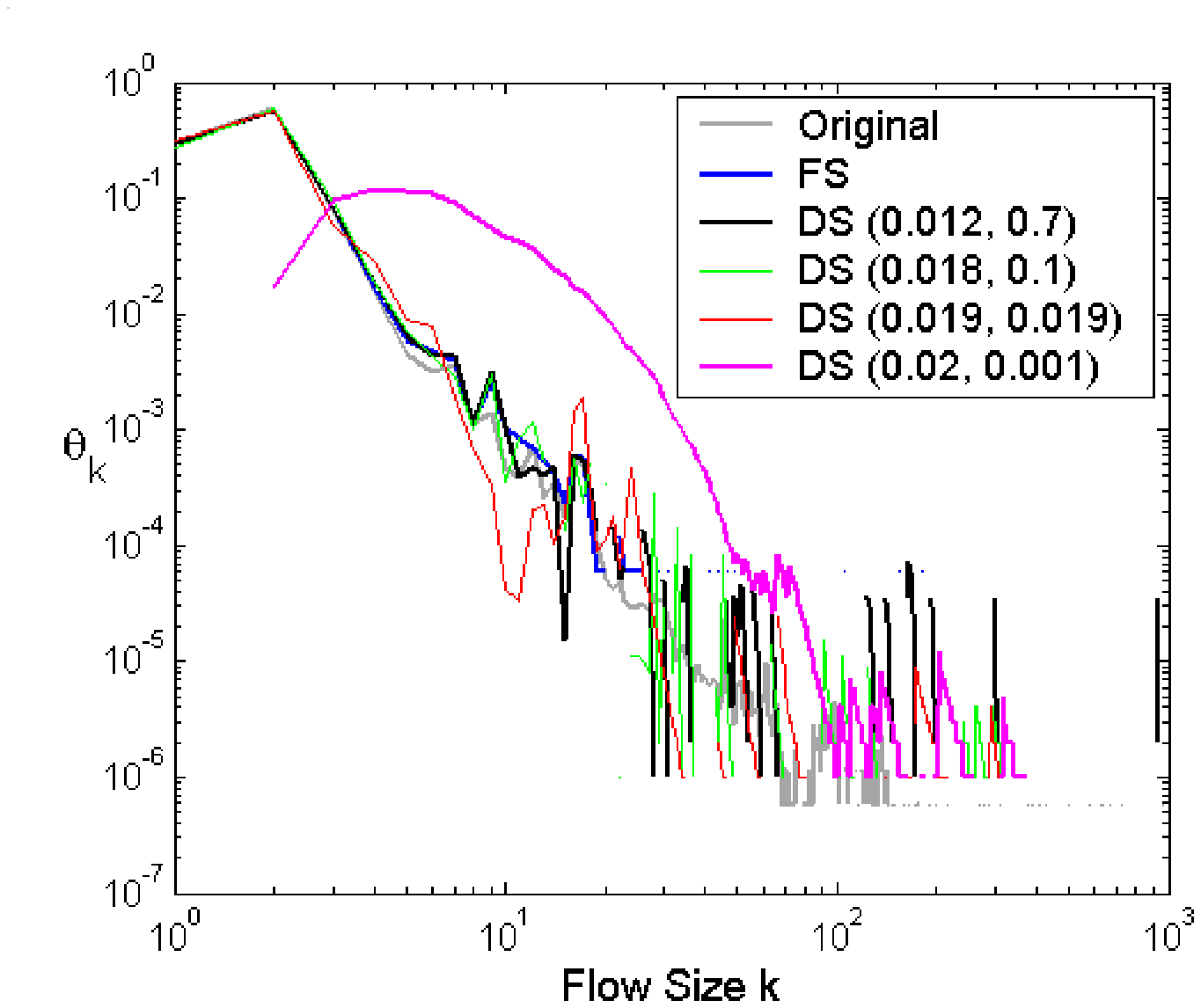}}
\def\ESRcomparetwo         {\includegraphics[width=\oneup]{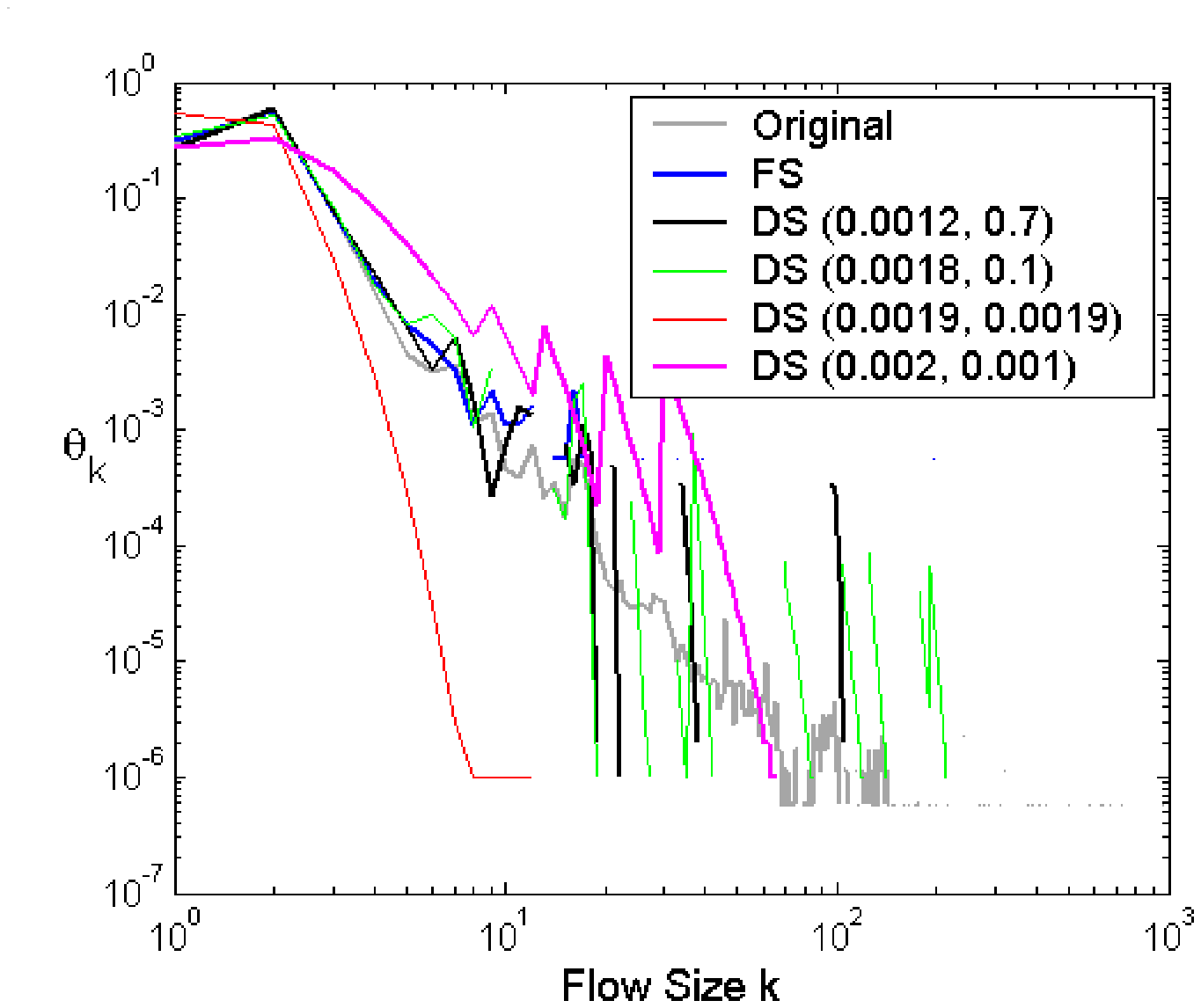}}
\def\ESRseqcompone         {\includegraphics[width=\oneup]{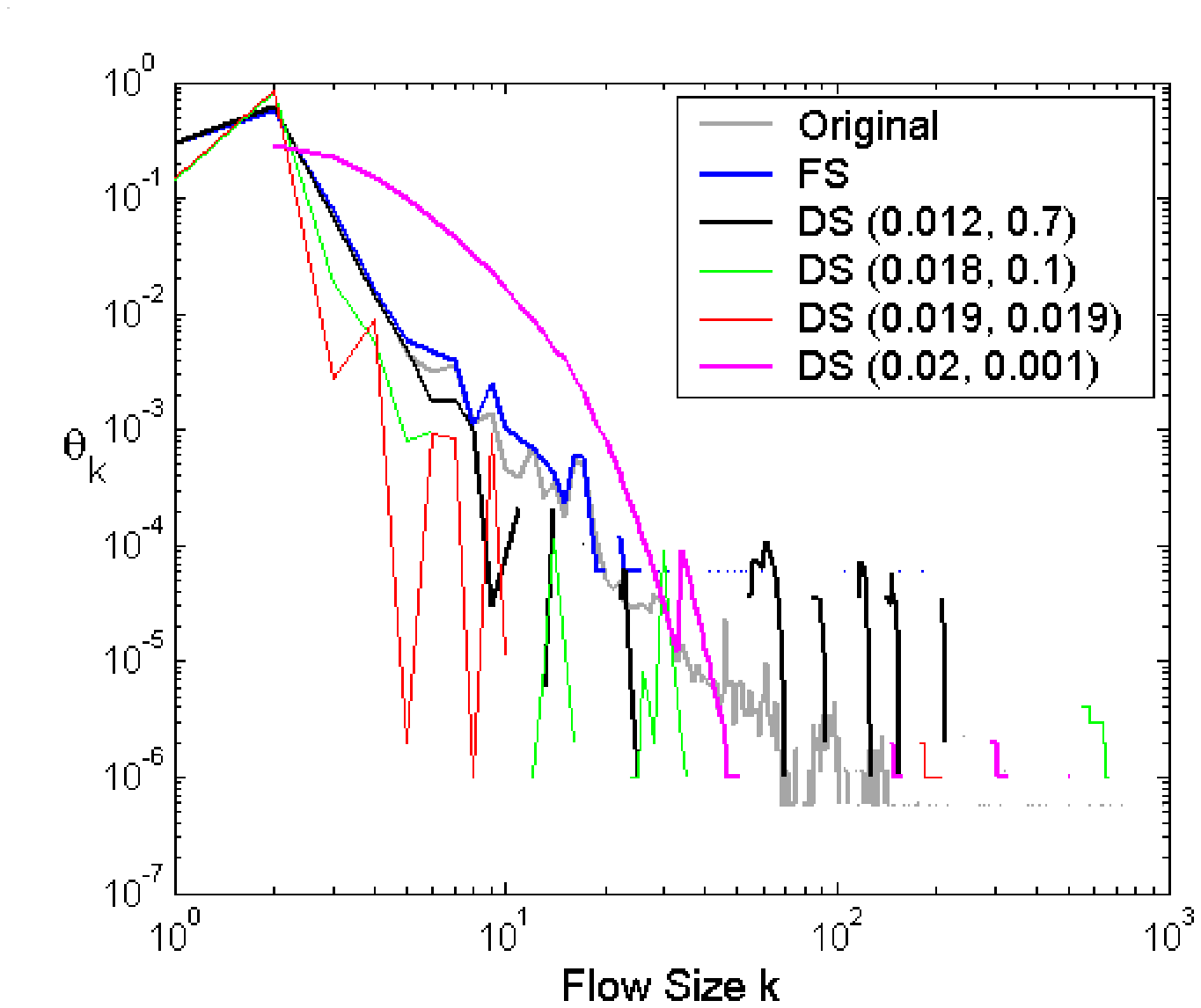}}
\def\ESRseqcomptwo	       {\includegraphics[width=\oneup]{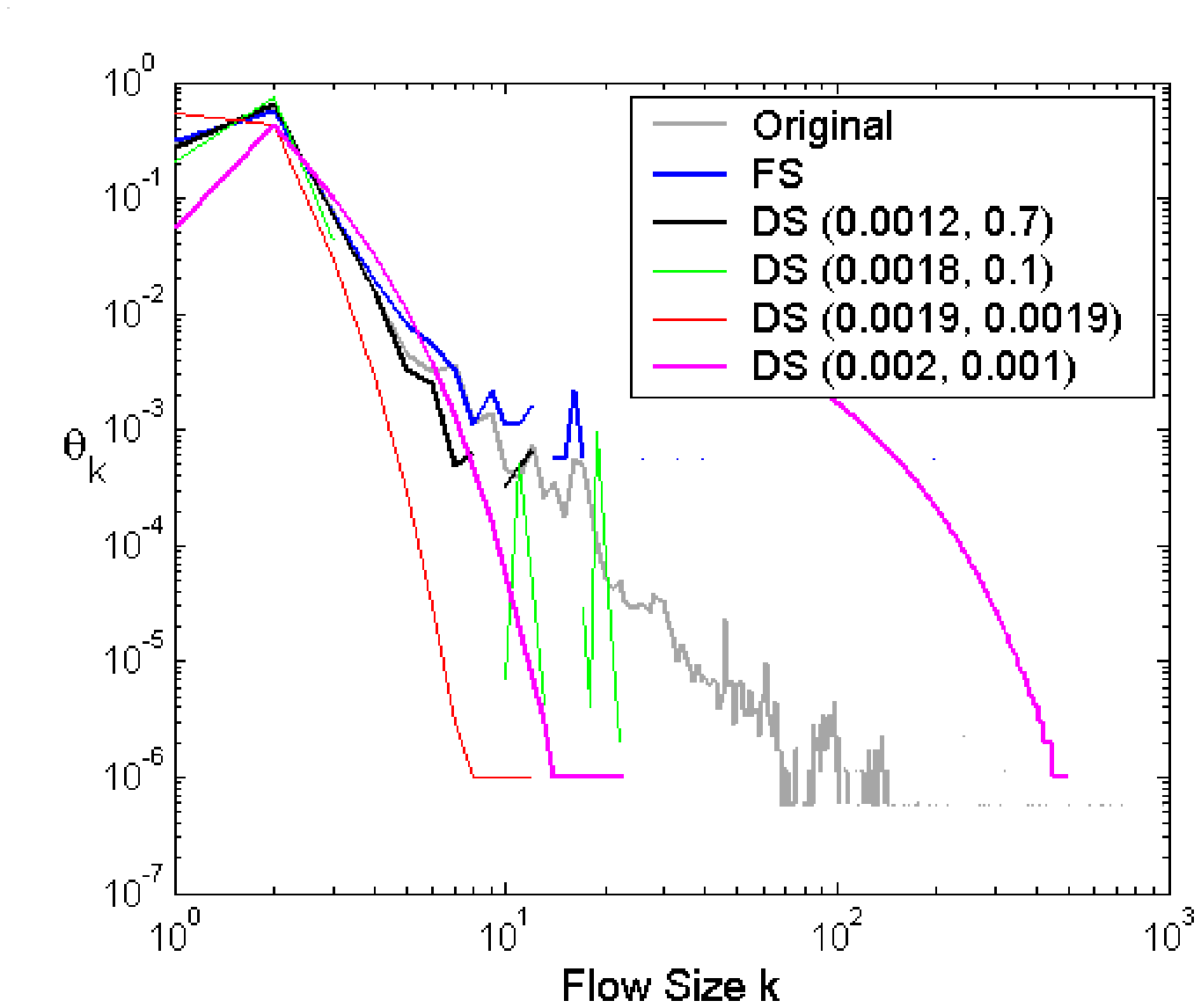}}
\def\ESRshcompare	       {\includegraphics[width=\oneup]{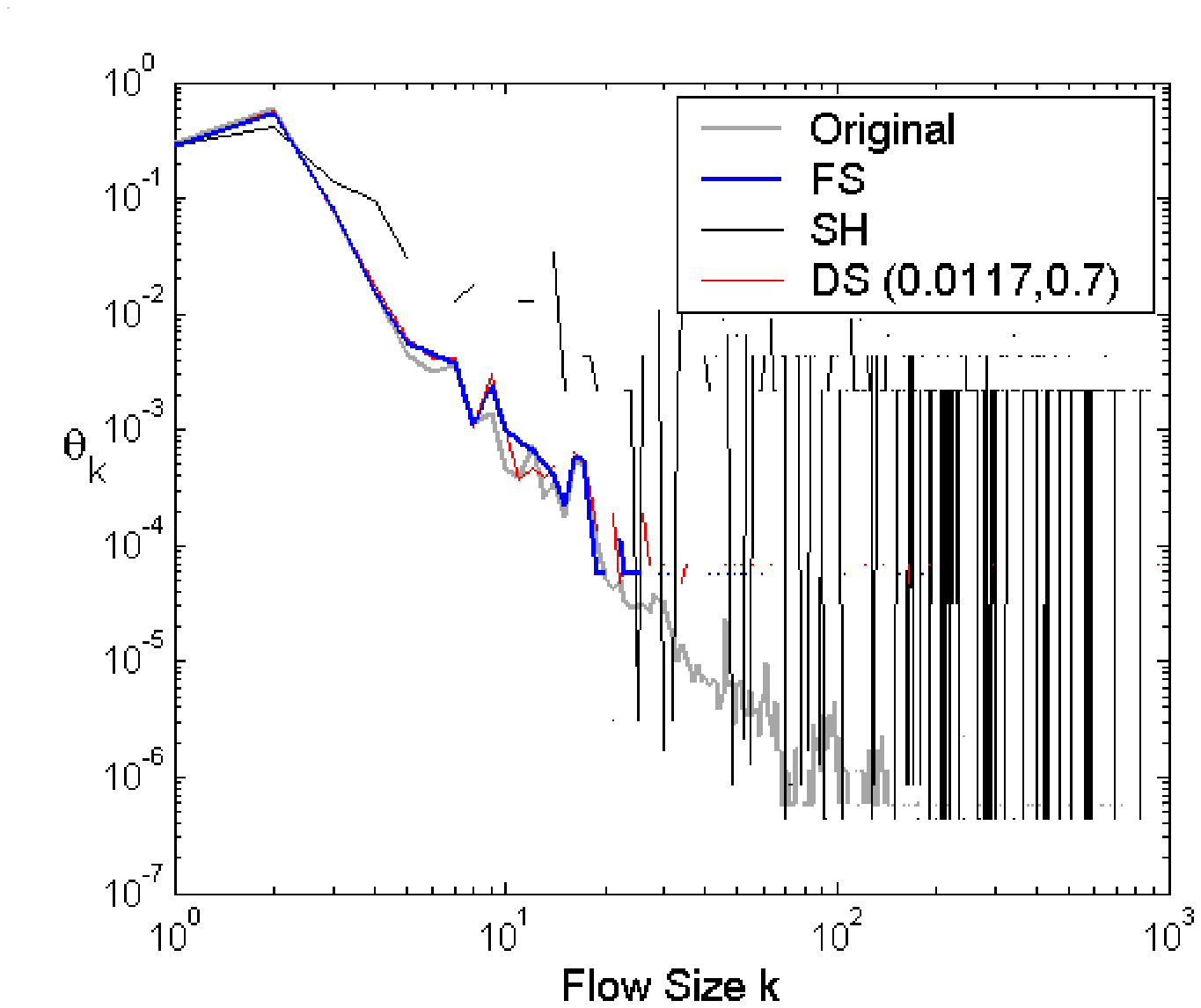}}
\def\ESRabiper 			{\includegraphics[width=\oneup]{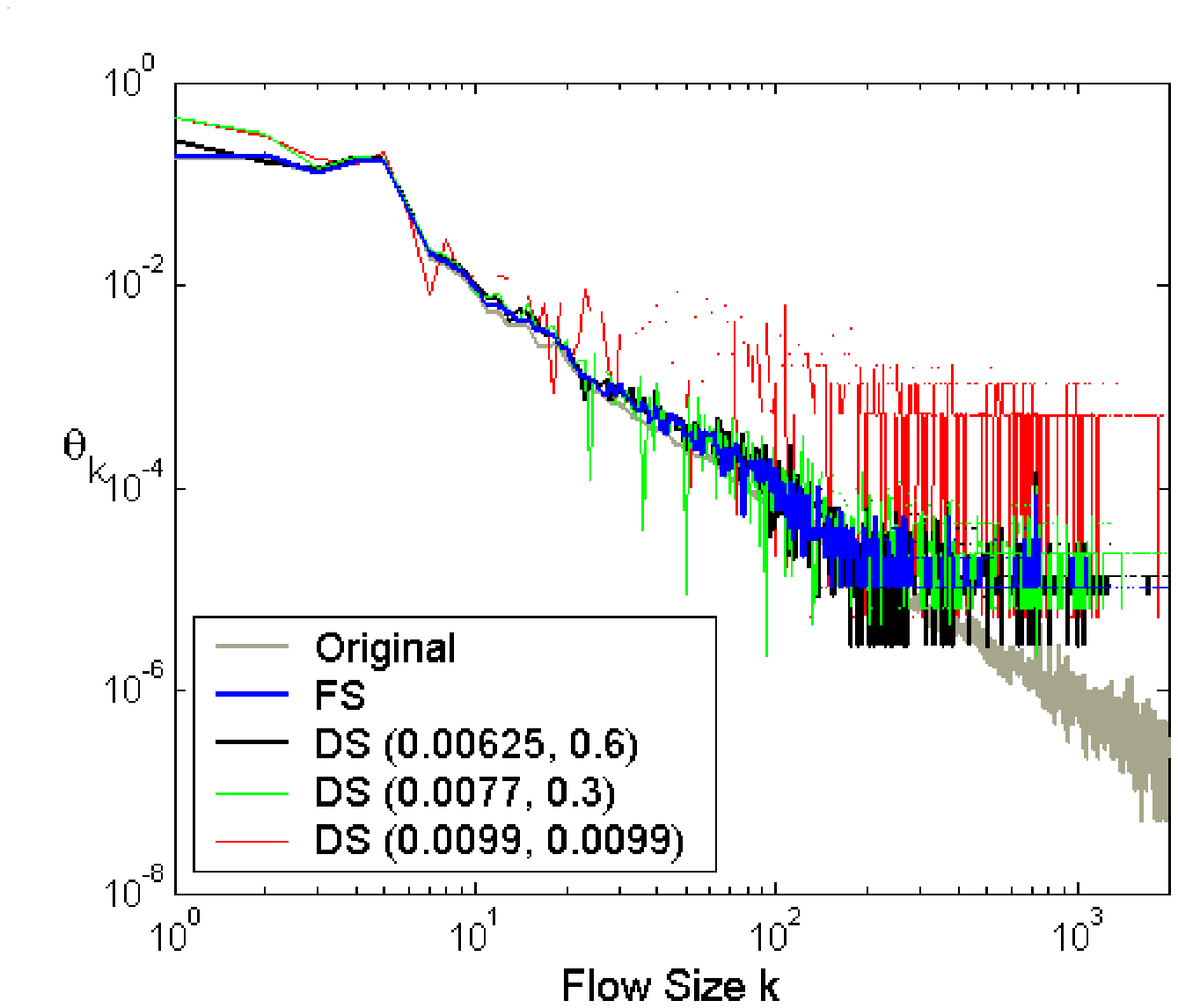}}
\def\ESRabiseq			{\includegraphics[width=\oneup]{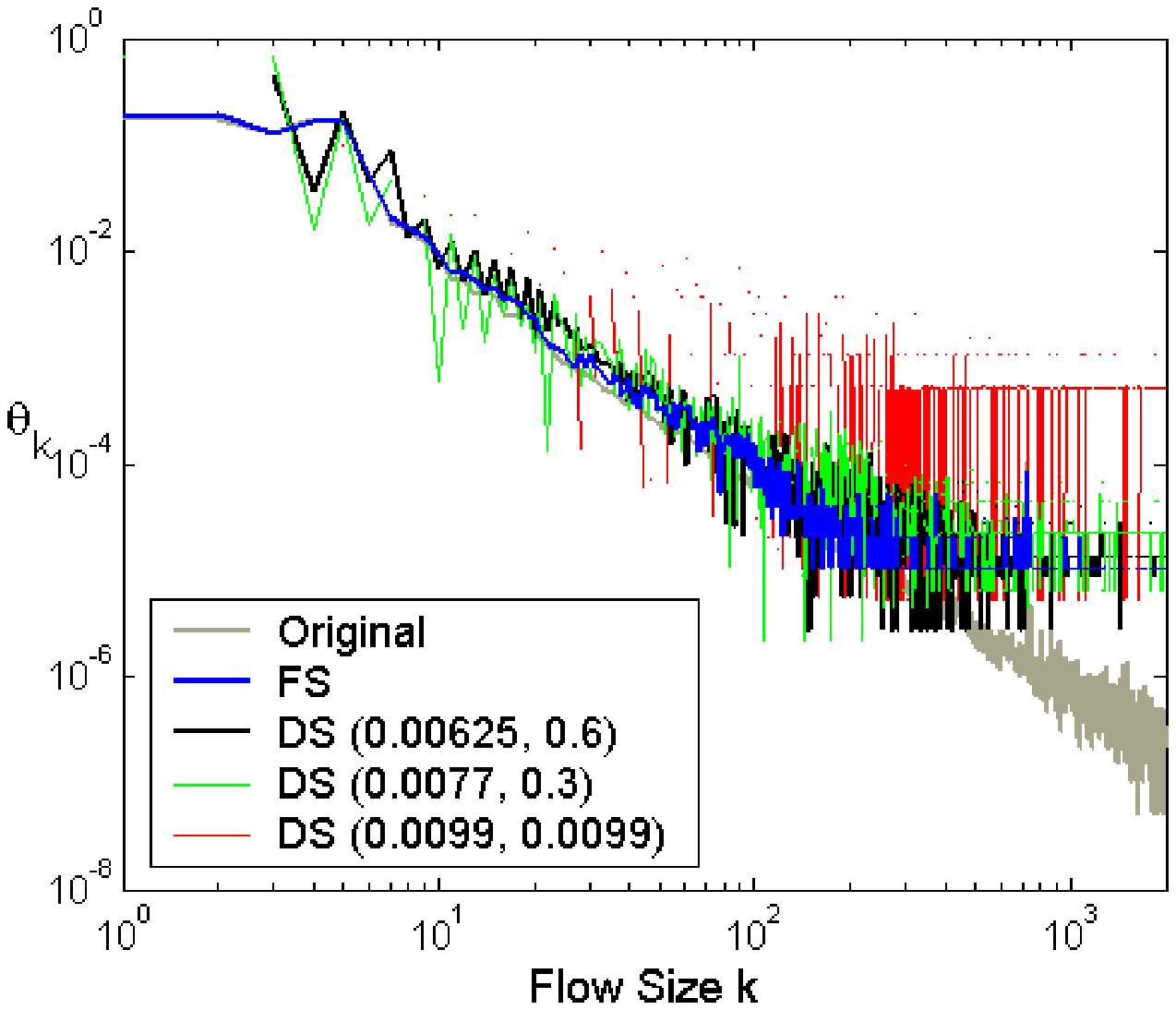}}
\def\ESRabish			{\includegraphics[width=\oneup]{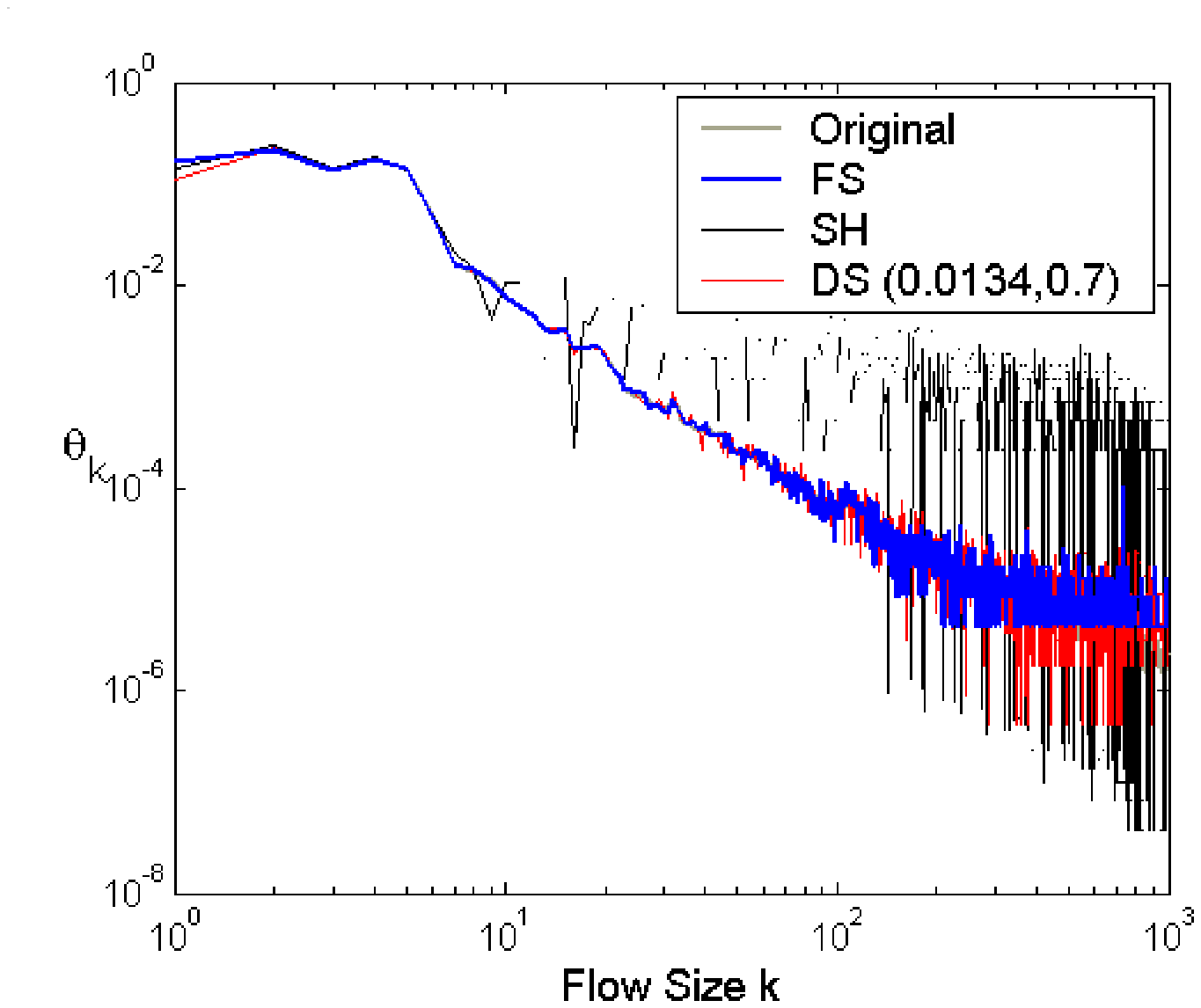}}
\def\ESRabicrlb			{\includegraphics[width=\oneup]{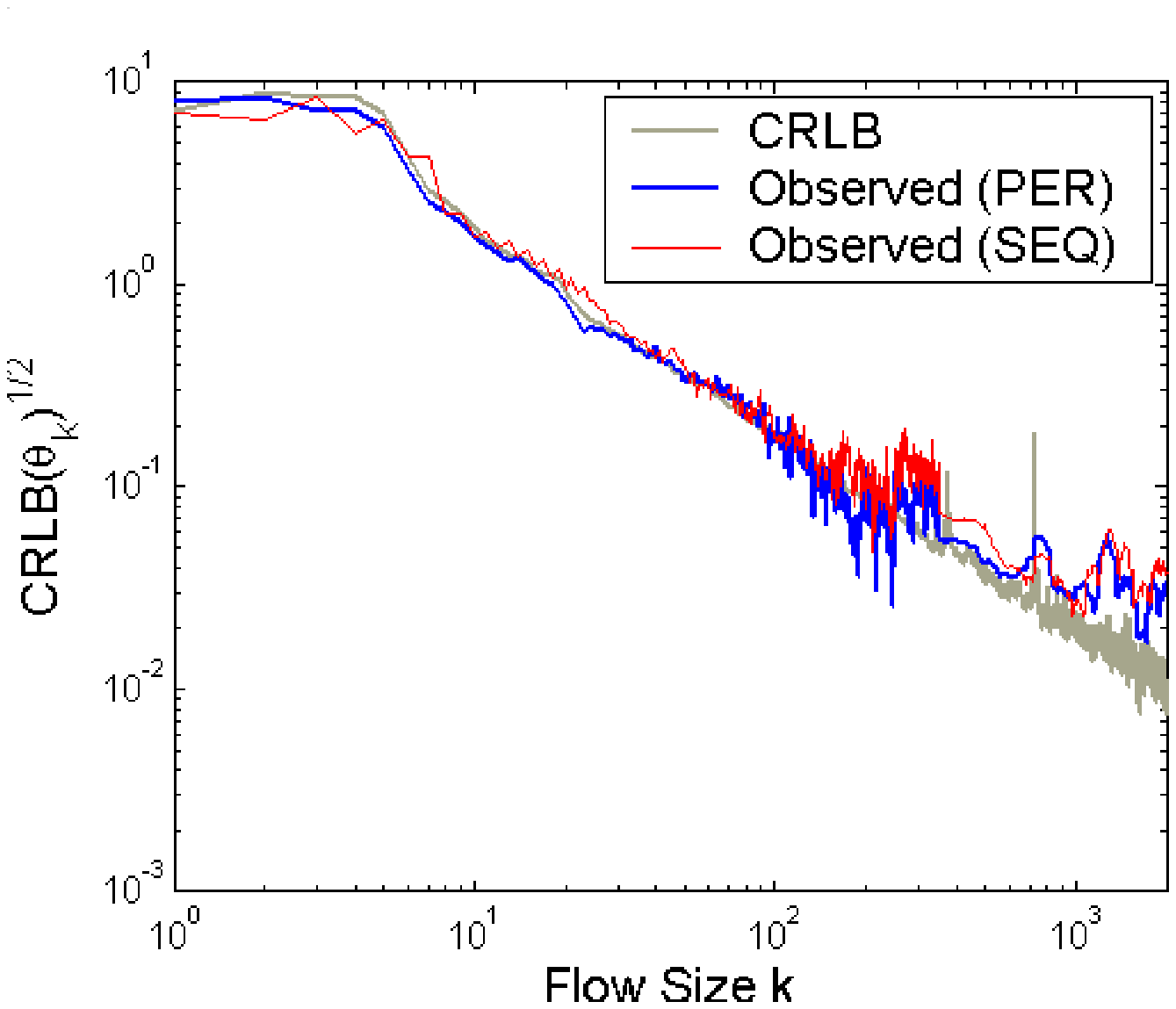}}
\begin{document}
\title{Fisher Information in Flow Size Distribution Estimation}

\author{Paul~Tune,~\IEEEmembership{Member,~IEEE,} and Darryl~Veitch,
\IEEEmembership{Fellow,~IEEE}
\thanks{The authors are with the 
Department of E\&E Engineering, The University of Melbourne, Australia (Email:
\{lsptune@ee.,dveitch@\}unimelb.edu.au).}}
\maketitle

\begin{abstract}
The flow size distribution is a useful metric for traffic modeling and management.
Its estimation based on sampled data, however, is problematic. Previous work has shown 
that flow sampling (FS) offers enormous statistical benefits over packet sampling but high
resource requirements precludes its use in routers. We present Dual Sampling (DS), a
two-parameter family, which, to a large extent, provide FS-like statistical performance by
approaching FS continuously, with just packet-sampling-like computational cost. Our work
utilizes a Fisher information based approach recently used to evaluate a number of
sampling schemes, excluding FS, for TCP flows. We revise and extend the approach 
to make rigorous and fair comparisons between FS, DS and others. We show how DS 
significantly outperforms other packet based methods, including Sample and Hold, the 
closest packet sampling-based competitor to FS. We describe a packet sampling-based
implementation of DS and analyze its key computational costs to show that router 
implementation is feasible. Our approach offers insights into numerous issues, including
the notion of `flow quality' for understanding the relative performance of methods, and 
how and when employing sequence numbers is beneficial. Our work is theoretical with some
simulation support and case studies on Internet data.
\end{abstract}

\begin{IEEEkeywords}
Fisher information, flow size distribution, Internet measurement, router measurement, sampling.
\end{IEEEkeywords}

\section{Introduction}
\label{sec:intro}

The distribution of {\em flow size}, that is the number of packets in
a flow, is a useful metric for traffic modelling and management, and is important for
security because of the role small flows play in attacks. As is now well known however,
its estimation based on sampled data is problematic.

Currently, sampling decisions in routers are made on a per-packet basis, with only
sampled packets being subsequently assembled into (sampled) flows. Duffield
et~al.~\cite{duffsamplingToN2005} were the first to point out that simple \textit{packet
sampling} strategies such as `$1$ in $N$' periodic or i.i.d. (independent, identically
distributed) packet sampling have severe limitations, in particular a strong flow length
bias which allows the tail of the flow size distribution to be recovered, but
dramatically obscures the details of small flows. They explored the use of TCP SYN
packets to improve the resolution at the small flow end of the spectrum. Hohn
et~al.~\cite{thinning,thinningToN} explored these difficulties further and pointed out
that \textit{flow sampling}, where the sampling decision is made directly on flows,
resulting in all packets belonging to any sampled flows being collected, has enormous
statistical advantages. However, flow sampling has not been pursued further nor found
its way into routers, partly because it implies that lookups be performed on every packet,
which is very resource intensive.

More recently, Ribeiro et~al.~\cite{Ribeiro06fischer} explored the use of TCP sequence
numbers to improve estimation for TCP flows. The idea is that the presence of packets
which are not physically sampled can be inferred by noting the increasing byte count
given by the sequence number fields of sampled packets.  By using the Fisher information
as a metric of the effectiveness of sampling in retaining information about the original
flow sizes, they showed that this helps greatly to `fill in the holes' left by packet
sampling. However, they did not address whether these techniques out-perform flow
sampling (FS).

In this paper we revisit FS in the context of TCP flows. Our first contribution is to
explain how the approach of \cite{Ribeiro06fischer} can be reformulated and extended to 
include FS. This provides a framework for our second contribution, proofs that FS
outperforms existing methods by a large margin, though they themselves greatly improve
upon simple packet sampling. Our results are rigorous, based on explicit calculation and 
comparison of the Fisher Information matrices of competing schemes. With the statistical
reputation of FS thus reinforced, the challenge is to find methods which can somehow
approach or approximate flow sampling in order to benefit from its information theoretic
efficiency, but with lower resources requirements. We show how this can be done.

The computational problem for FS can be described as follows. To capture the variety
present in traffic flows and to provide the raw material for a variety of current (and
future) metrics, many flows must be sampled. This implies large flow tables which in turn
implies the use of slower but cheaper DRAM rather than the faster but expensive 
SRAM~\cite{VargheseNA}. However, DRAM is not fast enough to perform lookups for every
packet, as required by a straightforward implementation of FS, for today's high
capacity links. The question then becomes, how can flow sampling be implemented using
per-packet decisions, in other words using some form of \textit{packet} sampling?

The main contribution of this paper is the introduction of \textit{Dual Sampling} (DS), a
hybrid approach combining the advantages of both packet and flow sampling.
It is a two parameter sampling family which includes FS as a special case and allows FS
to be approached continuously, enabling a tradeoff of sampling efficiency against
computational cost. Computationally, it can be implemented via a modified form of
two-speed or `dual' packet sampling which circumvents the problem of slow DRAM. There
is a cost in terms of wasted samples, but we show that this can be borne in high speed
routers. Following \cite{Ribeiro06fischer}, DS benefits from the use of TCP sequence
numbers although it can also be used without them, and we provide insight into how and
when they have an impact. We show rigorously that DS outperforms the methods proposed in
\cite{Ribeiro06fischer}. We also compare and contrast DS with the well known `Sample and
Hold' scheme \cite{Estan03}. We show that Sample and Hold performs quite well, though not
as well as DS.

Finally, we introduce \textit{SYN+SEQ+FIN}, another sampling method which enables flow
sampling to be perfectly achieved (aside from errors in the mapping of byte to packet
counts) at very low computational cost, well below that even of packet sampling.
Its disadvantage is that it exploits the TCP FIN field, when not all TCP flows terminate
correctly with a FIN packet.

With its explicit use of TCP protocol information in most cases, our work applies to
TCP flows only.  However, the ideas and results could apply to other kinds of flows
provided that suitable substitutes could be found for connection startup (SYN),
`progress' (sequence numbers) and termination (FIN). TCP flows still constitute the
overwhelming majority of traffic in the Internet.

The rest of the paper is organized as follows. Section~\ref{sec:framework} describes our
sampling framework and derives the Fisher information matrix and its inverse explicitly.
Section~\ref{sec:methods} defines the sampling methods and derives their main properties. 
Section~\ref{sec:compare} compares the methods theoretically and derives further
properties explaining their performance, and Section~\ref{sec:seq_div} explores in more
detail how sequence numbers reduce estimation variance. Section~\ref{sec:compcompare}
introduces a simple model for computational cost and uses it to define and solve an 
optimization problem for sampling performance under constraints.  Section~\ref{sec:data}
applies the methods to real Internet data and shows that DS performs favorably with flow 
sampling in practice, and has better performance than Sample and Hold. A closed-form
unbiased estimator was proposed for DS and Sample and Hold which achieves the Cram\'er-Rao 
lower bound asymptotically, eliminating the need for iterative optimization algorithms. We 
conclude and discuss future work in Section~\ref{sec:conc}.

This paper is an extended and enhanced version of the conference paper
\cite{Fisher_IMC08}. The main additions relate to the inclusion of the Sample and Hold
method throughout the paper, several new theorems and counter-examples on method
comparison, and the inclusion of a new major data set.

\section{The Sampling Framework}
\label{sec:framework}

In this section we establish a framework to define and analyze sampling techniques
applied to an idealized view of TCP flows on a link. Nominally, we imagine that such
flows are defined by the usual 5-tuple of origin and destination IP addresses, port
numbers, and TCP protocol field together with a timeout. For the analysis we make a
number of simplifying assumptions:
\begin{enumerate}
\item flows begins with a SYN packet and have no others,
\item flows are not split (this can occur through timeouts or flow table clearing),
\item all necessary protocol information (5-tuple, SYN/FIN bits and sequence numbers)
can be observed, and
\item per-flow sequence numbers count packets, not bytes.
\end{enumerate}
Assumptions (iii) and (iv) will be discussed/relaxed when we deal with real data in
Section~\ref{sec:data}. Note that we \textbf{do} respect TCP's per-flow random
initialization of sequence numbers. Hence their absolute value holds no information on
the number of packets in a flow, only differences of sequence numbers matter.
This is crucial for the analysis.

\subsection{The Flow Model}
\label{ssec:flows}

We consider a measurement interval containing $\nf$ flows. Let $\si$ denote the
\textit{size} of flow $i$ (the number of packets it contains). It satisfies $1\le \si
\le W$, where $1 \le W<\infty$ is the maximum flow size. The total number of packets is
$\np=\sum^{\nf}_{i=1} \si$.

Let $M_j$ be the number of flows of size $j$, $1\le j\le W$, that is $M_j = \sum_{i:
\si=j} \Count$, and $\nf=\sum^W_{j=1} M_j$. The flow size `distribution' is the set
$\vth = \{\theta_1, \theta_2, \ldots, \theta_W\}$  of relative frequencies, that is
\begin{equation}
  \theta_j = \frac{M_j}{\nf}
  \label{theta_counts}
\end{equation}
where $0 \le \theta_j \le 1$ and $\sum^W_{j=1} \theta_j = 1$. Note that $\{M_j\}$ and
$\{\vth,\nf\}$ are equivalent and complete descriptions of the flows in the
measurement interval. They are sets of deterministic \textit{parameters}, not random
variables, effectively a deterministic flow size model.

Most of the literature on traffic sampling follows the above viewpoint, where the data is
deterministic, the only randomness being introduced through the sampling itself.
An exception is the work of Hohn and Veitch (\cite{thinning,thinningToN}) where randomness arises 
both through the traffic model and the sampling process, which makes the analysis considerably 
more difficult, but less generic.

\subsection{General Random Sampling}

The effect on flow size of random sampling can be described as follows: from an original
flow of size $k$, only $j$ packets, $0\le j\le k$, are \textit{sampled} (retained), with
probability
\[
     \bjk = \Pr(\textrm{sampled flow has $j$ pkts}\,|\,\textrm{original flow has $k$ pkts}).
\]
The operation of the sampling scheme is entirely defined by the $\bjk$, which can
be assembled into a $(W+1) \times W$ \textit{sampling matrix}  $\mathbf{B}$, whose
$(j+1,k)$-th element is $b_{jk}$.  Note that $\bjk = 0$ for $j > k$. By definition
$\mathbf{B}$ is a {\em (column) stochastic matrix}, that is each element obeys
$\bjk\ge0$, and each column sums to unity.

The experimental outcome can be described by
a set of random variables $\{M'_j\ |\ 0 \le j \le W\}$ where
$M'_j$ counts the number of sampled flows of size $j$. Thus,
\begin{equation*}
     M'_j = \sum_{i:\si'=j} \Count = \sum^{\nf}_{i=1} \ind(\si'=j).
\end{equation*}
Equivalently, let $\vth' = \{\theta'_j\ |\ 0 \le j \le W\}$ 
(note that the index $j$ includes 0) denote the empirical distribution of 
\textit{sampled flow} sizes, where
\begin{equation}
    \theta'_j = \frac{M'_j}{\nf},
    \label{theta_prime_counts}
\end{equation}
where $0 \le \theta'_j \le 1$ and $\sum^W_{j=0} \theta'_j = 1$. Note that $\theta'_0 \ge
0$ as some flows may not survive the thinning process.
Out of the original $\nf$ flows, only $\nf' =\nf(1-\theta'_0) = \nf - M'_0$ flows survive sampling.

Define a normalized set of fractions of the sampled flow sizes $\vgm =
\{\gamma_j\ |\ 1 \le j \le W\}$ by
\begin{equation}
	\gamma_j = \frac{\theta'_j}{\sum^W_{k=1} \theta'_k} = \frac{\theta'_j}{1 -
	\theta'_0},
	\label{gamma_norm}
\end{equation}
where $0 \le \gamma_j \le 1$ and $\sum^W_{j=1} \gamma_j = 1$. The set
$\vgm$ constitutes the \naive or directly measurable sampled flow
size distribution and is equivalent to the distribution of flows conditional on at
least a single packet from that flow being sampled. For $j \ge 1$, $\theta'_j$ is
related to $\gamma_j$ by $(1-\theta'_0)\gamma_j = (\nf'/\nf)\gamma_j = \theta'_j$.

\subsection{The Unconditional Formulation}

The sampled flow above includes the case, $j=0$, where the flow `evaporates'.
It seems natural to conclude however that such cases cannot be observed. This logically
leads to an analysis based on the observation of the $\gamma_j$ defined above
where  $j\ge1$, which is effectively
\textit{conditional}: sample flow distributions given that at least one packet is
sampled. This is the approach adopted in
\cite{duffsamplingToN2005,Ribeiro06fischer,YangMichailidisGlobe06} and in the literature
generally.  One of the key differences in our work is that we show that it is possible
to observe the $j=0$ case, leading to an \textit{unconditional} formulation which enjoys
many advantages.

To see how this is possible we return to general context of $\nf$ flows, \textit{each}
one of which will be sampled in this general sense.  As defined above $\nf'$ is the number of flows
of size at least 1 after sampling. The number of evaporated flows is just $\nf-\nf'$,
but typically $\nf$ is not known and is regarded as a `nuisance parameter' which must be
estimated. However, it can easily be measured by directly counting the total number of SYN
packets, which is just the number of flows $\nf$.  For methods which are already assuming an
ability to access and perform specific actions based on whether a packet is a SYN
or not, the additional assumption of being able to count all SYN packets is a natural one. It is also implementable, as a single additional counter which checks every packet and
conditionally increments based on a small number of header bits is not difficult
even at the highest speeds \cite{VargheseNA}, as we discuss in more detail in Section~\ref{sec:compcompare}.
In summary, by knowing $\nf$, every flow gives rise to a sampled flow, each one of which is
observable, either directly ($j\ge1$), or indirectly ($j=0$).
In other words, we can in effect observe the $\theta'_j$ over the full range from $j=0$ to $j=W$.

The chief advantage of the unconditional formulation is the very simple form of the
likelihood function for the experimental outcome $j$ for a single flow. This makes the
manipulation of the Fisher information far more tractable, leading to new analytic
results and insights. The other big advantage is that flow sampling can now be
included. In the conditional world flow sampling is perfect -- by definition, if a flow
is sampled at all, all its packets will be and so there is nothing to do! The
unconditional framework allows the missing part of the picture to be included,
enabling meaningful comparison.

\subsection{The Sampled Flow Distribution}

Our analysis is based on the idea of selecting a `typical' flow, and that flows are
mutually independent (a reasonable assumption if $\nf$ is very large). Since flows are
in fact deterministic, this is only meaningful if we introduce a supplementary random
variable $U$, a uniform over the $\nf$ flows available, which performs the random flow
selection. This variable, which acts `invisibly' behind the scenes (and is rarely
discussed), is not part of the random sampling scheme itself, but is essential as it
allows the $\vth$ parameters to be treated as probabilities, even though they are
not.
\begin{figure}[t]
	\centering
	\includegraphics[width=8cm, height=5.3cm]{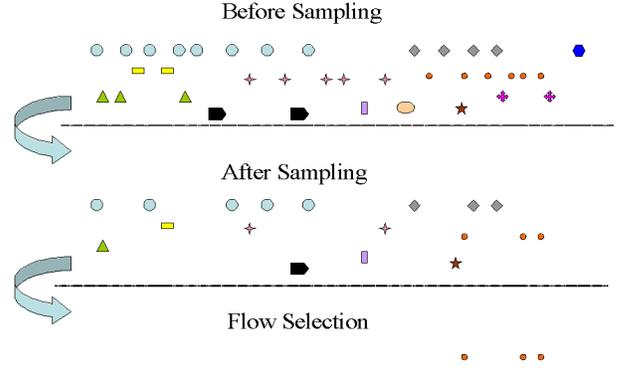}
	\caption{The flow sampling and selection process. Here a flow is selected which had $k=5$ packets originally and $j=3$ after sampling.}
	\label{fig:flow_sel}
\end{figure}
An example is given in Figure~\ref{fig:flow_sel} which shows $\nf=12$ flows before and after sampling, followed by a random flow selection.   In the interests of clarity a flow of size $j=3$ after sampling was selected, but it could have  been one of the evaporated flows ($j=0$).

With this background established, the discrete distribution for a sampled flow
originally of size $k$ is very simple:
\be
   d_j = \sum^W_{k=1} b_{jk}\theta_k , \quad 0\le j\le W.
   \label{eq:model}
\ee
This can be expressed in matrix notation as
\begin{equation}
  \mathbf{d} = \mathbf{B}\vth
  \label{eq:vec_model}
\end{equation}
where $\mathbf{d} = [d_0,d_1,d_2,\ldots,d_W]^\T$ is a $(W\!+\!1)\times 1$ column
vector, and $\vth$ a $W\times 1$ column vector. The probability $d_j$ is related
to the empirical fraction $\theta'_j$ for $j \ge 0$ by
\begin{align*}
	\E\lbrack \theta'_j \rbrack
    & = \frac{\E \lbrack \sum_{i:\si'=j} \Count \rbrack}{\nf}
	= \frac{\sum^{\nf}_{i=1} \E \lbrack \ind(\si'=j) \rbrack}{\nf} \\
    & = \frac{\sum^{\nf}_{i=1} \Pr(m'_i = j)}{\nf}
	= \frac{\nf d_j}{\nf} = d_j .
\end{align*}
The likelihood function for the parameters is simply
\begin{equation}
    f(j;\vth) = d_j , \quad 0\le j\le W.
    \label{eq:likelihood_fn}
\end{equation}
In the conditional framework commonly used $j=0$ is missing, and
normalization is then needed to ensure probabilities add to one. This
implies a division of random variables, which greatly complicates the likelihood.

\subsection{The Fisher Information of a Sampled Flow}
\label{ssec:Fisher}

The parameter vector $\vth$ is the unknown we would like to estimate from sampled flows.
Since here we are not concerned with specific estimators of $\vth$, but in the
effectiveness of the underlying sampling scheme, a powerful approach (introduced in
\cite{Ribeiro06fischer}) is to use the \textit{Fisher information} \cite[Section 11.10]
{CoverThomas06} to access its efficiency in collecting information about $\vth$.

We first introduce notation that will be used throughout this paper.
The expectation of a random variable $X$ is denoted by $\E\lbrack X \rbrack$, and the 
variance by $\mathrm{Var}(X)$. Matrices are written in bold-face upper case and vectors in
bold-face lower case. The \textit{transpose} of a matrix $\mathbf{A}$ is denoted by
$\mathbf{A}^{\!\T}$. The operator also applies to vectors. The operator $\tr(\mathbf{A})$
denotes the trace of the matrix $\mathbf{A}$. The matrix $\mathbf{I}_n$ denotes the $n
\times n$ identity matrix. The vector $\mathbf{1}_n= \lbrack 1,1,\ldots,1 \rbrack^\T$
denotes an $n \times 1$ vector of 1s. The vector $\mathbf{0}_n$ denotes the $n \times 1$
null vector, and the $m \times n$ null matrix is written as $\mathbf{0}_{m \times n}$.
Given an $n \times 1$ vector $\mathbf{x}$, $\diag(\mathbf{x})$ denotes an $n \times n$
matrix with diagonal entries $x_1, x_2,\ldots,x_n$.

\vspace{1mm}
\begin{defn}
An $n \times n$ real matrix $\mathbf{M}$ is \textit{positive definite} iff for all vectors
$\mathbf{z} \in \Real^n \backslash \{\mathbf{0}_n\}$,
$\mathbf{z}^\T \mathbf{M}\mathbf{z} > 0$, and is \textit{positive semidefinite} iff
$\mathbf{z}^\T \mathbf{M}\mathbf{z} \ge 0$.
\label{def:positive_def}
\end{defn}
\vspace{1mm}
We write $\mathbf{A} > 0$ or $0<\mathbf{A}$ to indicate that $\mathbf{A}$ is positive 
definite. For two matrices $\mathbf{A}$ and $\mathbf{B}$, we write $\mathbf{A}>\mathbf{B}$
to mean $\mathbf{A}-\mathbf{B}>0$ in the positive definite sense. Similarly, $\mathbf{A}
\ge\mathbf{B}$ and $\mathbf{A} - \mathbf{B} \ge 0$ each mean that $\mathbf{A} -
\mathbf{B}$ is positive semidefinite. The operator $|\cdot|$ returns the size of a vector
or set. All other definitions will be defined when needed.


The Fisher information is useful because its inverse is the Cram\'er-Rao lower bound
(CRLB), which lower-bounds the variance of any unbiased estimator of $\vth$. In fact the
Fisher information takes a different form depending on whether constraints are imposed
on the $\vth$ or not \cite{Hero96}. Inequality constraints are particularly problematic,
so we avoid them by assuming that each $\theta_k$ obeys $0<\theta_k<1$ (this ensures
that the CRLB optimal solution cannot include boundary values, which would create bias
and thereby invalidate the use of the unbiased CRLB). Assuming that flows exist for all
sizes, i.e.~that $\theta_k>0$ for all $k$, is reasonable given the huge number of
simultaneously active flows (up to a million) in high end routers. There is one more
constraint, the equality constraint $\sum_{k=1}^W \theta_k =1$, which must be included.
As this complicates the Fisher information, we first deal with the unconstrained case.

\subsection{The Unconstrained Fisher Information}

The Fisher information is based on the likelihood and is defined by
\begin{eqnarray}
  \nonumber
  \uncon(\vth) &=& \E \lbrack (\nabla_{\vth} \log f(j;\vth))
(\nabla_{\vth} \log f(j;\vth))^\T \rbrack  \\
                      &=& \sum^W_{j=0} (\nabla_{\vth} \log f(j;\vth))
(\nabla_{\vth} \log f(j;\vth))^\T d_j .
  \label{matrix_j}
\end{eqnarray}
Here $\nabla_{\vth} \log f(j;\vth) = (1/d_j)\lbrack b_{j1}, \ldots,
b_{jW} \rbrack^\T$ because of the simple form (\ref{eq:likelihood_fn}) of the
likelihood. This leads to the simple explicit expression
$(\uncon(\vth))_{ik} =\sum^W_{j=0} \frac{b_{ji}b_{jk}}{d_j}$, or equivalently
\be
   \uncon(\vth) = \mathbf{B}^\T \mathbf{D}(\vth)\mathbf{B}
   \label{eq:mtx_j_decomp}
\ee
where $\mathbf{D}(\vth)$  is a diagonal matrix with $(\mathbf{D}(\vth))_{jj} =
d^{-1}_{j-1}$.

We will need to find the inverse of $\uncon$, but since $\mathbf{B}$ is not square,
this cannot be done directly from (\ref{eq:mtx_j_decomp}) in terms of the inverse of
$\mathbf{B}$. However if we re-express $\mathbf{B}$ as
\be
\mathbf{B} =
  \begin{bmatrix}
       \mathbf{b}^\T_0 \\
       \mathbf{\tilde B}
  \end{bmatrix}
  \label{eq:bB}
\ee
where $\mathbf{b}_0^\T = \lbrack b_{01},\ldots,b_{0W} \rbrack$ is the top row of
$\mathbf{B}$ and
$$
\mathbf{\tilde B} =
\begin{bmatrix}
b_{11}       & b_{12}   & \cdots        & b_{1W}\\
0            & b_{22}   & \cdots        & b_{2W}\\
\vdots       & \vdots   & \ddots        & \vdots\\
0            & 0        & \cdots        & b_{WW}
\end{bmatrix},
$$
then we can write
\begin{equation}
  \uncon(\vth) = \frac{1}{d_0} \mathbf{b}_0 \bt + \tilde \uncon(\vth)
  \label{eq:mtx_j_alt}
\end{equation}
where $\tilde \uncon(\vth) = \mathbf{\tilde B}^\T \mathbf{\tilde D}(\vth) \mathbf{\tilde B}$,
 $\mathbf{\tilde D}(\vth) = \diag(d^{-1}_1, \ldots,d^{-1}_W)$, and $d_0=\bt\vth$.
Since $\Bt$ and $\Dt(\vth)$ are square, the inverse of $\tilde \uncon(\vth)$ is just
$\tilde \uncon^{-1}(\vth) = \Bt^{-1} \Dt^{-1}(\vth) (\Bt^{-1})^T$, that is
\be
   (\tilde\uncon^{-1}(\vth))_{ik} =\sum^W_{j=1} b'_{ij}b'_{kj}{d_j}
   \label{eqn:Jtildeinv}
\ee
where $b'_{jk}=(\Bt^{-1})_{jk}$.
By Lemma~\ref{thm:pos_def_prod}(ii) in Appendix \ref{app:other_lemmas}, $\tilde 
\uncon(\vth)$ is positive definite. We can now give the inverse of $\Jt$.
\smallskip

\begin{prop}
The inverse of $\uncon(\vth)$ is given by
\begin{equation*}
  \uncon^{-1}(\vth) = \tilde \uncon^{-1}(\vth) -
  \frac{1}{d_0 + \bt \tilde \uncon^{-1}(\vth) \mathbf{b}_0}
  \tilde \uncon^{-1}(\vth) \mathbf{b}_0 \bt \tilde \uncon^{-1}(\vth).
\end{equation*}
\label{prop:sub}
\end{prop}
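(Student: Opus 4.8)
The plan is to recognize the expression (\ref{eq:mtx_j_alt}) for $\uncon(\vth)$ as a rank-one perturbation of $\tilde\uncon(\vth)$ and to invert it via the Sherman--Morrison formula. Rewriting (\ref{eq:mtx_j_alt}) as
\[
  \uncon(\vth) = \tilde\uncon(\vth) + \frac{1}{d_0}\,\mathbf{b}_0\bt ,
\]
we are in the standard setting $A + uv^\T$ with $A = \tilde\uncon(\vth)$, $u = \mathbf{b}_0/d_0$ and $v^\T = \bt$, so that $uv^\T$ is the rank-one matrix $\tfrac{1}{d_0}\mathbf{b}_0\bt$. The matrix $A$ is invertible: by Lemma~\ref{thm:pos_def_prod}(ii) it is positive definite, and its inverse is the explicit object in (\ref{eqn:Jtildeinv}).

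Before invoking Sherman--Morrison one must check the non-vanishing of $1 + v^\T A^{-1} u = 1 + \tfrac{1}{d_0}\bt\tilde\uncon^{-1}(\vth)\mathbf{b}_0$. Since $\theta_k>0$ for every $k$ and the entries $b_{0k}$ are nonnegative, $d_0 = \bt\vth \ge 0$, with $d_0 = 0$ forcing $\mathbf{b}_0 = \mathbf{0}_W$; in that degenerate case the perturbation is absent and $\uncon^{-1}(\vth)=\tilde\uncon^{-1}(\vth)$ is immediate, matching the claimed formula trivially. Otherwise $d_0>0$, and because $\tilde\uncon^{-1}(\vth)$ is positive definite the quadratic form $\bt\tilde\uncon^{-1}(\vth)\mathbf{b}_0 \ge 0$, whence $1 + v^\T A^{-1} u \ge 1 > 0$ (equivalently, the denominator $d_0 + \bt\tilde\uncon^{-1}(\vth)\mathbf{b}_0$ appearing in the statement is strictly positive). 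Sherman--Morrison then gives
\[
  \uncon^{-1}(\vth) = A^{-1} - \frac{A^{-1}uv^\T A^{-1}}{1 + v^\T A^{-1} u}
  = \tilde\uncon^{-1}(\vth) - \frac{\tfrac{1}{d_0}\tilde\uncon^{-1}(\vth)\mathbf{b}_0\bt\tilde\uncon^{-1}(\vth)}{1 + \tfrac{1}{d_0}\bt\tilde\uncon^{-1}(\vth)\mathbf{b}_0},
\]
and clearing the factor $1/d_0$ from numerator and denominator of the second term yields exactly the stated expression.

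If one prefers to avoid quoting Sherman--Morrison, the same result follows by direct verification: multiply the claimed right-hand side by $\uncon(\vth) = \tilde\uncon(\vth) + \tfrac1{d_0}\mathbf{b}_0\bt$, expand the four products, factor out the scalar $\bt\tilde\uncon^{-1}(\vth)\mathbf{b}_0$ from the rank-one contributions, and observe that the resulting coefficient vanishes precisely because of the normalization $d_0 + \bt\tilde\uncon^{-1}(\vth)\mathbf{b}_0$ in the denominator, leaving $\mathbf{I}_W$. Either way, the only genuine obstacle is the well-posedness step --- confirming $d_0>0$ (or dispatching the $\mathbf{b}_0=\mathbf{0}_W$ case) so that the denominator is nonzero, and making sure the positive-definiteness of $\tilde\uncon(\vth)$ from Lemma~\ref{thm:pos_def_prod}(ii) is legitimately applicable here; the remainder is the routine rank-one update computation.
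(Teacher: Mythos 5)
Your proposal is correct and follows essentially the same route as the paper: the paper also inverts the rank-one update $\uncon = \tilde\uncon + \tfrac{1}{d_0}\mathbf{b}_0\bt$ via the matrix inversion lemma (Lemma~\ref{lem:matrix_inversion}, with $\mathbf{R}=\Jt$ and $\mathbf{T}=1/d_0$), of which your Sherman--Morrison step is the rank-one instance, and likewise justifies the nonzero denominator from the positive definiteness of $\tilde\uncon^{-1}$. Your explicit treatment of the degenerate case $d_0=0$ (forcing $\mathbf{b}_0=\mathbf{0}_W$) is a slightly more careful version of the paper's one-line nonsingularity check, but the argument is the same.
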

\begin{proof}
The matrix inversion lemma applies (see Lemma~\ref{lem:matrix_inversion} in Appendix 
\ref{sec:our_lemmas}) with $\mathbf{R}=\Jt$ and $\mathbf{T}=1/{d_0}$ nonsingular. Since $d_0 + 
\bt \tilde \uncon^{-1}(\vth) \mathbf{b}_0>\bt \tilde \uncon^{-1}(\vth) \mathbf{b}_0 >0$ as $\Jt^{-1}$ is 
positive definite, the result immediately follows.
\end{proof}
The diagonal elements of the matrix $\uncon^{-1}(\vth)$ will be important in later sections,
and the explicit formula is given below:
\be
(\uncon^{-1}(\vth))_{jj} = \sum^W_{k=1} b'^2_{jk}{d_k} - \frac{\Big( \sum_{k=j}^W d_k b'_{jk} 
\sum_{\ell=1}^k b_{0\ell} b'_{\ell k}\Big)^2}{d_0 + \sum_{k=1}^W d_k \Big(\sum_{\ell =1}^k 
b_{0\ell} b'_{\ell k}\Big)^2}. 
\label{eq:gen_inverse_diag}
\ee

Again, this explicit inverse, valid for any general sampling matrix $\mathbf{B}$,
is made possible by the very simple form of the likelihood function in equation 
(\ref{eq:likelihood_fn}). We now specialize the above result for sampling matrices that satisfy 
particular conditions. Although some of these matrices exhibit a dependence on $\vth$, we
drop this dependence for notational simplicity when the context is clear.

\smallskip
\begin{cor}
If for some constant $q$  the sampling matrix $\mathbf{B}$ satisfies $\mathbf{b}_0 = q \mathbf{1}_W$
then (setting $p=1-q$)
\begin{align*}
  \uncon^{-1} = \tilde \uncon^{-1} - \frac{q}{p}\vth \vth^\T.
\end{align*}
\label{cor:inv_uncon_syn}
\end{cor}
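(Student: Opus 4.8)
The plan is to substitute $\mathbf{b}_0 = q\mathbf{1}_W$ directly into the expression of Proposition~\ref{prop:sub} and simplify each ingredient, using only the column-stochasticity of $\mathbf{B}$ and the model relation $\mathbf{d}=\mathbf{B}\vth$. The sole non-routine step is to identify $\tilde\uncon^{-1}\mathbf{b}_0$ explicitly; everything else is arithmetic with $p+q=1$.

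First I would record three elementary facts. Since $\sum_{k=1}^W\theta_k=1$, we get $d_0=\bt\vth=q\mathbf{1}_W^\T\vth=q$. Next, column-stochasticity says $\mathbf{1}_{W+1}^\T\mathbf{B}=\mathbf{1}_W^\T$; decomposing $\mathbf{B}$ as in~(\ref{eq:bB}) gives $\bt+\mathbf{1}_W^\T\Bt=\mathbf{1}_W^\T$, so with $\bt=q\mathbf{1}_W^\T$ this yields $\mathbf{1}_W^\T\Bt=p\mathbf{1}_W^\T$, equivalently $(\Bt^{-1})^\T\mathbf{1}_W=(1/p)\mathbf{1}_W$ (here $\Bt$ is invertible, as already used in establishing the positive definiteness of $\tilde\uncon$). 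Finally, writing $\Bt\vth=[d_1,\dots,d_W]^\T$ and $\Dt^{-1}=\diag(d_1,\dots,d_W)$, we have $\Dt^{-1}\mathbf{1}_W=[d_1,\dots,d_W]^\T=\Bt\vth$.

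Combining these via $\tilde\uncon^{-1}=\Bt^{-1}\Dt^{-1}(\Bt^{-1})^\T$ gives
\[
\tilde\uncon^{-1}\mathbf{b}_0 = q\,\Bt^{-1}\Dt^{-1}(\Bt^{-1})^\T\mathbf{1}_W = \frac{q}{p}\,\Bt^{-1}\Dt^{-1}\mathbf{1}_W = \frac{q}{p}\,\Bt^{-1}\Bt\vth = \frac{q}{p}\,\vth .
\]
Hence $\bt\tilde\uncon^{-1}\mathbf{b}_0 = q\mathbf{1}_W^\T\cdot(q/p)\vth = q^2/p$, so the scalar appearing in Proposition~\ref{prop:sub} is $d_0+\bt\tilde\uncon^{-1}\mathbf{b}_0 = q+q^2/p = q/p$ (using $p+q=1$), while the rank-one numerator is $\tilde\uncon^{-1}\mathbf{b}_0\bt\tilde\uncon^{-1} = (q/p)^2\,\vth\vth^\T$. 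Dividing, the correction term equals $(p/q)(q/p)^2\,\vth\vth^\T = (q/p)\,\vth\vth^\T$, which gives $\uncon^{-1} = \tilde\uncon^{-1} - (q/p)\,\vth\vth^\T$ as claimed. The only point needing care is tracking the transposes through the identity $\mathbf{1}_W^\T\Bt=p\mathbf{1}_W^\T$; I would also note that the statement implicitly assumes $0<q<1$, so that $p>0$ and $d_0^{-1}$ is finite (the degenerate case $q=0$, with $\mathbf{b}_0=\mathbf{0}$, gives $\uncon=\tilde\uncon$ trivially).
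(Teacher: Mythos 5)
Your proof is correct and follows essentially the same route as the paper: substitute $\mathbf{b}_0=q\mathbf{1}_W$ into Proposition~\ref{prop:sub}, use column-stochasticity to get $\mathbf{1}_W^\T\Bt^{-1}=(1/p)\mathbf{1}_W^\T$, and simplify with $p+q=1$. The only (cosmetic) difference is that you identify the vector $\tilde\uncon^{-1}\mathbf{b}_0=(q/p)\vth$ up front, whereas the paper first evaluates the scalar $\bt\tilde\uncon^{-1}\mathbf{b}_0=q^2/p$ via $\sum_j d_j=1-d_0$ and then collapses the rank-one term using $\Bt^{-1}\tilde\dt=\vth$; both are the same computation.
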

\begin{proof}
\noindent The given condition implies that $d_0=\bt\vth=q\mathbf{1}^T_{\!W}\vth=q$, and
$\onew^\T \mathbf{\tilde B}^{-1}  = (1/p) \onew^\T$ since $\mathbf{B}$ is column stochastic.
Next,
\begin{align*}
\mathbf{b}_0^\T \tilde \uncon^{-1} \mathbf{b}_0 
	 &= q^2 \mathbf{1}^\T_W \tilde \uncon^{-1} \mathbf{1}_W 
      = q^2 \mathbf{1}^\T_W \mathbf{\tilde B}^{-1} \mathbf{\tilde D}^{-1}
(\mathbf{\tilde B}^{-1})^T \mathbf{1}_W \\
     & = (q^2/p^2) \mathbf{1}^\T_W \mathbf{\tilde D}^{-1} \mathbf{1}_W \\
     & = (q^2/p^2) \sum^W_{j=1} d_j = (q^2/p^2) (1 - d_0) = q^2/p.
\end{align*}

Let $\tilde{\dt} = \mathbf{\tilde B} \vth= \lbrack d_1, d_2, \ldots d_W \rbrack^\T$.
Then from Proposition~\ref{prop:sub},
\begin{align*}
\uncon^{-1}
& = \tilde \uncon^{-1} - \frac{1}{d_0 + q^2/p}\tilde
\uncon^{-1}\mathbf{b}_0 \mathbf{b}_0^\T \tilde \uncon^{-1}\\
&= \tilde \uncon^{-1} - \frac{q^2/p^2}{q + q^2/p}
\mathbf{\tilde B}^{-1} \mathbf{\tilde D}^{-1} \mathbf{1}_W
\mathbf{1}^\T_W \mathbf{\tilde D}^{-1} (\mathbf{\tilde B}^\T)^{-1} \\
& = \tilde \uncon^{-1} - \frac{q^2/p^2}{q + q^2/p} \mathbf{\tilde B}^{-1}
\tilde\dt \tilde\dt^\T (\mathbf{\tilde B}^\T)^{-1} \\
& = \tilde \uncon^{-1} - \frac{q}{p} \vth \vth^\T.
\end{align*}
\end{proof}

\smallskip
The matrix $\tilde\uncon$ corresponds to the information carried by the outcomes $1\le
j\le W$ only.  We expect $\uncon$ to carry more information through the knowledge of
$\nf$ which gives access to $j=0$, and therefore  $\uncon^{-1}$  to have reduced uncertainty, corresponding (through the CRLB) to a reduced variance.
The following result confirms this intuition (proof in Appendix \ref{sec:our_lemmas}).

\smallskip
\begin{thm}
An upper bound for $\uncon^{-1}(\vth)$ is
\begin{equation*}
   \uncon^{-1}(\vth) \le \tilde \uncon^{-1}(\vth).
\end{equation*}
Equality holds if and only if $\mathbf{b}_0 = \mathbf{0}_{W \times 1}$.
\label{thm:upper_bound_inv}
\end{thm}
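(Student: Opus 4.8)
The plan is to obtain the theorem as an immediate consequence of Proposition~\ref{prop:sub}, by recognising the correction term appearing there as a positive semidefinite rank-one matrix. Rearranging the identity in Proposition~\ref{prop:sub} and using the symmetry of $\tilde\uncon^{-1}(\vth)$,
\[
  \tilde\uncon^{-1}(\vth) - \uncon^{-1}(\vth)
   = \frac{1}{c}\,\mathbf{v}\mathbf{v}^\T,
   \qquad
   \mathbf{v} := \tilde\uncon^{-1}(\vth)\,\mathbf{b}_0,
   \quad
   c := d_0 + \bt\,\tilde\uncon^{-1}(\vth)\,\mathbf{b}_0 .
\]
For any $\mathbf{z}\in\Real^W$ one has $\mathbf{z}^\T(\mathbf{v}\mathbf{v}^\T)\mathbf{z} = (\mathbf{v}^\T\mathbf{z})^2 \ge 0$, so $\mathbf{v}\mathbf{v}^\T \ge 0$; hence, once $c>0$ is confirmed, $\tilde\uncon^{-1}(\vth)-\uncon^{-1}(\vth)\ge 0$, which is exactly the claimed bound.

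Next I would settle the two boundary cases. If $\mathbf{b}_0 \ne \mathbf{0}_{W\times1}$, then since every $\theta_k>0$ (the standing assumption of Section~\ref{ssec:Fisher}) and some $b_{0\ell}>0$, we get $d_0 = \bt\vth = \sum_{\ell} b_{0\ell}\theta_\ell > 0$; moreover $\tilde\uncon^{-1}(\vth)$ is positive definite (noted just before Proposition~\ref{prop:sub} via Lemma~\ref{thm:pos_def_prod}(ii)), so $\bt\tilde\uncon^{-1}(\vth)\mathbf{b}_0>0$ and therefore $c>0$, which validates the display above; also, nonsingularity of $\tilde\uncon^{-1}(\vth)$ forces $\mathbf{v}\ne\mathbf{0}$, so $c^{-1}\mathbf{v}\mathbf{v}^\T$ is positive semidefinite but not the zero matrix (it is strictly positive on any $\mathbf{z}$ with $\mathbf{v}^\T\mathbf{z}\ne0$), so equality fails. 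The remaining case $\mathbf{b}_0=\mathbf{0}_{W\times1}$ is precisely the one in which Proposition~\ref{prop:sub} degenerates ($d_0=0$), so there I would argue directly from the definition: the outcome $j=0$ then has probability $d_0=0$ and drops out of the sum (\ref{matrix_j}) defining $\uncon(\vth)$ — equivalently, the first term of (\ref{eq:mtx_j_alt}) vanishes — whence $\uncon(\vth)=\tilde\uncon(\vth)$ and equality holds. Combining the two cases gives $\uncon^{-1}(\vth)\le\tilde\uncon^{-1}(\vth)$ with equality exactly when $\mathbf{b}_0=\mathbf{0}_{W\times1}$.

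The argument is short and I do not expect a genuine obstacle; the only points that need care are (a) not applying Proposition~\ref{prop:sub}'s formula in the degenerate case $\mathbf{b}_0=\mathbf{0}$, where it reads $0/0$, but instead reading off $\uncon(\vth)=\tilde\uncon(\vth)$ straight from the definition; and (b) in the ``only if'' direction, using invertibility of $\tilde\uncon^{-1}(\vth)$ so that $\mathbf{v}=\tilde\uncon^{-1}(\vth)\mathbf{b}_0=\mathbf{0}$ genuinely forces $\mathbf{b}_0=\mathbf{0}$ rather than merely $\mathbf{b}_0$ lying in a kernel.
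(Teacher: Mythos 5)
Your proof is correct, but it takes a different route from the paper's. The paper works on the \emph{forward} side: it reads off from (\ref{eq:mtx_j_alt}) the decomposition $\uncon(\vth) = \tilde\uncon(\vth) + \mathbf{E}$ with $\mathbf{E} = (1/d_0)\mathbf{b}_0\bt$ a rank-one positive semidefinite matrix, concludes $\uncon \ge \tilde\uncon$, and then invokes the antitonicity of matrix inversion on the Loewner order (Lemma~\ref{thm:positive_def_inv}) to flip the inequality, with equality characterised by $\mathbf{E}=\mathbf{0}$. You instead work directly on the \emph{inverse} side, reading the difference $\tilde\uncon^{-1}-\uncon^{-1}$ off Proposition~\ref{prop:sub} as the explicit rank-one matrix $c^{-1}\mathbf{v}\mathbf{v}^\T$ with $\mathbf{v}=\tilde\uncon^{-1}\mathbf{b}_0$. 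This buys you two things: you bypass the inversion-monotonicity lemma entirely, and the ``only if'' direction becomes transparent, since invertibility of $\tilde\uncon^{-1}$ forces $\mathbf{v}=\mathbf{0}\Leftrightarrow\mathbf{b}_0=\mathbf{0}$. You are also more careful than the paper about the degenerate case: when $\mathbf{b}_0=\mathbf{0}$ both the paper's $\mathbf{E}$ and the correction term in Proposition~\ref{prop:sub} are formally $0/0$, and your fallback to the definition (the $j=0$ outcome has probability $d_0=0$ and contributes nothing, so $\uncon=\tilde\uncon$) is the right way to close that gap. The two arguments are of course cousins --- Proposition~\ref{prop:sub} was itself obtained by applying the matrix inversion lemma to the same rank-one update --- but the mechanics and the supporting lemmas genuinely differ, and both are sound.
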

\smallskip
The reduction in uncertainty is given by the second term in the expression for
$\uncon^{-1}(\vth)$ in Proposition~\ref{prop:sub}.

\subsection{The Constrained Fisher Information and CRLB}
\label{ssec:constrained_fim}

Intuitively, constraints on the parameters should increase the Fisher information since
they tell us something more about them, `for free'. In fact, \cite{Gorman90} shows that
this is only true for equality constraints. Since we are assuming that $0<\theta_k<1$, the
only active constraint is $\sum_{k=1}^W \theta_k =1$. Its \textit{gradient} is
\be
   \constr (\vth) = \nabla_{\vth} g(\vth)
   \label{gradient_matrix}
\ee
where $g(\vth) = \sum^W_{j=1} \theta_j - 1$.

\vspace{1mm}
The inverse constrained Fisher information \cite{Gorman90} is
\be
   \con^{+} = \uncon^{-1} - \uncon^{-1} \constr \left(\constr^\T \uncon^{-1}
 \constr \right)^{-1} \constr^\T \uncon^{-1}
   \label{eq:inverse_fim}
\ee
where $\con^{+}$ denotes the Moore-Penrose pseudo-inverse \cite[Chapter 20, pp.~493-514]
{Harville97} of the constrained Fisher information matrix $\con$. The matrix $\con^{+}$
is rank $W-1$ due to the single equality constraint and is thus singular (see
\cite[Remark 2]{Gorman90}). This somewhat formidable expression can be simplified in our
case, as we now show.

\smallskip
\begin{lem}
$\uncon\, \mathbf{diag}(\theta_1,\ldots,\theta_W)\onew = \onew$.
\label{lem:prod_rowstoch}
\end{lem}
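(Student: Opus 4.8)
The plan is to reduce the claimed identity to the single matrix equation $\uncon(\vth)\,\vth = \onew$ and then verify that by a short ``telescoping'' argument: feeding $\vth$ through $\mathbf{B}$ reproduces the vector $\mathbf{d}$, the diagonal matrix $\mathbf{D}(\vth)$ cancels each $d_j$ against its reciprocal to leave an all-ones vector, and column-stochasticity of $\mathbf{B}$ then collapses $\mathbf{B}^\T\mathbf{1}_{W+1}$ back to $\onew$.

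First I would observe that $\mathbf{diag}(\theta_1,\ldots,\theta_W)\onew = \vth$, since multiplying a diagonal matrix by the all-ones vector simply reads off its diagonal. Hence the assertion is equivalent to $\uncon(\vth)\,\vth = \onew$. (Implicit here, as throughout the paper, is that $d_j>0$ for every $0\le j\le W$: this is exactly what makes $\mathbf{D}(\vth)$, and therefore $\uncon(\vth)$, well defined, and it holds because $\theta_k>0$ for all $k$ together with $b_{jj}>0$ for the sampling matrices under consideration.) Next, using the decomposition $\uncon(\vth)=\mathbf{B}^\T \mathbf{D}(\vth)\mathbf{B}$ from (\ref{eq:mtx_j_decomp}) and the model relation $\mathbf{d}=\mathbf{B}\vth$ from (\ref{eq:vec_model}), I would write $\uncon(\vth)\,\vth = \mathbf{B}^\T \mathbf{D}(\vth)\,\mathbf{d}$. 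Since $(\mathbf{D}(\vth))_{jj}=d_{j-1}^{-1}$, the $(j{+}1)$-th entry of $\mathbf{D}(\vth)\,\mathbf{d}$ is $d_j^{-1}d_j=1$ for $0\le j\le W$, i.e.\ $\mathbf{D}(\vth)\,\mathbf{d}=\mathbf{1}_{W+1}$. Finally, because $\mathbf{B}$ is column stochastic, the $k$-th entry of $\mathbf{B}^\T\mathbf{1}_{W+1}$ is $\sum_{j=0}^{W} b_{jk}=1$, so $\mathbf{B}^\T\mathbf{1}_{W+1}=\onew$, which closes the argument.

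There is essentially no obstacle here; the only things to watch are index bookkeeping — $\mathbf{d}$ and $\mathbf{1}_{W+1}$ live in $\Real^{W+1}$ (indices $0$ through $W$) whereas $\vth$ and $\onew$ live in $\Real^{W}$ — and stating the well-definedness caveat $d_j>0$ so that the reciprocals in $\mathbf{D}(\vth)$ make sense. One could equally give the proof entrywise from the explicit formula $(\uncon(\vth))_{ik}=\sum_{j=0}^{W} b_{ji}b_{jk}/d_j$: by symmetry, $(\uncon(\vth)\,\vth)_k = \sum_{i=1}^{W}\theta_i\sum_{j=0}^{W} b_{ji}b_{jk}/d_j = \sum_{j=0}^{W}(b_{jk}/d_j)\sum_{i=1}^{W} b_{ji}\theta_i = \sum_{j=0}^{W}(b_{jk}/d_j)\,d_j = \sum_{j=0}^{W} b_{jk} = 1$, but the matrix-level version is cleaner and I would present that.
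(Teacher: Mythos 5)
Your argument is correct and is essentially the paper's own proof: the paper verifies the identity entrywise via $\sum_{k}\sum_{j} \frac{b_{ji}b_{jk}}{d_j}\theta_k = \sum_j \frac{b_{ji}}{d_j}d_j = 1$ using column-stochasticity, which is exactly the computation you sketch at the end, and your matrix-level factorization $\uncon\vth = \mathbf{B}^\T\mathbf{D}(\vth)\mathbf{d} = \mathbf{B}^\T\mathbf{1}_{W+1} = \onew$ is just a cleaner packaging of the same cancellation. The added remark on well-definedness ($d_j>0$) is a reasonable, if minor, refinement.
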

\begin{proof}
The row sum of row $i$ of $\uncon\,\mathbf{diag}(\theta_1, \ldots,\theta_W)$ is
\be
  \nonumber
  \sum^W_{k=1} \sum^W_{j=0} \frac{b_{ji}b_{jk}}{d_j} \theta_k
    = \sum^W_{j=0} \frac{b_{ji}}{d_j} \sum^W_{k=1} b_{jk}\theta_k
    = \sum^W_{j=0} \frac{b_{ji}}{d_j} d_j = 1
\ee
since $\mathbf{B}$ is column stochastic.
\end{proof}

It is easy to see that $\constr = \mathbf{1}_W$. Hence 
$\uncon^{-1}\constr = \uncon^{-1}\onew = \mathbf{diag}(\theta_1,\ldots,\theta_W)\onew = \vth$ 
from Lemma~\ref{lem:prod_rowstoch}. It is then straightforward to verify that
$\left(\constr^\T \uncon^{-1} \constr \right)$ is simply the number $1$, and further
that (\ref{eq:inverse_fim}) can be reduced to
\be
   \con^{+} = \uncon^{-1} - \vth\vth^\T.
   \label{eq:inverse_simple}
\ee
The remarkable thing here is that the constraint term $\vth\vth^\T$ depends on $\vth$
only, and so is constant for all sampling matrices $\mathbf{B}$, a great advantage when
comparing different methods.

Since we are assuming flows are sampled independently, the Fisher information for $N$
flows is just $N\uncon$, and the inverse becomes $\con^{+}/N$. For any unbiased
estimator $\boldsymbol{\hat\theta}$ of $\vth$, the CRLB then states that
\be
   E[(\boldsymbol{\hat\theta} - \vth)(\boldsymbol{\hat\theta} - \vth)^{\mathrm{T}}] \ge
\frac{\con^{+}(\vth)}{N} .
   \label{crlb}
\ee
Because of independence we study $N=1$. In practice all flows are sampled and so $N=\nf$.
\begin{rem}
There is an interpretation to the simple structure of the matrix $\mathbf{P} =
\uncon^{-1} \constr (\constr^\T \uncon^{-1} \constr )^{-1} \constr^\T
\uncon^{-1}$. The scalar value $\onew^\T \con^{+} \onew$ is equivalent to $\mathrm{Var}
(\sum^W_{i=1} \hat \theta_i)$. By the equality constraint, we expect the best estimator
of $\sum^W_{i=1} \hat \theta_i$ to have a variance of 0, since the estimator already
knows that $\sum^W_{i=1} \theta_i=1$.
This corresponds to a CRLB of zero, namely
$\onew^T \con^{+} \onew = \onew^T\uncon^{-1}\onew - \onew^T\vth \vth^\T \onew
                                         = \onew^T \mathbf{diag}(\theta_1,\ldots,\theta_W)\onew -1 = 0$.
Thus, $\mathbf{P}=\vth \vth^\T$ is the form of the correction term to the unconstrained covariance matrix $\uncon^{-1}$ needed to satisfy the constraint.
\end{rem}

\section{The Sampling Methods}
\label{sec:methods}

In this section we define the sampling methods we consider and derive
their main properties.
We begin with methods which have been described elsewhere, including
simple packet and flow sampling, as well as others
exploiting protocol information, in particular those proposed in
\cite{Ribeiro06fischer,duffsamplingToN2005}. Apart from their inherent interest, we
revisit these because in the unconditional framework these methods are now all
\textbf{different} to before.
More importantly, we also derive inverses analytically which has not been possible before,
and thereby obtain a number of important insights.
We also include the widely cited `Sample and Hold' \cite{Estan03} whose Fisher information has not previously been studied.  We then introduce our new method, Dual Sampling.

To better see the connection between the usual framework and ours,
recall that $b_{jk}$ is always a conditional probability with respect
to the size $k$ of the original flow. Typically however, it is also
made conditional with respect to $j$, but we do not so here. Hence, if
$\mathbf{B_c}$ is the usual $j$-conditional matrix, then
$\mathbf{B_c}\mathbf{C} = \Bt$ where $\mathbf{C} = \mathbf{I}_W - \diag(b_{01},\ldots,
b_{0W})$, i.e.~the matrix $\mathbf{C}^{-1}$ does the conditioning.

We use the decomposition of (\ref{eq:bB}) to describe each sampling matrix $\mathbf{B}$.
In each case we define $\mathbf{B}$ and $\Bt$, give the inverse $\Bt^{-1}$ of $\Bt$, and
give explicit expressions for the diagonal terms $(\uncon^{-1})_{jj}$, or in some case
for the entire inverse $\uncon^{-1}$. The importance of the diagonal terms will become
very clear in Section~\ref{sec:compare}.

\subsection{Packet Sampling (PS)}

By this we mean the simplest form of sampling, \textit{i.i.d.~packet
sampling}, where each packet is retained independently with probability $\pp$ and
otherwise dropped with $\qp = 1-\pp$. For the purpose of simplicity, we treat both
`1 in $N$' periodic sampling and i.i.d.~random sampling under the same framework, as
both methods were shown to be statistically indistinguishable in practice \cite{duffsamplingToN2005}.

The chief benefit of PS is its simplicity, and the fact that it can be implemented at
high speed because a sampling decision can be made without even inspecting the packet.
The chief disadvantage is the fact that it has a strong length bias, small flows are
very likely to evaporate.

\noindent It is easy to see that $b_{jk} = \dbinom{k}{j}\pp^j\qp^{k-j}$, or
\be
\nonumber
\mathbf{B} =
\begin{bmatrix}
\qp &   \qp^2    &  \qp^3     &   \qp^4  &    \cdots  & \qp^W  \\
\pp & 2\pp\qp    &  3\pp\qp^2 &   4\pp\qp^3  & \cdots  & \dbinom{W}{1} \pp\qp^{W-1} \\
0   &  \pp^2     &  3\pp^2\qp &   6\pp^2\qp^2  & \cdots  &\dbinom{W}{2}\pp^2\qp^{W-2} \\
\vdots   &  \vdots & \vdots & \vdots & \ddots & \vdots \\
0   &  0    &  0 & \cdots & 0 & \pp^W
\end{bmatrix}.
\ee

Before finding the inverse of $\mathbf{\tilde B}$, it is first instructive to note the
following general results by Strum \cite{Strum77}. Let $\mathcal{B}(x,y)$ be an $(W+1)
\times (W+1)$ matrix with the following structure
$$
\mathcal{B}(x,y) =
\begin{bmatrix}
1   & x                    & x^2      & \cdots        & x^W      \\
0   & y                    & 2xy      & \cdots        & \dbinom{W}{1}x^{W-1}y     \\
0   & 0                    & y^2      & \cdots        & \dbinom{W}{2}x^{W-2}y^2   \\
\vdots & \vdots & \vdots   & \ddots & \vdots                    \\
0   & 0                    & 0        & \cdots        & y^W
\end{bmatrix}
$$
which is known as the \textit{binomial matrix}. Note that $\mathcal{B}(0,1)$ reduces to
the identity matrix.  From \cite{Strum77} we have

\smallskip
\begin{lem}
If $y \ne 0$, then $\mathcal{B}(x,y)$ is invertible and
$$ \left[\mathcal{B}(x,y)\right]^{-1} = \mathcal{B}(-xy^{-1}, y^{-1}). $$
\label{lem:binom_mtxinv}
\end{lem}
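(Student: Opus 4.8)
The plan is to verify directly that the product $\mathcal{B}(x,y)\,\mathcal{B}(-xy^{-1},y^{-1})$ equals the $(W+1)\times(W+1)$ identity, which (since both factors are square) simultaneously establishes invertibility and identifies the inverse. First I would record the general entry of $\mathcal{B}(x,y)$: indexing rows and columns from $0$ to $W$, the $(r,k)$ entry is $\binom{k}{r}x^{k-r}y^{r}$ for $0\le r\le k$ and $0$ otherwise (with the convention $0^0=1$ so the $(0,0)$ entry is $1$). This is upper triangular, so the product of two such matrices is again upper triangular, and it suffices to compute the $(r,n)$ entry for $r\le n$ and show it is $\delta_{rn}$.

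Next I would write out that entry as a sum over the intermediate index $k$, $r\le k\le n$:
\begin{equation*}
\sum_{k=r}^{n} \binom{k}{r} x^{k-r} y^{r}\;\binom{n}{k}(-xy^{-1})^{n-k}(y^{-1})^{k}.
\end{equation*}
Collecting powers, the $y$ factors telescope: $y^{r}\cdot y^{-(n-k)}\cdot y^{-k} = y^{r-n}$, which is independent of $k$ and equals $1$ when $r=n$; the $x$ factors give $x^{k-r}\cdot x^{n-k} = x^{n-r}$, also independent of $k$. So the sum reduces to
\begin{equation*}
x^{n-r}y^{r-n}(-1)^{n-r}\sum_{k=r}^{n}\binom{k}{r}\binom{n}{k}(-1)^{k}\,(-1)^{-r},
\end{equation*}
and the combinatorial heart of the matter is the identity $\sum_{k=r}^{n}\binom{n}{k}\binom{k}{r}(-1)^{k} = (-1)^{n}\delta_{rn}$.

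The main obstacle — really the only nonroutine step — is that binomial identity. I would handle it via the subset-of-a-subset rule $\binom{n}{k}\binom{k}{r}=\binom{n}{r}\binom{n-r}{k-r}$, which pulls out the $r$-independent factor $\binom{n}{r}$ and leaves $\binom{n}{r}(-1)^{r}\sum_{k=r}^{n}\binom{n-r}{k-r}(-1)^{k-r} = \binom{n}{r}(-1)^{r}(1-1)^{n-r}$; this vanishes unless $n=r$, in which case the empty sum gives $1$. Substituting back, the $(r,n)$ entry is $x^{0}y^{0}(-1)^{0}\binom{n}{n}=1$ when $r=n$ and $0$ otherwise, so the product is $\mathbf{I}_{W+1}$. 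Finally I would note that the hypothesis $y\ne 0$ is exactly what makes $y^{-1}$ (and hence the claimed inverse) well defined, and that $\mathcal{B}(x,y)$ being triangular with diagonal entries $1,y,y^2,\ldots,y^W$ has determinant $y^{W(W+1)/2}\ne 0$, giving an independent sanity check on invertibility. The symmetric product $\mathcal{B}(-xy^{-1},y^{-1})\mathcal{B}(x,y)=\mathbf{I}$ then follows either by the same computation or, more cheaply, from the fact that a one-sided inverse of a square matrix is two-sided.
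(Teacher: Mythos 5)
Your verification is correct: the entrywise formula $(\mathcal{B}(x,y))_{rk}=\binom{k}{r}x^{k-r}y^{r}$ matches the displayed matrix, the powers of $x$ and $y$ do collapse to $k$-independent factors $x^{n-r}y^{r-n}$, and the residual alternating sum is exactly $\sum_{k=r}^{n}\binom{n}{k}\binom{k}{r}(-1)^{k}=(-1)^{n}\delta_{rn}$, which your subset-of-a-subset reduction to $\binom{n}{r}(-1)^{r}(1-1)^{n-r}$ establishes cleanly. The one point of comparison worth making is that the paper does not prove this lemma at all: it imports the statement directly from Strum's note on binomial matrices (\cite{Strum77}), so your direct computation supplies a self-contained argument the paper deliberately omits. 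A slightly slicker conceptual route, for what it is worth, is to observe that $\mathcal{B}(x,y)$ is the matrix of the substitution $t\mapsto x+yt$ acting on the monomial basis $1,t,\ldots,t^{W}$ of polynomials of degree at most $W$ (column $k$ lists the coefficients of $(x+yt)^{k}$); composition of substitutions then shows $\mathcal{B}(x_1,y_1)\mathcal{B}(x_2,y_2)=\mathcal{B}(x_1+y_1x_2,\,y_1y_2)$, and solving $x_1+y_1x_2=0$, $y_1y_2=1$ gives the inverse immediately, with $y\ne 0$ appearing as the invertibility of the affine map. Your approach buys an elementary, purely combinatorial verification; the substitution view buys the general product formula and makes the form of the inverse unsurprising. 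Either is acceptable here, and your determinant remark $\det\mathcal{B}(x,y)=y^{W(W+1)/2}$ is a correct independent check.
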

Using these results we can find the inverse of $\mathbf{\tilde B}$ (recall that $b'_{jk} = 
(\mathbf{\tilde{B}}^{-1})_{jk}$.)

\smallskip
\begin{thm}
The inverse of $\mathbf{\tilde B}$ is given by
\begin{equation*}
\hspace{-2mm}\mathbf{\tilde B}^{-1} =
\begin{bmatrix}
\pp^{-1}  &  -2\pp^{-2} \qp  & 3\pp^{-3} \qp^2   & \cdots & W\pp^{-W}(-\qp)^{W-1} \\
0             &   \pp^{-2}          &-3\pp^{-3} \qp      & \cdots & \vdots \\
\vdots      &   \vdots            &  \vdots              & \ddots & \vdots \\
0    &  0  &  \cdots & \cdots & \pp^{-W}
\end{bmatrix}
\end{equation*}
that is $b'_{jk} = (-1)^{k-j}\dbinom{k}{j}\qp^{k-j}\pp^{-k}$.
\label{thm:ps_Binv}
\end{thm}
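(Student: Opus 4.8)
The plan is to recognize $\mathbf{\tilde B}$ as an instance (up to scaling) of the binomial matrix $\mathcal{B}(x,y)$ of Lemma~\ref{lem:binom_mtxinv}, apply the explicit inversion formula, and then undo the scaling. The key observation is that $b_{jk} = \dbinom{k}{j}\pp^j \qp^{k-j}$ for $1\le j\le k\le W$, so the $(j,k)$-entry of $\mathbf{\tilde B}$ equals $\dbinom{k}{j}\qp^{k-j}\pp^{j}$. Comparing with the $(j,k)$-entry $\dbinom{k}{j}x^{k-j}y^{j}$ of $\mathcal{B}(x,y)$ restricted to rows and columns $1,\ldots,W$, one sees that the row-and-column-$\{1,\dots,W\}$ submatrix of $\mathcal{B}(\qp,\pp)$ has entries $\dbinom{k}{j}\qp^{k-j}\pp^{j}$, which is exactly $\mathbf{\tilde B}$. (The $0$-th row and column of $\mathcal{B}(\qp,\pp)$ are precisely the part of $\mathbf{B}$ that was split off into $\mathbf{b}_0^\T$, so deleting them is consistent.)

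First I would make this identification precise: write $\mathcal{B}(\qp,\pp)$ in block form with its $0$-th row and column separated, note the block is upper triangular with the $(0,0)$ entry equal to $1$, and conclude that the trailing $W\times W$ block is $\mathbf{\tilde B}$ and that inverting $\mathcal{B}(\qp,\pp)$ inverts that block. Then I would invoke Lemma~\ref{lem:binom_mtxinv} with $x=\qp$, $y=\pp$ (valid since $\pp = 1-\qp \ne 0$, as we assume $0<\pp<1$), giving $[\mathcal{B}(\qp,\pp)]^{-1} = \mathcal{B}(-\qp\pp^{-1}, \pp^{-1})$. Reading off the trailing $W\times W$ block of $\mathcal{B}(-\qp\pp^{-1},\pp^{-1})$ yields
\begin{equation*}
b'_{jk} = \dbinom{k}{j}\left(-\qp\pp^{-1}\right)^{k-j}\left(\pp^{-1}\right)^{j}
 = (-1)^{k-j}\dbinom{k}{j}\qp^{k-j}\pp^{-k},
\end{equation*}
since $(k-j)+j = k$. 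This is exactly the claimed formula, and substituting $j=1,2,\dots,W$ reproduces the displayed matrix (e.g.\ the top row $\pp^{-1}, -2\pp^{-2}\qp, 3\pp^{-3}\qp^2,\dots$, the second row starting $0,\pp^{-2},-3\pp^{-3}\qp,\dots$, and the diagonal $\pp^{-j}$).

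The only real obstacle is bookkeeping: Lemma~\ref{lem:binom_mtxinv} is stated for the $(W+1)\times(W+1)$ matrix indexed $0,\dots,W$, whereas $\mathbf{\tilde B}$ is the $W\times W$ lower-right block, so one must check carefully that restriction to the index set $\{1,\dots,W\}$ commutes with inversion here --- which it does precisely because $\mathcal{B}(x,y)$ is block upper-triangular with the scalar block $1$ in position $(0,0)$, so its inverse is block upper-triangular with the same structure and the $\{1,\dots,W\}$ block of the inverse is the inverse of the $\{1,\dots,W\}$ block. Beyond this, one should double-check the sign and exponent bookkeeping in $(-\qp\pp^{-1})^{k-j}(\pp^{-1})^j$, but there is no analytic difficulty. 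Alternatively, and just as cleanly, one can verify the formula directly by computing $\sum_{\ell=j}^{k} b_{j\ell} b'_{\ell k} = \delta_{jk}$ using the Vandermonde-type identity $\sum_{\ell} (-1)^{\ell-j}\dbinom{\ell}{j}\dbinom{k}{\ell} = \delta_{jk}$ after the powers of $\pp,\qp$ telescope; I would mention this as a sanity check but base the proof on Lemma~\ref{lem:binom_mtxinv} since it is the shorter route.
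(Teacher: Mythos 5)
Your proposal is correct and follows essentially the same route as the paper: both identify $\mathbf{\tilde B}$ as the trailing $W\times W$ block of the binomial matrix $\mathcal{B}(\qp,\pp)$, invoke Lemma~\ref{lem:binom_mtxinv} to get $[\mathcal{B}(\qp,\pp)]^{-1}=\mathcal{B}(-\qp\pp^{-1},\pp^{-1})$, and use the block upper-triangular structure (with scalar block $1$ in position $(0,0)$) to conclude that the trailing block of the inverse is $\mathbf{\tilde B}^{-1}$. The entry bookkeeping $(-\qp\pp^{-1})^{k-j}(\pp^{-1})^j=(-1)^{k-j}\qp^{k-j}\pp^{-k}$ is also exactly what the paper's argument yields.
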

\begin{proof}
Here $\bt=\lbrack \qp, \qp^2,\ \ldots\ \qp^W \rbrack$. We have
$$
\mathcal{B}(\qp,\pp) =
\begin{bmatrix}
1                 & \bt \\
\mathbf{0}_W  & \mathbf{\tilde B}
\end{bmatrix}.
$$
Since
$\mathcal{B}(\qp,\pp) \lbrack \mathcal{B}(\qp,\pp) \rbrack ^{-1} = \mathbf{I}_{W+1}$,
for some $\mathbf{k}$ we have
$$
\begin{array}{cccc}
\begin{bmatrix}
1             & \bt \\
\mathbf{0}_W  & \mathbf{\tilde B}
\end{bmatrix}
&
\begin{bmatrix}
1             & \mathbf{k} \\
\mathbf{0}_W  & \mathbf{\tilde B}^{-1}
\end{bmatrix}
& = &
\begin{bmatrix}
1             & \mathbf{0}_W^\T \\
\mathbf{0}_W  & \mathbf{I}_W
\end{bmatrix}.
\end{array}
$$
Furthermore, from Lemma~\ref{lem:binom_mtxinv} we have $\lbrack\mathcal{B}
(\qp,\pp) \rbrack ^{-1} =  \mathcal{B}(-\qp\pp^{-1}, \pp^{-1})$. Thus
$\mathbf{\tilde B}^{-1}$ is essentially a principal $W \times W$ submatrix of
$\mathcal{B}(-\qp\pp^{-1}, \pp^{-1})$.
\end{proof}

\noindent For PS  $\uncon^{-1}$ is difficult to write in a compact form and will be
omitted. It is however feasible to give using equation \eqref{eq:gen_inverse_diag}
\begin{align}
\nonumber
 ({\mathbf{J}}^{-1})_{jj} &= \sum_{k=j}^W \dbinom{k}{j}^2 \qp^{2(k-j)} \pp^{-2k}
	d_k \\
	&\hspace{5mm}- \frac{\Big( \sum_{k=j}^W  (-1)^{2k-j-1} d_k \dbinom{k}{j} \qp^{2k-j} 
	\pp^{-2k} \Big)^2}{\sum_{k=0}^W \qp^{2k} \pp^{-2k} d_k}	.
\label{eq:jinv_ps}
\end{align}
The above form is derived in Appendix \ref{app:sampling}.

\subsection{Packet Sampling with Sequence Numbers (PS+SEQ)}

First PS with parameter $\pp$ is performed as above. Sequence numbers are then
used as follows. Let $s_l$ be the lowest sequence number among the sampled packets, and
$s_h$ the highest.  All packets in-between these can now reliably inferred, hence
$j=s_h-s_l+1$ is the number of sampled packets returned. This is called
``ALL-seq-sflag'' in \cite{Ribeiro06fischer}.

The chief benefit of PS+SEQ is the fact that a potentially large number of packets can
be `virtually' observed without having to physically sample them. The disadvantage is
the additional processing involved to perform the inference. Also, the technique is of limited value if flows are too short (as we discuss later).

If $j=0,1$ then sequence numbers cannot help and $\bjk$ is as for PS. Otherwise, note
that the $j$ packets must occur in a contiguous block bordered by $s_l$ and $s_h$.
There are $k-j+1$ possible positions for such a block, each characterized by $k-j$
unsampled packets outside it and the borders $s_l$ and $s_h$. It follows that $b_{jk} =
(k-j+1)\pp^2\qp^{k-j}$ for $2<j\le k$. Hence
$$
\mathbf{B} =
\begin{bmatrix}
\qp &   \qp^2    &  \qp^3   &   \qp^4  &    \cdots  & \qp^W  \\
\pp   & 2\pp\qp  &  3\pp\qp^2  &   4\pp\qp^3  & \cdots  & W \pp\qp^{W-1} \\
0   &  \pp^2     &  2\pp^2\qp   &   3\pp^2\qp^2  & \cdots  & (W-1)\pp^2\qp^{W-2} \\
0   & 0    &   \pp^2   &  2\pp^2 \qp   &    \cdots  & (W-2)\pp^2\qp^{W-3} \\
\vdots   &  \vdots & \vdots & \vdots & \ddots & \vdots \\
0   &  0    &  0 & \cdots & 0 & \pp^2
\end{bmatrix}.
$$
\begin{thm}
The inverse of $\mathbf{\tilde B}$ is
$$
\mathbf{\tilde B}^{-1} =
\begin{bmatrix}
\pp^{-1}  & -2\qp\pp^{-2} & \qp^2\pp^{-2} & 0  & \cdots  & 0 \\
0   &  \pp^{-2} &  -2\qp\pp^{-2}  & \qp^2\pp^{-2}  & \cdots  & 0 \\
0   & 0    &   \pp^{-2}   &  -2\qp\pp^{-2}   &    \cdots  & 0 \\
\vdots   &  \vdots & \vdots & \vdots & \ddots & \vdots \\
0   &  0    &  0 & \cdots & 0 & \pp^{-2}
\end{bmatrix}.
$$
\label{thm:psseq_Binv}
\end{thm}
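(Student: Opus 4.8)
The plan is to verify directly that the matrix $\mathbf{V}$ displayed in the statement satisfies $\mathbf{\tilde B}\,\mathbf{V} = \mathbf{I}_W$. Since $\mathbf{\tilde B}$ is square and upper triangular with nonzero diagonal $(\pp,\pp^2,\dots,\pp^2)$, it is invertible, so exhibiting a right inverse suffices. It helps to record the nonzero entries of $\mathbf{V}$: the diagonal is $v_{11}=\pp^{-1}$ and $v_{jj}=\pp^{-2}$ for $j\ge2$, the first superdiagonal is $v_{j,j+1}=-2\qp\pp^{-2}$, the second superdiagonal is $v_{j,j+2}=\qp^2\pp^{-2}$, and all other entries vanish; thus $\mathbf{V}$ is banded and only the $(1,1)$ entry departs from the generic pattern. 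Both $\mathbf{\tilde B}$ and $\mathbf{V}$ are upper triangular, so $\mathbf{\tilde B}\,\mathbf{V}$ is upper triangular and it remains only to check the entries $(j,m)$ with $m\ge j$.

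Because $\mathbf{V}$ is banded, the $(j,m)$ entry of $\mathbf{\tilde B}\,\mathbf{V}$ collapses to $b_{j,m-2}v_{m-2,m}+b_{j,m-1}v_{m-1,m}+b_{j,m}v_{m,m}$, with the convention that $b_{jk}=0$ for $k<j$ or $k>W$ and $v_{km}=0$ when the index is out of range. I would split into the cases $m=j$, $m=j+1$, $m=j+2$, and $m\ge j+3$, and in each case treat the first row $j=1$ (where $b_{1k}=k\pp\qp^{k-1}$) separately from $j\ge2$ (where $b_{jk}=(k-j+1)\pp^2\qp^{k-j}$). The diagonal case gives $b_{jj}v_{jj}\in\{\pp\cdot\pp^{-1},\ \pp^2\cdot\pp^{-2}\}=1$. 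Each remaining case reduces, after factoring out the common powers of $\pp$ and $\qp$, to a small integer identity: $m=j+1$ gives $-2+2=0$; $m=j+2$ gives $1-4+3=0$; and $m\ge j+3$ gives $(m-1-j)-2(m-j)+(m-j+1)=0$ for $j\ge2$ and $(m-2)-2(m-1)+m=0$ for $j=1$. The powers of $\pp$ cancel identically in the $j=1$ rows too, because $b_{1k}$ carries a single $\pp$ while $v_{km}$ carries $\pp^{-2}$ except in the slot $v_{11}$, which is only ever paired with $b_{11}$. Assembling these identities yields $\mathbf{\tilde B}\,\mathbf{V}=\mathbf{I}_W$.

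There is no real obstacle here; the only thing needing attention is edge bookkeeping, namely making sure the exceptional entry $v_{11}=\pp^{-1}$ and the truncation of the three-term sum for small $m$ are invoked only in the cases where they genuinely occur, which the case split above is designed to isolate. It is worth noting, perhaps as a remark, why the band $(\pp^{-2},-2\qp\pp^{-2},\qp^2\pp^{-2})$ arises: the lower-right $(W-1)\times(W-1)$ block of $\mathbf{\tilde B}$ (rows and columns $2,\dots,W$) is upper-triangular Toeplitz with symbol $\pp^2\sum_{r\ge0}(r+1)(\qp x)^r=\pp^2/(1-\qp x)^2$, whose reciprocal $\pp^{-2}(1-\qp x)^2=\pp^{-2}-2\qp\pp^{-2}x+\qp^2\pp^{-2}x^2$ has exactly these three coefficients; only the first row, which lies outside that Toeplitz block, must be handled by hand.
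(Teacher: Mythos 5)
Your verification is correct: every case of the banded product $\mathbf{\tilde B}\,\mathbf{V}$ checks out, and the three integer identities you isolate ($-2+2$, $1-4+3$, and the second-difference identity for the far superdiagonals) are exactly what make the cancellation work, with the exceptional entry $v_{11}=\pp^{-1}$ correctly confined to the $(1,1)$ slot since $b_{j1}$ vanishes for $j\ge2$. The paper reaches the same matrix by a slightly different route: it factors $\mathbf{\tilde B}=\mathbf{S}\mathbf{T}$ with $\mathbf{S}=\diag(\pp,\pp^2,\ldots,\pp^2)$ absorbing the non-uniform diagonal scaling, so that $\mathbf{T}$, with entries $(k-j+1)\qp^{k-j}$, is a pure upper-triangular Toeplitz matrix whose inverse is the band $(1,-2\qp,\qp^2)$; then $\mathbf{\tilde B}^{-1}=\mathbf{T}^{-1}\mathbf{S}^{-1}$. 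That factorization buys a uniform treatment of all rows (no special-casing of row~1, which is handled entirely by $\mathbf{S}$) and makes the origin of the band transparent via the reciprocal of the symbol $1/(1-\qp x)^2$ --- precisely the generating-function observation you relegate to a closing remark, except that by scaling first the paper gets the whole matrix, not just the lower-right block, into Toeplitz form. Your direct multiplication is more elementary and equally rigorous, at the cost of the edge bookkeeping you flag; either argument is acceptable.
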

\begin{proof}
Observe that $\mathbf{\tilde B} = \mathbf{S}\mathbf{T}$ where
$$\mathbf{S} = \diag(\pp, \pp^2, \pp^2, \ldots, \pp^2)$$
is a $W \times W$ matrix and
$$
\mathbf{T} =
\begin{bmatrix}
1   & 2\qp  &  3\qp^2  &   4\qp^3  & \cdots  & W \qp^{W-1} \\
0   &  1     &  2\qp   &   3\qp^2  & \cdots  & (W-1)\qp^{W-2} \\
0   & 0    &   1   &  2\qp   &    \cdots  & (W-2)\qp^{W-3} \\
\vdots   &  \vdots & \vdots & \vdots & \ddots & \vdots \\
0   &  0    &  0 & \cdots & 0 & 1
\end{bmatrix}.
$$

A straightforward computation yields
$$
\mathbf{T}^{-1} =
\begin{bmatrix}
1  & -2\qp & \qp^2 & 0  & \cdots  & 0 \\
0   &  1 &  -2\qp  & \qp^2  & \cdots  & 0 \\
0   & 0    &   1   &  -2\qp   &    \cdots  & 0 \\
\vdots   &  \vdots & \vdots & \vdots & \ddots & \vdots \\
0   &  0    &  0 & \cdots & 0 & 1
\end{bmatrix},
$$
and $\mathbf{S}^{-1} = \diag(\pp^{-1}, \pp^{-2}, \pp^{-2}, \ldots, \pp^{-2})$. Thus,
$\mathbf{\tilde B}^{-1} = \mathbf{T}^{-1}\mathbf{S}^{-1}$, which proves our result.
\end{proof}

The diagonal elements of $\uncon^{-1}$ are given by
\begin{align*}
({\mathbf{J}}^{-1})_{11} &= \pp^{-2} d_1 + 4\qp^2\pp^{-4}d_{2} + \qp^4
\pp^{-4}d_{3} - \tiny{\text{$\frac{(\qp^2\pp^{-4}d_1 + 2\qp^3\pp^{-4}d_2)^2}{r}$}} \\
({\mathbf{J}}^{-1})_{22} &= \pp^{-4} d_2 + 4\qp^2\pp^{-4}d_{3} + \qp^4
\pp^{-4}d_{4} - \tiny{\text{$\frac{\qp^4\pp^{-8}d^2_2}{r}$}}\\
({\mathbf{J}}^{-1})_{jj} &=  \pp^{-4} d_j +
4\qp^2\pp^{-4}d_{j+1} + \qp^4 \pp^{-4}d_{j+2}, \ 3\le j\le W 
\end{align*}
where $r = d_0 + \qp^2\pp^{-2}d_1 + \qp^4\pp^{-4}d_2$, and for convenience we set $d_j
= 0$ for $j > W$.

\subsection{Packet Sampling with SYN Sampling (PS+SYN)}

First PS with parameter $\pp$ is performed as above. A post-processing phase then
discards all packets belonging to sampled flows which lack a SYN packet (or more
accurately, maps them to sampled flows with $j=0$). This was introduced in
\cite{duffsamplingToN2005} and called ``SYN-pktct'' in \cite{Ribeiro06fischer}.

The chief benefit of PS+SYN is that the flow length bias of PS is averted by keeping
flows based on the presence of the SYN, which is flow length independent. The chief
disadvantage is the fact that it is wasteful: if $\pp=0.01$ then 99\% of packets which were 
initially sampled belong to `failed' flows and are subsequently discarded!

A flow evaporates iff its SYN is not sampled, hence $b_{0k}=\qp$. For $j\ge1$ the SYN
must first be sampled, which occurs with probability $\pp$, and conditional on this
$j-1$ more packets must be sampled from the remaining $k-1$ using i.i.d. sampling. Hence
$b_{jk} = \pp\cdot\dbinom{k-1}{j-1}\pp^{j-1}\qp^{k-j}$ for $j\ge1$, giving
$$
\mathbf{B} =
\begin{bmatrix}
\qp       & \qp        & \qp      &  \cdots  &  \qp\\
\pp       & \pp q_p    & \pp\qp^2 &  \cdots  &  \pp q_p^{W-1}  \\
 0        & \pp^2      & 2\pp^2\qp&  \cdots  & (W-1)\pp^2 q_p^{W-2}\\
\vdots    & \vdots     & \vdots   &  \ddots  & \vdots              \\
 0        & 0          &  0       &  \cdots  & \pp^W
\end{bmatrix}.
$$

\smallskip
\begin{thm}
The inverse of $\mathbf{\tilde B}$ is given by
$$
\mathbf{\tilde B}^{-1} = \frac{1}{\pp}
\begin{bmatrix}
1      &  -\qp\pp^{-1} &\qp^2\pp^{-2} &\!\cdots\!& (-\qp)^{W-1}\pp^{-W+1} \\
0      &   \pp^{-1}      & -2\qp\pp^{-2} &\!\cdots\!&  (-\qp)^{W-2}\pp^{-W+1} \\
\vdots &   \vdots       &  \vdots           &\!\ddots\!& \vdots \\
0      &  0                  &  \cdots           &\!\cdots\!& \pp^{-W+1}
\end{bmatrix}
$$
that is $b'_{jk} = (-1)^{k-j}\dbinom{k-1}{j-1}\qp^{k-j}\pp^{-k}$.
\label{thm:pssyn_Binv}
\end{thm}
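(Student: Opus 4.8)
The plan is to verify the claimed formula for $\mathbf{\tilde B}^{-1}$ directly, by checking that the proposed matrix $\mathbf{B}'$ with entries $b'_{jk} = (-1)^{k-j}\dbinom{k-1}{j-1}\qp^{k-j}\pp^{-k}$ satisfies $\mathbf{\tilde B}\,\mathbf{B}' = \mathbf{I}_W$ (since $\mathbf{\tilde B}$ is square and upper-triangular with nonzero diagonal $\pp, \pp^2, \ldots, \pp^W$, it is invertible and a one-sided inverse suffices). Recall that the $(j,k)$ entry of $\mathbf{\tilde B}$ is $b_{jk} = \pp\dbinom{k-1}{j-1}\pp^{j-1}\qp^{k-j} = \dbinom{k-1}{j-1}\pp^{j}\qp^{k-j}$ for $1\le j\le k$ and $0$ otherwise. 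So I would compute, for $1\le i\le \ell\le W$,
\[
  (\mathbf{\tilde B}\,\mathbf{B}')_{i\ell} = \sum_{j=i}^{\ell} b_{ij} b'_{j\ell}
   = \sum_{j=i}^{\ell} \dbinom{j-1}{i-1}\pp^{i}\qp^{j-i}\cdot(-1)^{\ell-j}\dbinom{\ell-1}{j-1}\qp^{\ell-j}\pp^{-\ell}
\]
and show this equals $\delta_{i\ell}$.

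The key step is a combinatorial identity. Pulling out the factors that do not depend on the summation index $j$, the sum becomes
\[
  \pp^{i-\ell}\qp^{\ell-i}\sum_{j=i}^{\ell} (-1)^{\ell-j}\dbinom{j-1}{i-1}\dbinom{\ell-1}{j-1}.
\]
Reindex with $m = j-1$ running from $i-1$ to $\ell-1$, and use the absorption/trinomial revision identity $\dbinom{m}{i-1}\dbinom{\ell-1}{m} = \dbinom{\ell-1}{i-1}\dbinom{\ell-i}{m-i+1}$. Then the inner sum factors as $\dbinom{\ell-1}{i-1}\sum_{m=i-1}^{\ell-1}(-1)^{\ell-1-m}\dbinom{\ell-i}{m-i+1}$, and substituting $n = m-i+1$ (ranging over $0,\ldots,\ell-i$) gives $\dbinom{\ell-1}{i-1}\sum_{n=0}^{\ell-i}(-1)^{\ell-i-n}\dbinom{\ell-i}{n} = \dbinom{\ell-1}{i-1}(1-1)^{\ell-i}$. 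This is $0$ when $\ell>i$ and $1$ when $\ell=i$, and in the diagonal case the prefactor $\pp^{i-\ell}\qp^{\ell-i}=1$ as well, so $(\mathbf{\tilde B}\,\mathbf{B}')_{i\ell} = \delta_{i\ell}$ as required. The stated compact matrix form then follows by simply writing out $b'_{jk}$ for small indices.

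An alternative, perhaps slicker, route is to exploit structure: observe that PS+SYN's $\mathbf{\tilde B}$ relates to the plain PS matrix of Theorem~\ref{thm:ps_Binv}. Specifically, if $\mathbf{\tilde B}_{\text{PS}}$ is the PS matrix with entries $\dbinom{k}{j}\pp^j\qp^{k-j}$, then one checks that $\mathbf{\tilde B}_{\text{PS+SYN}} = \mathbf{\tilde B}_{\text{PS}}\cdot \mathbf{E}$ or a similar product/shift relationship holds (the SYN forces one ``guaranteed'' sampled packet, which amounts to a Pascal-type shift $k\mapsto k-1$, $j\mapsto j-1$ in the binomial, dressed with an extra $\pp/\qp$ bookkeeping factor per column). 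If such a clean factorization is available, then $\mathbf{\tilde B}^{-1}_{\text{PS+SYN}}$ follows by inverting each factor using Lemma~\ref{lem:binom_mtxinv} and Theorem~\ref{thm:ps_Binv}, and the entrywise formula drops out. I would try this first as a sanity check, but fall back on the direct verification above for the written proof.

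The main obstacle is getting the combinatorial identity exactly right — in particular tracking the sign $(-1)^{\ell-j}$, the index shift in the upper-triangular sum, and which binomial coefficient absorbs which, so that the sum genuinely collapses to $(1-1)^{\ell-i}$. The rest is bookkeeping: confirming $\mathbf{\tilde B}$ has the entries I claimed (reading off the displayed $\mathbf{B}$ and dropping the top row), noting upper-triangularity makes a one-sided inverse enough, and transcribing $b'_{jk}$ into the displayed matrix. No subtle analytic issues arise since everything is a finite triangular matrix with nonzero diagonal entries $\pp^j\neq 0$.
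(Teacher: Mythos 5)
Your direct verification is correct: the entries of $\mathbf{\tilde B}$ are indeed $\dbinom{k-1}{j-1}\pp^{j}\qp^{k-j}$, the trinomial-revision step $\dbinom{\ell-1}{j-1}\dbinom{j-1}{i-1}=\dbinom{\ell-1}{i-1}\dbinom{\ell-i}{j-i}$ is applied correctly, and the alternating sum collapses to $(1-1)^{\ell-i}=\delta_{i\ell}$ with the prefactor $\pp^{i-\ell}\qp^{\ell-i}$ harmless on the diagonal. However, this is a genuinely different route from the paper's. The paper's proof is a one-liner: it observes that $\mathbf{\tilde B}=\pp\,\mathcal{B}(\qp,\pp)$, where $\mathcal{B}$ is the $W\times W$ binomial matrix already introduced for PS, so Lemma~\ref{lem:binom_mtxinv} immediately gives $\mathbf{\tilde B}^{-1}=(1/\pp)\,\mathcal{B}(-\qp\pp^{-1},\pp^{-1})$, whose entries are exactly the stated $b'_{jk}$. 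Your ``alternative, slicker route'' gropes toward this but misidentifies the structure --- you hypothesize a factorization of the PS+SYN matrix through the PS matrix of Theorem~\ref{thm:ps_Binv} via some shift operator, when the cleaner observation is that the PS+SYN $\mathbf{\tilde B}$ \emph{is itself} a scalar multiple of the binomial matrix (no factorization through PS needed; note the PS case required extracting a principal submatrix of a $(W+1)\times(W+1)$ binomial matrix precisely because its $\mathbf{\tilde B}$ is not of this form, whereas PS+SYN's is). What each approach buys: the paper's reuses machinery already in place and is essentially free; yours is self-contained, makes the combinatorics explicit, and in effect reproves the instance of Strum's inversion result that is needed --- perfectly valid, just longer than necessary given Lemma~\ref{lem:binom_mtxinv} is already available.
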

\begin{proof}
Note that we can express $\mathbf{\tilde B}$ in terms of a $W \times
W$ binomial matrix $\mathcal{B}(x,y)$ such that $\mathbf{\tilde B} =
\pp\mathcal{B}(\qp,\pp)$. Then by Lemma \ref{lem:binom_mtxinv}, the inverse is
given by $\mathbf{\tilde B} = (1/\pp)\mathcal{B}(-\qp\pp^{-1}, \pp^{-1})$.
\end{proof}

\smallskip
It is easy to see that the condition of Corollary~\ref{cor:inv_uncon_syn} is satisfied with $p=\pp$.
Hence $\uncon^{-1} = \tilde \uncon^{-1} - \frac{\qp}{\pp}\vth \vth^\T$,
and the diagonal entries for $1 \le j \le W$
are
\be
   ({\mathbf{J}^{-1}})_{jj} = \sum_{k=j}^W	\dbinom{k-1}{j-1}^2 \qp^{2
	(k-j)} \pp^{-2k} d_k - \frac{\qp}{\pp} \theta^2_j.
\label{eq:jinv_pssyn}
\ee

\subsection{Packet Sampling with SYN and SEQ (PS+SYN+SEQ)}

First sampling is performed according to PS+SYN with parameter $\pp$,
and then on each resulting sampled flow  the sequence number post-processing is
performed as per PS+SEQ. This is called ``SYN-seq'' in \cite{Ribeiro06fischer}.
PS+SYN+SEQ is a hybrid of PS+SYN and PS+SEQ and combines the advantages
and disadvantages of both.

If $j=0,1$ then sequence numbers cannot help and $\bjk$ is as for PS+SYN. Otherwise, by
combining the arguments above, it is easy to see that $b_{jk} = \pp\cdot\pp\qp^{k-j}$
for $j>1$, giving
\begin{equation}
\mathbf{B} =
\begin{bmatrix}
\qp &   \qp   &  \qp   &   \qp  &    \cdots  & \qp  \\
\pp   &   \pp\qp  &  \pp\qp^2  &   \pp\qp^3  &    \cdots  & \pp\qp^{W-1} \\
0     &  \pp^2    &  \pp^2\qp   &   \pp^2\qp^2  & \cdots  & \pp^2\qp^{W-2} \\
0      & 0    &  \pp^2   &  \pp^2\qp  &    \cdots  & \pp^2\qp^{W-3} \\
\vdots     &   \vdots   &  \vdots & \vdots & \ddots & \vdots \\
0      &  0    &  0 & \cdots & 0 & \pp^2
\end{bmatrix}.
\label{eq:pssynseq_b}
\end{equation}

\begin{thm}
The inverse of $\mathbf{\tilde B}$ is given by
$$
\mathbf{\tilde B}^{-1} = \frac{1}{\pp}
\begin{bmatrix}
1        &   -\frac{\qp}{\pp}   &  0   & 0  & \cdots  & 0 \\
0 	 &  \frac{1}{\pp}     &   -\frac{\qp}{\pp}   &  0  &     \cdots  & 0\\
0        &  0	&     \frac{1}{\pp}     &   -\frac{\qp}{\pp}   &   \cdots  & 0\\
\vdots   &   \vdots   &  \vdots & \vdots & \ddots & \vdots \\
0        &  0  &  0 & \cdots &   0 		& \frac{1}{\pp}
\end{bmatrix}
$$
\label{thm:pssynseq_Binv}
\end{thm}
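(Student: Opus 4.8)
The plan is to mimic the factorization used in the proof of Theorem~\ref{thm:psseq_Binv}. First I would split off the column-dependent power of $\pp$ by writing $\Bt = \mathbf{S}\mathbf{T}$, where $\mathbf{S} = \diag(\pp,\pp^2,\pp^2,\ldots,\pp^2)$ and $\mathbf{T}$ is the $W\times W$ upper-triangular matrix with $(\mathbf{T})_{jk} = \qp^{\,k-j}$ for $k\ge j$ and $0$ otherwise. This factorization is checked entrywise against the lower $W\times W$ block of (\ref{eq:pssynseq_b}): row $1$ of $\mathbf{S}\mathbf{T}$ is $\pp\,[1,\qp,\qp^2,\ldots,\qp^{W-1}]$ and row $j$ for $j\ge 2$ is $\pp^2\,[\,0,\ldots,0,1,\qp,\ldots,\qp^{W-j}\,]$, which matches $\Bt$.

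The two factors are then inverted separately. The inverse $\mathbf{S}^{-1} = \diag(\pp^{-1},\pp^{-2},\ldots,\pp^{-2})$ is immediate. For $\mathbf{T}$, observe that $\mathbf{T} = \sum_{i=0}^{W-1}\qp^{\,i}\mathbf{N}^i$, where $\mathbf{N}$ is the nilpotent shift matrix with ones on the first superdiagonal and $\mathbf{N}^W = \mathbf{0}_{W\times W}$. Being a terminating geometric series, $\mathbf{T} = (\mathbf{I}_W - \qp\mathbf{N})^{-1}$, so $\mathbf{T}^{-1} = \mathbf{I}_W - \qp\mathbf{N}$ is the bidiagonal matrix with $1$ on the diagonal and $-\qp$ on the first superdiagonal. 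Equivalently, one multiplies $\mathbf{T}$ by this candidate inverse and checks that every off-diagonal entry telescopes to $0$, exactly as in the PS+SEQ case.

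Finally $\Bt^{-1} = \mathbf{T}^{-1}\mathbf{S}^{-1}$, and right-multiplication of the bidiagonal $\mathbf{T}^{-1}$ by the diagonal $\mathbf{S}^{-1}$ simply rescales column $k$ by $(\mathbf{S}^{-1})_{kk}$. The $(1,1)$ entry becomes $\pp^{-1}$, every other diagonal entry becomes $\pp^{-2}$, and each first-superdiagonal entry becomes $-\qp\pp^{-2}$; pulling out a common factor $1/\pp$ yields exactly the stated matrix. I expect no real obstacle here: the only point needing a moment's care is the asymmetry of the first row and column (the $\pp^{-1}$ rather than $\pp^{-2}$), which traces back to $b_{11}=\pp$ — the SYN contributes only a single factor of $\pp$ — and the whole argument is a routine specialization of the PS+SEQ computation already carried out for Theorem~\ref{thm:psseq_Binv}.
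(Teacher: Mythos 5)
Your proof is correct, and every step checks out: the factorization $\Bt = \mathbf{S}\mathbf{T}$ with $\mathbf{S}=\diag(\pp,\pp^2,\ldots,\pp^2)$ and $(\mathbf{T})_{jk}=\qp^{\,k-j}$ does reproduce the lower block of (\ref{eq:pssynseq_b}), the Neumann-series identity $\mathbf{T}=(\mathbf{I}_W-\qp\mathbf{N})^{-1}$ is valid because $\mathbf{N}^W=\mathbf{0}_{W\times W}$, and the column rescaling by $\mathbf{S}^{-1}$ lands exactly on the stated matrix, including the $\pp^{-1}$ anomaly in the first column coming from $b_{11}=\pp$. Your route is, however, genuinely different from the paper's: the paper proves Theorem~\ref{thm:pssynseq_Binv} by pure verification, simply asserting that the straightforward computation $\Bt\Bt^{-1}=\mathbf{I}_W$ succeeds for the displayed candidate, whereas you \emph{derive} the inverse constructively by transplanting the $\mathbf{S}\mathbf{T}$ decomposition that the paper itself uses only for PS+SEQ (Theorem~\ref{thm:psseq_Binv}). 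The verification approach is shorter on the page but presupposes the answer; your derivation explains where the bidiagonal structure comes from (the terminating geometric series in the shift operator) and why the first row and column are scaled differently, and it would let you produce the inverse even if it had not been guessed in advance. Both are elementary and both are complete proofs; yours trades a line of bookkeeping for more insight into the structure shared by PS+SEQ, PS+SYN+SEQ, SH and DS, whose $\Bt^{-1}$ matrices are all bidiagonal for essentially the reason you identify.
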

\begin{proof}
A straightforward computation shows that $ \Bt\Bt^{-1} = \mathbf{I}_W$.
\end{proof}

Since $\mathbf{b}_0 =  \qp\onew$,
Corollary~\ref{cor:inv_uncon_syn} applies and states that
$\uncon^{-1} = \tilde \uncon^{-1} - \frac{\qp}{\pp}\vth \vth^\T$.
The diagonal elements can be explicitly written, but we defer this to
Section~\ref{ssec:DS}.

\subsection{Flow Sampling (FS)}

In \textit{i.i.d.~flow sampling} \cite{thinningToN}, flows are
retained independently with probability $\pf$ and otherwise dropped
with $\qf = 1-\pf$.

The chief benefit of FS is the fact that flows which are sampled
retain their full complement of packets, eliminating completely the
difficulties in inverting sampled flow sizes back to original sizes.
The chief disadvantage is that each packet requires a lookup
in a flow table to see if it belongs to be flow which has been
sampled.

A flow evaporates iff its SYN is not sampled, hence $b_{0k}=\qf$.
If a flow has been selected,  which occurs with
probability $\pf$, then conditional on this $j=k$ with certainty, that
is $b_{jk}=1$ if $j=k$, else 0, for $j\ge1$:

\begin{equation}
\mathbf{B} =
\begin{bmatrix}
\qf      &  \qf     &  \qf     & \qf     & \cdots     & \qf \\
\pf 	 &  0       &  0       &  0      &     \cdots & 0\\
0        &  \pf	    &     0    &   0     &   \cdots   & 0\\
\vdots   &   \vdots &  \vdots  & \vdots  & \ddots     & \vdots \\
0        &  0       &  0       &  0      &   \cdots   &  \pf
\end{bmatrix}.
\label{eq:flow_b}
\end{equation}

The inverse of $\Bt$ is just $\Bt^{-1}=\mathbf{I}_W/\pf$, and $\uncon$
takes the elegant form $\uncon(\vth) = \qf \mathbf{1}_W\mathbf{1}_W^\T + \pf
\mathbf{diag}(\theta^{-1}_1, \ldots,\theta^{-1}_W)$.

Clearly Corollary~\ref{cor:inv_uncon_syn} applies with $p=\pf$.
Since
\begin{align*}
\tilde \uncon &= \pf \diag(\theta^{-1}_1, \ldots,\theta^{-1}_W),
\end{align*}
the unconstrained inverse can therefore be expressed as
\be
  \uncon^{-1}(\vth) = \frac{1}{\pf}\diag(\vth) + (1 - \frac{1}{\pf}) \vth\vth\,^\T .
  \label{eq:fsinv_alt}
\ee

\begin{rem}
By using equations (\ref{eq:inverse_simple}) and (\ref{eq:fsinv_alt}), the inverse
constrained Fisher information matrix for FS is given by
\begin{equation*}
   \con^{+} = \frac{1}{\pf}\diag(\vth) - \frac{1}{\pf}\vth \vth~^\T,
\end{equation*}
\noindent with the diagonals being $(\con^{+})_{kk} = (1/\pf)\theta_k (1-\theta_k)$.
This is just an appropriately scaled version (by $1/\pf$) of the inverse
Fisher information of a multinomial model. This makes sense given that FS works by picking out
whole flows from the original flow set in an i.i.d fashion and that the complete
likelihood function can be modeled using a multinomial model.
\end{rem}

\subsection{Packet Sampling with SYN, FIN,\! SEQ (PS+SYN+FIN+SEQ)}

In this scheme SYN packets are retained independently with
probability $\pf$ and otherwise dropped with $\qf = 1-\pf$, and the
FIN packets corresponding to sampled SYNs are also sampled, but no others.
Sequence numbers are then used to infer flow sizes.

This scheme has two great advantages:  like FS the flows sampled are
sampled perfectly, and moreover this could be achieved by physically sampling only
two packets per flow, based on looking for SYN and FIN flags on a per
packet basis, which is feasible at high speed.
The disadvantage is that a moderate minority of flows do not terminate
correctly with a FIN, and/or the FIN may be not observable.
Furthermore, flows consisting of a single SYN (such as in a SYN attack) would be entirely missed.
For this reason we choose not to study it further.

Information theoretically, PS+SYN+FIN+SEQ is identical to flow sampling provided we assume $\theta_1\approx0$.

\subsection{Sample and Hold (SH)}

Here packets are first sampled as for PS with probability $\pp$, however for each flow if a packet is sampled, then all subsequent packets in the flow will be. Hence the total number of packets sampled is much higher than the parameter $\pp$.  The scheme was introduced in \cite{Estan03}.

The chief benefit of SH is that provided just a single packet from a flow is PS-sampled, then typically many will be finally sampled. This conditional behaviour is much more effective than methods using SEQ where at least two packets must be PS-sampled, and even then fewer packets are finally recouped.
Essentially SH skips a geometric number of the first packets in a flow and then captures all the rest.
It therefore efficiently skips small flows and accurately recovers the size of large flows.
This amplified flow length bias (even stronger than for PS) makes it well suited for the heavy hitter problem (i.e.~accurately measuring the very largest flows) for which it was originally designed.
For flow size estimation more generally however, it is a disadvantage for most flow sizes.
The other disadvantage is the need to check, for \textbf{each packet}, whether it belongs to a flow which has already been sampled. This makes it very costly in a true sampling implementation.
Indeed Estan and Varghese implemented it using lossy sketching techniques \cite{Estan03}.

A flow evaporates iff none of its packets are sampled, hence $b_{0k}=\qp^k$. Otherwise,
$b_{jk} = \pp\qp^{k-j}$, thus:
\begin{equation}
\mathbf{B} =
\begin{bmatrix}
\qp      &  \qp^2     &  \qp^3     & \qp^4     & \cdots     & \qp^W \\
\pp 	 &  \pp\qp       &  \pp\qp^2       &  \pp\qp^3      &     \cdots & \pp\qp^{W-1}\\
0        &  \pp	    &     \pp\qp    &   \pp\qp^2     &   \cdots   & \pp\qp^{W-2}\\
\vdots   &   \vdots &  \vdots  & \vdots  & \ddots     & \vdots \\
0        &  0       &  0       &  0      &   \cdots   &  \pp
\end{bmatrix}.
\label{eq:sh_b}
\end{equation}
It is interesting to note that the first row is as for PS, the second as for PS+SYN,  and subsequent rows like PS+SYN scaled by $1/\pp$.

\smallskip
\begin{thm}
The inverse of $\mathbf{\tilde B}$ is given by
$$
\Bt^{-1} = \frac{1}{\pp}
\begin{bmatrix}
1        &   -\qp   &  0   & 0  & \cdots  & 0 \\
0 	 &  1     &   -\qp   &  0  &     \cdots  & 0\\
0        &  0	&     1     &   -\qp   &   \cdots  & 0\\
\vdots   &   \vdots   &  \vdots & \vdots & \ddots & \vdots \\
0        &  0  &  0 & \cdots &   1 		& -\qp\\
0        &  0  &  0 & \cdots &   0 		& 1
\end{bmatrix}
$$
\label{thm:sh_Binv}
\end{thm}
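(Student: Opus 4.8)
The plan is to exploit the highly regular, ``geometric'' structure of $\mathbf{\tilde B}$ for Sample and Hold. Reading the block $1\le j,k\le W$ off \eqref{eq:sh_b}, one sees that $(\mathbf{\tilde B})_{jk} = \pp\,\qp^{\,k-j}$ for $j\le k$ and $0$ for $j>k$; the top row $j=1$ already fits this pattern since $b_{1k}=\pp\qp^{k-1}$. Thus $\mathbf{\tilde B}=\pp\,\mathbf U$, where $\mathbf U$ is the upper-triangular matrix with $(\mathbf U)_{jk}=\qp^{k-j}$ for $j\le k$. Since $\mathbf U$ is triangular with unit diagonal it is invertible, so $\mathbf{\tilde B}$ is invertible with $\mathbf{\tilde B}^{-1}=\pp^{-1}\mathbf U^{-1}$, and everything reduces to computing $\mathbf U^{-1}$.

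To identify $\mathbf U^{-1}$ cleanly I would introduce the nilpotent upper-shift matrix $\mathbf N$, defined by $(\mathbf N)_{j,j+1}=1$ and all other entries $0$, so that $(\mathbf N^m)_{jk}=\ind(k=j+m)$ and $\mathbf N^{W}=\mathbf 0_{W\times W}$. Then $\mathbf U=\sum_{m=0}^{W-1}\qp^{m}\mathbf N^{m}$, and this sum telescopes against $\mathbf I_W-\qp\mathbf N$:
\begin{equation*}
(\mathbf I_W-\qp\mathbf N)\,\mathbf U=\sum_{m=0}^{W-1}\qp^{m}\mathbf N^{m}-\sum_{m=1}^{W}\qp^{m}\mathbf N^{m}=\mathbf I_W .
\end{equation*}
Hence $\mathbf U^{-1}=\mathbf I_W-\qp\mathbf N$, and therefore
\begin{equation*}
\mathbf{\tilde B}^{-1}=\frac{1}{\pp}\left(\mathbf I_W-\qp\mathbf N\right),
\end{equation*}
which entrywise is exactly the claimed bidiagonal matrix: $1/\pp$ on the diagonal, $-\qp/\pp$ on the first superdiagonal, and $0$ elsewhere.

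An alternative, in the spirit of the one-line proofs used for the earlier $\Bt^{-1}$ formulas, is to verify $\mathbf{\tilde B}\,\mathbf{\tilde B}^{-1}=\mathbf I_W$ directly: column $\ell$ of the candidate inverse has only two nonzero entries, $1/\pp$ in row $\ell$ and $-\qp/\pp$ in row $\ell-1$, so $(\mathbf{\tilde B}\mathbf{\tilde B}^{-1})_{j\ell}=\pp^{-1}(\mathbf{\tilde B})_{j\ell}-\qp\pp^{-1}(\mathbf{\tilde B})_{j,\ell-1}$, which equals $\qp^{\,\ell-j}-\qp\cdot\qp^{\,\ell-1-j}=0$ for $j<\ell$, equals $1-0=1$ for $j=\ell$, and equals $0$ for $j>\ell$. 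There is no genuine obstacle here; the only care needed is index bookkeeping at the two boundaries — checking that the $j=1$ row of $\mathbf{\tilde B}$ really does conform to the pattern $\pp\qp^{k-j}$ (it looks cosmetically different in $\mathbf B$), and handling the last column $\ell=W$, where the superdiagonal term is simply absent. The shift-matrix identity $\mathbf U=(\mathbf I_W-\qp\mathbf N)^{-1}$ makes both of these automatic, which is why I would present that version.
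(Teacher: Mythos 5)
Your proposal is correct and matches the paper's approach: the paper's entire proof is the one-line assertion that $\Bt\Bt^{-1}=\mathbf{I}_W$ is easily verified, which is exactly your second argument carried out explicitly (and your index bookkeeping at $j=1$ and $\ell=W$ checks out). The nilpotent-shift formulation $\mathbf{\tilde B}=\pp(\mathbf{I}_W-\qp\mathbf{N})^{-1}$ is a clean repackaging of the same computation rather than a genuinely different route, though it does make the boundary cases automatic.
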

\begin{proof}
\ It is easily verified that $ \Bt\Bt^{-1} = \mathbf{I}_W$.
\end{proof}

It is not difficult to show that Proposition~\ref{prop:sub} reduces to ${\mathbf{J}}^{-1}
= \Jt^{-1} -  \mathbf{C}$, where the only non-zero element of $\mathbf{C}$ is
$\mathbf{C}_{11}= \pp^{-2}\qp^2d_1^2(\pp^2d_0+\qp^2 d_1)^{-1}=d_0/\pp$ since $d_1 = \frac{\pp}{\qp}d_0$.
Furthermore, using (\ref{eqn:Jtildeinv}) one can
show that $\Jt^{-1}$ is tridiagonal (and symmetric) with upper off-diagonal terms
$(\Jt^{-1})_{k,k+1}= -\pp^{-2} \qp d_{k+1} $, $k<W$, and
diagonal elements
\ba
      \nonumber
	({\mathbf{J}}^{-1})_{11}\!\!\!\!&=&\!\! \frac{1}{\pp^2}(d_1+ \qp^2d_{2}) - \frac{1}{\pp} d_0\\ 
	 \label{eq:SHone}
	&=& \theta_1 + \frac{1}{\pp} \sum_{k=2}^W \qp^{k-1} \theta_k\\ \nonumber
	({\mathbf{J}}^{-1})_{jj}\!\! &=&\!\! \frac{1}{\pp^2}(d_j + \qp^2d_{j+1}), \ 2 \le j\le W-1 \\
	 \label{eq:SHj}
                             &=& \frac{\theta_j}{\pp}  + \frac{1+\qp}{\pp} \sum_{k=j+1}^W \qp^{k-j} \theta_k \\\nonumber
	({\mathbf{J}}^{-1})_{WW}\!\! &=&\!\! \frac{d_W}{\pp^2} = \frac{\theta_W}{\pp},
\ea

\noindent  where we have used the property of $\mathbf{B}$ that $d_j = \pp \theta_j + \qp d_{j+1}$ for $1 \le j \le W-1$.

\subsection{Dual Sampling (DS)}
\label{ssec:DS}

DS can be defined simply as follows. First, at the packet level it consists of two PS
schemes running in parallel, one which operates only on SYN packets with sampling
probability $\pf$, and the other only on non-SYN packets with sampling probability
$\pp$. In a second phase, sampled flows which lack a SYN are discarded, and
sequence numbers are used to infer additional `virtual' packets, as in PS+SYN+SEQ.
Thus, at one level DS is simply a generalization of PS+SYN+SEQ, and reduces to it when
$\pf=\pp$. However, the generalization is significant as it also includes FS as the
special case $\pp=1$, and interpolates continuously between the two. This is illustrated
in Figure~\ref{fig:DSfamily} which depicts the $(\pp,\pf)$ parameter space, and marks
the special cases.

Dual Sampling is `dual' in two senses. Computationally it can be viewed as the original
PS sampling being split into two, at the (low) cost of per-packet switching based on
some bit checking to determine which PS applies. Information theoretically, the sampling
is now split into two parts with very different natures, each controlled by a dedicated
parameter: a FS-like direct sampling of flows, and a PS-like in-flow sampling. Here
$\pf$ controls the number of sampled flows, and $\pp$ their `quality'.

The derivation of the sampling matrix mirrors closely that of PS+SYN+SEQ.  The result
shows clearly how $\pf$ and $\pp$ act in a modular fashion. The flow sampling component
controls the top row of $\mathbf{B}$ and factors $\Bt$, whereas the packet sampling
component determines the internal structure of $\Bt$.
\begin{equation}
\nonumber
\mathbf{B} =
\begin{bmatrix}
\qf &   \qf   &  \qf   &   \qf  &    \cdots  & \qf  \\
\pf   &   \pf\qp  &  \pf\qp^2  &   \pf\qp^3  &    \cdots  & \pf\qp^{W-1} \\
0     &  \pf\pp    &  \pf\pp\qp   &   \pf\pp\qp^2  & \cdots  & \pf\pp\qp^{W-2} \\
0      & 0    &  \pf\pp   &  \pf\pp\qp  &    \cdots  & \pf\pp\qp^{W-3} \\
\vdots     &   \vdots   &  \vdots & \vdots & \ddots & \vdots \\
0      &  0    &  0 & \cdots & 0 & \pf\pp
\end{bmatrix}.
\label{eq:dps_b}
\end{equation}
The separation of the FS and PS roles in $\Bt$ is clearly reflected in its inverse:
$$
\mathbf{\tilde B}^{-1} = \frac{1}{\pf}
\begin{bmatrix}
1        &   -\frac{\qp}{\pp}   &  0   & 0  & \cdots  & 0 \\
0 	 &  \frac{1}{\pp}     &   -\frac{\qp}{\pp}   &  0  &     \cdots  & 0\\
0        &  0	&     \frac{1}{\pp}     &   -\frac{\qp}{\pp}   &   \cdots  & 0\\
\vdots   &   \vdots   &  \vdots & \vdots & \ddots & \vdots \\
0        &  0  &  0 & \cdots &   0 		& \frac{1}{\pp}
\end{bmatrix}.
$$

The similarity between $\mathbf{\tilde B}^{-1}$ for SH and DS is striking.
Indeed, whereas SH uses sampling to select which flows it will focus on and then holds to them, DS reverses these operations and thus could be described as a `Hold and Sample' scheme.
However, although they each combine PS and FS features, the methods remain significantly different. In particular SH is strongly flow length biased whereas DS is unbiased.

Once again Corollary~\ref{cor:inv_uncon_syn} for the inverse Fisher information holds, showing
that $\uncon^{-1} = \tilde \uncon^{-1} - \frac{\qf}{\pf}\vth \vth^\T$.
Similarly to SH, from (\ref{eqn:Jtildeinv}) one can show that $\Jt^{-1}$ is tridiagonal with upper off diagonal terms $(\Jt^{-1})_{k,k+1}= -(\pf\pp)^{-2} \qp d_k $, $k<W$.
The diagonal elements (here $2 \le j \le W-1$) are given by
\ba
	\nonumber
	\hspace{-4mm} ({\mathbf{J}}^{-1})_{11}\!\!
	&=&\!\! \frac{1}{\pf^2\pp^2}(\pp^2 d_1+\qp^2d_2) - \frac{\qf}{\pf} \theta_1^2\\
  \label{eq:DS_diags1}
  	&=&\!\! \frac{\theta_1}{\pf} + \frac{1}{\pf\pp}\sum_{k=2}^W \qp^{k-1}\theta_k  - \frac{\qf}{\pf}\theta_1^2\\
	\nonumber
	\hspace{-4mm}  ({\mathbf{J}}^{-1})_{jj}\!\!
	&=&\!\! \frac{1}{\pf^2\pp^2}(d_{j}+\qp^2d_{j+1}) - \frac{\qf}{\pf} \theta_j^2\\\nonumber
       &=&	\frac{\theta_j}{\pf\pp} + \frac{1}{\pf\pp}\sum_{k=j+1}^W \qp^{k-j}(1+\qp)\theta_k - \frac{\qf}{\pf} \theta_j^2\\
	\hspace{-4mm}  ({\mathbf{J}}^{-1})_{W\!W}\!\!
	   	&=&\!\! \frac{d_W}{\pf^2\pp^2} - \frac{\qf}{\pf} \theta_W^2 = \frac{\theta_W}{\pf\pp} - \frac{\qf}{\pf} \theta_W^2.
  \label{eq:DS_diags}
\ea
By setting $\pf=\pp$ we obtain those for PS+SYN+SEQ.

\section{Comparisons}
\label{sec:compare}

In this section we compare and contrast the performance of the different methods, using two
normalizations which are the key to a fair comparison. We
show that a positive semidefinite comparison holds for certain cases, and justify
why we ultimately resort to comparison of the diagonals of the covariance matrix.
We provide a partial ranking of the methods. Proofs of most key results in this
section are deferred to Appendix \ref{app:comps}.

\subsection{Normalization}
\label{ssec:norm}
\begin{figure}[t]
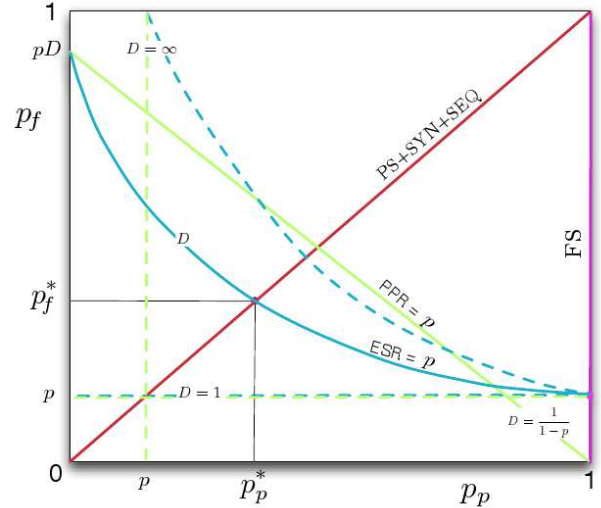

  \DSfamily
  \caption{Parameter space $(\pp,\pf)$ of DS.
   Fixing ESR$=p$ constrains the family to the solid (blue) curve $\pf(\pp;p)$, where
    it reduces to PS+SYN+SEQ at $\pp=\pp^*$ and FS at $\pp=1$.
   Fixing PPR$=p$ constrains the family to a straight (green) line emanating from $(0,pD)$.
   For fixed $p$, the constraint curves depend on $D$. Three examples are given for each normalization.}
  \label{fig:DSfamily}
\end{figure}

We must first consider how to compare fairly. We do this in two ways, using the
following packet-based measures of incoming workload/information.
\begin{itemize}
  \item  \textit{Packet Processing Rate} (\textbf{PPR}): \ the rate at which
   packets are initially being sampled (and hence require some further processing).
  \item  \textit{Effective Sampling Rate} (\textbf{ESR}): \ the rate at which
   packets are arriving to the flow table (and hence become available
   as information for estimation).
\end{itemize}
PPR is a measure of the processing speed required by the methods, whereas
ESR is a measure of the arrival rate of
packets containing information which is actually used by the method.
Because the methods which discard flows without SYN packets lose (the
majority) of their packets, an equal PPR comparison greatly disadvantages them.
An equal ESR comparison effectively inflates the parameters of such methods to compensate.
\begin{table}[b]
\begin{tabular}{|l||c|c|}
\hline
Method(params)      &  Average sampling rate $p=$  & PPR$(p)=$ \\
\hline\hline
PS$(\pp)$                & $\pp$                      & $p$ \\\hline
PS+SYN$(\pp)$       & $\pp$                      & $p$ \\\hline
FS$(\pf)$                 & $\pf$                       & $p$ \\\hline
SH$(\pp)$                & \!\!$\frac{\pp}{D}\sum_{j=1}^W j\qp^{-j}\sum_{k=j}^W \qp^k 
\theta_k\!\!$ & root$(p(\pp))$ \\\hline
DS$(\pp,\pf)$           & $\pf\big(\frac{1}{D}\big) + \pp\big( 1- \frac{1}{D} \big)$ & 
\!\!$\pf=pD\!-\!\pp(D\!-\!1)$\!\!\!\\
\hline
\end{tabular}
\caption{Average PPR sampling rates and normalizations.}
\label{table:PPR}
\end{table}

Note that given a sampling method $X$, both the PPR and ESR normalizations of X+SEQ and X 
are identical, as the methods only differ in how the packet data is used, not how many 
packets are physically collected.
We now present the normalizations for each method. Here and below we use
$D=\sum_{k=1}^W k\theta_k \ge1$ to denote the average flow size.

Table~\ref{table:PPR} gives the results for PPR. Both the average sampling rate $p$ as a 
function of the method parameter(s), and its inverse, the parameter value required to set 
the average sampling rate at $p$, are shown.  The expression $p(\pp,\vth)$ for SH was derived 
by computing the average number of packets sampled (which is highly dependent on
$\vth$) and dividing the result by the average flow size $D$. It is 
monotonically increasing in $\pp$ with $p(1,\vth)=1$, and $p/\pp\rightarrow\sum_{j=1}^W
j\sum_{k=j}^W \theta_k/D \ge1$ as $\pp\rightarrow0$, which in the worst case
($\theta_W=1$) becomes  $p/\pp=(1+W)/2$. We invert $p(\pp,\vth)$ numerically as required
for choices of $\vth$, denoted by root$(p(\pp))$ in the tables. DS has two parameters. We 
choose to make $\pp$ the independent one.

Table~\ref{table:ESR} gives the results for ESR. Only the methods which involve discarding packets after their initial collection
have changed compared to Table~\ref{table:PPR}.
Note that for PS+SYN $p/\pp\rightarrow 1/D \ge1$ as $\pp\rightarrow0$, and $p(1)=1$.
\begin{table}[t]
\begin{tabular}{|l||c|c|}
\hline
Method(params)      &  Average sampling rate $p=$  & ESR$(p)=$ \\
\hline\hline
PS$(\pp)$                & $\pp$                      & $p$ \\ \hline
PS+SYN$(\pp)$       & $\frac{\pp(\pp(D-1) +1)}{D}$  &\!\!\! $\frac{-1+\sqrt{1+4pD(D-1)}}{2(D-1)}$\!\!\!  \\\hline
FS$(\pf)$                 & $\pf$                       & $p$ \\\hline
SH$(\pp)$                &  \!\!$\frac{\pp}{D}\sum_{j=1}^W j\qp^{-j}\sum_{k=j}^W \qp^k \theta_k$  \!\!& root$(p(\pp))$ \\\hline
DS$(\pp,\pf)$           & $\frac{\pf(\pp(D-1) +1)}{D}$ & $\pf=\frac{pD}{\pp(D-1)+1}$\\
\hline
\end{tabular}
\caption{Average ESR sampling rates and normalizations.}
\label{table:ESR}
\end{table}

We are particularly interested in the ESR case for the DS family, and so repeat its ESR normalization here:
\be
  \pf(\pp;p) = \frac{pD}{\pp(D-1)+1}.
  \label{eq:ESRcurve}
\ee
The denominator $\pp(D-1)+1$ is simply the average sampled flow size, conditional on its SYN being sampled.
For fixed $p$, this equation gives $\pf$ as a monotonically decreasing, in fact convex, function of $\pp$.
Three examples (the blue curves) are given in Figure~\ref{fig:DSfamily} for different values of $D$.
To maintain an ESR fixed at $p$, if we increase $\pp$ we must
move down the curve and decrease $\pf$ to compensate.
The PPR case is similar but simpler as the level curves are straight lines (green lines in Figure~\ref{fig:DSfamily}).  Note that depending on $p$ and $D$, under both PPR and ESR
there are values of $\pf$ and/or $\pp$ that may be disallowed.

\smallskip
To compare methods we also require a performance  metric.
A natural criterion is the set of diagonal elements of the CRLB,
the $k$-th being
\be
  (\con^+)_{kk} = (\uncon^{-1})_{kk} - \theta_k^2,
\ee
since this is a lower bound on the variance of any unbiased estimator of $\theta_k$.
We plot the square root of these values, calculated using the
expressions for the previous section.
\begin{figure}[b]
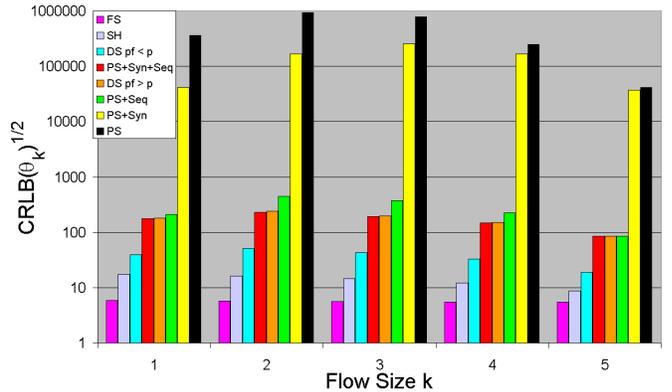

\vspace{-4mm}
  \hspace{-3mm} \shortComparePPR
  \caption{An equal PPR comparison of the CRLB bound for $\vth$ with $p=0.005$.}
  \label{fig:shortComparePPR}
\end{figure}
\begin{figure}[b!]
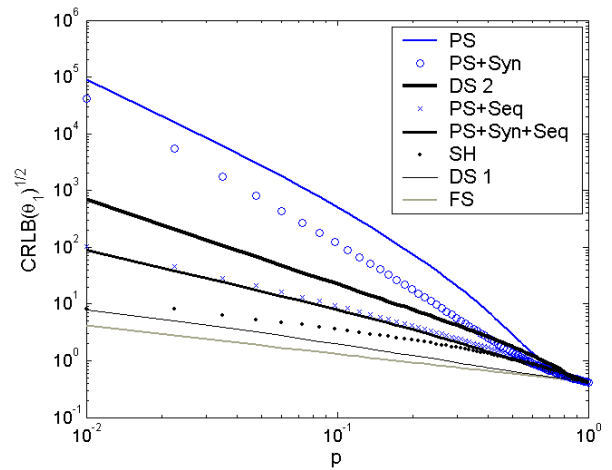

  \sampvaryPPR
  \caption{Dependence of the CLRB bound on $p$, PPR comparison.}
  \vspace{-2mm}
  \label{fig:sampvary}
\end{figure}
\begin{figure}[b]
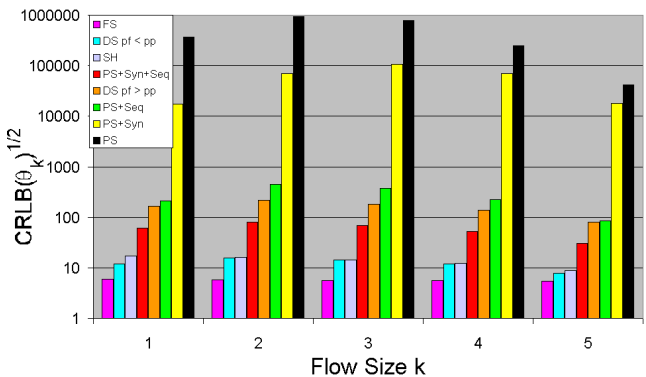

  \shortCompareESR
  \vspace{-2mm}
  \caption{An equal ESR comparison of the CRLB bound for $\vth$ with $p=0.005$.}
  \label{fig:shortCompareESR}
\end{figure}

We begin with some instructive examples.
Consider the results of Figure~\ref{fig:shortComparePPR}, where we set
$W=5$ with
$$\vth = \{0.22, 0.21, 0.20, 0.19, 0.08\}, $$
for which $D= 2.90$ packets.
We use the PPR normalization with $p=0.005$, corresponding to $\pp\approx0.0021845$ for SH.
For DS we give two examples satisfying $p$: $(\pf,\pp)=(0.001,0.0443)$, and $(\pf,\pp)=(0.1, 0.00334)$.
\begin{figure*}[t]
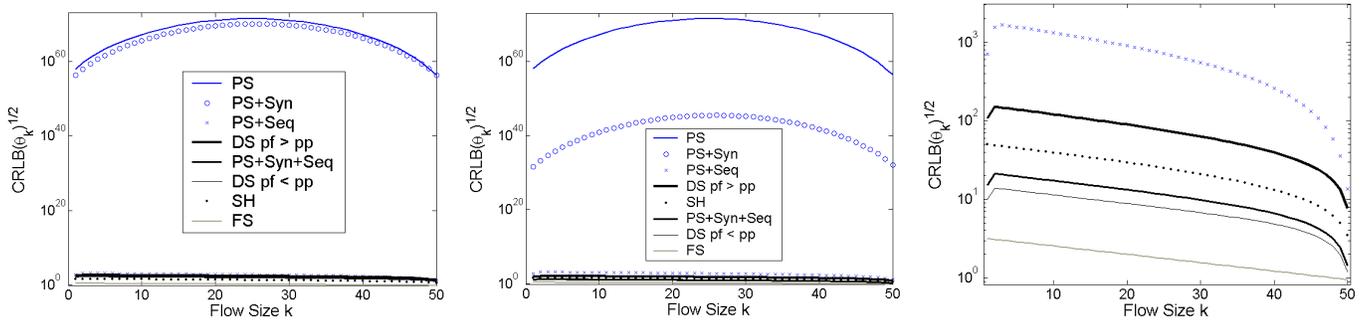

 \longComparePPR \hspace{-3mm} \longCompareESR \hspace{-3mm} \longCompareESRzoom
  \caption{Comparison of the CRLB bound with $W=50$.
Left:    PPR with $p=0.005$. $\pf = 0.001,\ \pp = 0.0052$ for DS ($\pf < \pp$) and  $\pf
		 = 0.01,\ \pp = 0.0047$ for DS ($\pf > \pp$).
Middle:  ESR with $p=0.005$. $\pf = 0.032,\ \pp = 0.1$ for DS ($\pf < \pp$) and  $\pf =
		 0.079,\ \pp = 0.001$ for DS ($\pf > \pp$).
Right:   Zoom of middle plot, the same legend applies.}
  \label{fig:longCompare}
\end{figure*}

As expected from earlier work, the performance of PS is
extraordinarily poor.  In agreement with the results of \cite{Ribeiro06fischer}
and as expected, the inclusion of SEQ improves it enormously, by
orders of magnitude, but it is still orders of magnitude behind FS,
which has the lowest standard deviation bound of all.
In a highly counterintuitive result, PS+SYN performs (much!) better than PS, despite the enormous number of wasted packets. We can offer two reasons for this.
First, in the unconditional framework
information in discarded flows is not entirely wasted, some is
recouped by the (observable) increase in the $j=0$ outcome\footnote{It was pointed out in \cite{Ribeiro06fischer}, under a conditional framework,
that PS+SYN+SEQ outperforms PS+SEQ on actual traces.
Our own numerical evaluation of the CRLB of these methods
(under the conditional framework) also show this in some cases, leading us to
conjecture the same counterintuitive results may hold for the conditional framework.}.
Second, SYN sampled flows (much like FS) have no flow length bias.
Together these lead PS+SYN to perform better in most cases even under PPR.
Three results for DS are given. From best to worst these were with $\pf=0.001<p$,  
the special case PS+SYN+SEQ with $\pf=p$, and $\pf=0.1>p$.
Finally SH performs well in this example as its key disadvantage, a strong bias to large flows, 
does not play a large role for such a small value of $W$.

Using the same PPR based comparison, Figure~\ref{fig:sampvary}
shows the improvement in CRLB as $p$ increases, as we expect.
We see that very high values of $p$ are needed before the performance
of FS is approached.  Here the free parameter $\pf$ for DS was chosen to guarantee meaningful 
examples to either side of PS+SYN+SEQ.
Specifically, the normalization defines for each value of $p$ a feasible range 
$[\pf^l,\pf^u]$ for $\pf$ which includes $\pf=p$. 
For DS1 we set $\pf= p - c(p-\pf^l)<p$, and for DS2 $\pf= p + c(\pf^u-p)>p$, where $c\in[0,1]$. In the figure we use $c=0.1$.

The great merit of the ESR normalization is that it allows us to view the methods as
being different ways in which the same budget of sampled packets can be allocated among
flows. The essential tradeoff is that we can have more sampled flows of poor quality, or
fewer of higher quality. In a class with no flow length bias, flow sampling is at one extreme: for a given ESR$=p$ it gives the minimum number of sampled flows, each of perfect quality.
Among sub-classes of
methods with roughly the same number of flows, we can then distinguish between the finer
details of how the `holes' appear over the flow, which will have an impact on the degree
of improvement which the sequence numbers can bring.


Results using the same scenario as Figure~\ref{fig:shortComparePPR} but ESR normalized
are shown in Figure~\ref{fig:shortCompareESR}. 
As expected, all SYN-based methods improve significantly.
In particular the DS with the smaller $\pf=0.001$ now outperforms SH, and is second only to FS.

Another example with a larger $W=50$ and a truncated exponential distribution for $\vth$
with $D=16.039$ is given in Figure~\ref{fig:longCompare}.
The same general conclusions hold, with the exception of SH, whose performance is now worse 
than PS+SYN+SEQ rather than considerably better.
This is to be expected at larger $W$, as SH expends its packet budget on the rare largest flows. 
At realistic $W$ values in network data this effect is further exaggerated.

It is interesting to note in the right plot of Figure~\ref{fig:longCompare} that variance decreases as $k$ increases. This is consistent with the fact that the ambiguity inherent in an observation of $j$ sampled packets is lower for larger $j$ and disappears at $j=W$, since this observation can only arise in one way.  The dip at very small $k$ we attribute to the influence of the constraint.

\subsection{Positive Semidefinite Comparisons}
\label{ssec:thcompare}

The examples above compared methods using the CRLB of each $\theta_k$ separately. Let two
methods have Fisher information $\uncon_1$ and $\uncon_2$. A more complete, and in a sense
ideal comparison, would be to show that $\uncon_1\ge\uncon_2$, since that would
imply $\con^+_1 \le \con^+_2$, a lower CRLB. What this positive semidefinite comparison
really means is that for any linear combination $f(\vth)= \mathbf{a}^\T \vth = \sum_k
a_k\theta_k$ of the parameters, the (bound on the) variance of $f(\vth)$ under method 1
will be less than that under method 2. A geometric interpretation is that the ellipsoid
corresponding to unit variance of vectors under $\uncon_1$ lies entirely within that of $\uncon_2$. 
In this section we provide a number of comparisons of this type between methods, and explain why
it is not suitable as a universal basis of comparison.

We first confirm the intuition that methods lose information monotonically in their sampling parameter(s).
\begin{thm}
For any method $Z$ surveyed here
if $p_1 \ge p_2$ (for DS $\ppo\ge\ppt$ and $\pfo\ge\pft$)
then $\uncon_Z(p_1) \ge \uncon_Z(p_2)$  (for DS $\uncon_Z(\ppo,\pfo) \ge \uncon_Z(\ppt,\pft)$). 
Equality holds iff $p_1 = p_2$ (for DS $(\ppo,\pfo)=(\ppt,\pft)$).
\label{thm:dpi_sampling}
\end{thm}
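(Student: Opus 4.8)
The plan is to exploit a data-processing-type argument at the level of the sampling matrices: if the sampling at rate $p_1$ can be realised as the sampling at rate $p_2$ followed by an \emph{additional} independent thinning step, then the Fisher information can only decrease. Concretely, I would first show that for each method $Z$ there is a \emph{stochastic} matrix $\mathbf{B}_{12}$ (of the appropriate $(W+1)\times(W+1)$ shape, acting on sampled-flow outcomes including $j=0$) such that $\mathbf{B}_Z(p_2) = \mathbf{B}_{12}\,\mathbf{B}_Z(p_1)$. For i.i.d.\ packet sampling (PS, PS+SYN, PS+SYN+SEQ, and the non-SYN component of DS) this is immediate from the semigroup property of binomial thinning: sampling at $\pp_1$ then at $\pp_2/\pp_1$ is sampling at $\pp_2$ (requires $p_2\le p_1$, which is the hypothesis). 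For the SYN component, and for FS, the same holds with Bernoulli flow-retention probabilities $\pf$, again composing as $\pf_2 = \pf_1\cdot(\pf_2/\pf_1)$. For SH one needs the analogous statement that holding after a coarser packet sample is the same as holding after a finer one followed by extra thinning of the ``head'' geometric count; this is the one case where $\mathbf{B}_{12}$ is not simply binomial and has to be written out, but it is still easily checked to be column-stochastic. For DS the two components factor independently, so $\mathbf{B}_{12}$ is the (appropriately ordered) product of the SYN-part and non-SYN-part thinnings, covering the joint hypothesis $\pp_1\ge\pp_2$, $\pf_1\ge\pf_2$.

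Given the factorisation $\mathbf{B}_Z(p_2) = \mathbf{B}_{12}\mathbf{B}_Z(p_1)$, I would then invoke the general monotonicity of Fisher information under a stochastic post-processing map. Using the explicit form \eqref{eq:mtx_j_decomp}, $\uncon_Z(p) = \mathbf{B}_Z(p)^\T \mathbf{D}(\mathbf{B}_Z(p)\vth)\,\mathbf{B}_Z(p)$, so what must be shown is
\[
  \mathbf{B}_Z(p_1)^\T \mathbf{D}_1 \mathbf{B}_Z(p_1)
  \;\ge\;
  \mathbf{B}_Z(p_1)^\T \mathbf{B}_{12}^\T \mathbf{D}_2 \mathbf{B}_{12}\mathbf{B}_Z(p_1),
\]
where $\mathbf{D}_1 = \diag(d^{-1}_j)$ with $\mathbf{d}_1=\mathbf{B}_Z(p_1)\vth$, and $\mathbf{D}_2$ is the corresponding diagonal matrix built from $\mathbf{d}_2 = \mathbf{B}_{12}\mathbf{d}_1$. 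Writing $\mathbf{A}=\mathbf{B}_Z(p_1)$, it suffices to prove $\mathbf{D}_1 \ge \mathbf{B}_{12}^\T \mathbf{D}_2 \mathbf{B}_{12}$ as quadratic forms on the range of $\mathbf{A}$, and in fact the cleaner route is to prove it on all of $\Real^{W+1}$. This last inequality is precisely the statement that the Fisher information of a single multinomial-type observation cannot increase under the stochastic map $\mathbf{B}_{12}$; it follows from Jensen's inequality applied coordinatewise: for any vector $\mathbf{z}$, writing $(\mathbf{B}_{12})_{jk}=c_{jk}$ and $d_{2,j}=\sum_k c_{jk} d_{1,k}$,
\[
  \mathbf{z}^\T \mathbf{B}_{12}^\T \mathbf{D}_2 \mathbf{B}_{12}\mathbf{z}
   = \sum_j \frac{\bigl(\sum_k c_{jk} z_k\bigr)^2}{\sum_k c_{jk} d_{1,k}}
   \le \sum_j \sum_k \frac{c_{jk} z_k^2}{d_{1,k}}
   = \sum_k \frac{z_k^2}{d_{1,k}} \Bigl(\sum_j c_{jk}\Bigr)
   = \sum_k \frac{z_k^2}{d_{1,k}}
   = \mathbf{z}^\T \mathbf{D}_1 \mathbf{z},
\]
using column-stochasticity $\sum_j c_{jk}=1$ in the penultimate step and Cauchy--Schwarz (equivalently convexity of $t\mapsto t^2$, or of $(x,y)\mapsto x^2/y$) in the inequality. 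Sandwiching by $\mathbf{A}$ on both sides preserves the inequality, giving $\uncon_Z(p_1)\ge\uncon_Z(p_2)$.

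For the equality case I would trace back through this chain: equality in the Jensen/Cauchy--Schwarz step forces, for every $j$ appearing, the ratios $z_k/d_{1,k}$ to be constant over $\{k : c_{jk}>0\}$; feeding this back through the sandwich by $\mathbf{A}$ (which is injective on $\vth$-perturbations since $\Bt$ is invertible) and using that $\mathbf{B}_{12}$ is itself of the same thinning form, one gets that $\mathbf{B}_{12}$ must be a permutation, i.e.\ no actual thinning occurs, which for each parametric family reduces to $p_1=p_2$ (resp.\ $(\pp_1,\pf_1)=(\pp_2,\pf_2)$ for DS). The main obstacle I anticipate is \emph{not} the positive-semidefinite inequality — that is the soft data-processing argument above — but verifying the semigroup/composition identity $\mathbf{B}_Z(p_2)=\mathbf{B}_{12}\mathbf{B}_Z(p_1)$ uniformly across all seven methods, in particular pinning down the correct post-processing matrix $\mathbf{B}_{12}$ for SH (where the geometric head interacts with re-thinning) and confirming that the SEQ post-processing, being a deterministic function of the already-sampled set, commutes appropriately with the extra thinning so that X and X+SEQ behave consistently. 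Once those composition identities are in hand, the rest is the routine application of the lemma, and the equality analysis is a short unwinding.
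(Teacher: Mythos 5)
Your argument for every method except SH is, in substance, the paper's own proof: the composition identity $\mathbf{B}_Z(p_2)=\mathbf{B}_{12}\mathbf{B}_Z(p_1)$ is exactly the ``closure under sampling'' the paper invokes, and your Jensen/Cauchy--Schwarz computation is a self-contained proof of the Fisher data-processing inequality that the paper simply cites (Theorem~\ref{thm:fisher_dpi}). The paper carries the argument out on the full outcome $\mathbf{Y}$ (the SEQ vector plus SYN variable) and uses the sufficiency of the span statistic established in the proof of Theorem~\ref{thm:dpi_seq} to transfer the inequality to the actual observable; that sufficiency step is the rigorous version of your remark that the SEQ post-processing ``commutes appropriately'' with re-thinning, and you should make it explicit, since the re-thinning acts on physical packets which are no longer identifiable from the span alone.

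The genuine divergence is SH, and it is precisely the step you wave away. The paper states explicitly that SH has \emph{no} closure property --- re-applying SH to the output of SH is not SH --- and therefore abandons the DPI route entirely for that method, instead proving monotonicity directly from the tridiagonal structure of $\uncon^{-1}_{\text{SH}}$ together with the identity $\uncon^{-1}\onew=\vth$ (which shows the relevant row sums are independent of $\pp$ and reduces the claim to the sign of the derivative of $-a_{j,j+1}$). Your broader claim --- that \emph{some} column-stochastic $\mathbf{B}_{12}$ exists even though it is not of SH form --- is in fact true, but it is the crux of the whole theorem and is not ``easily checked'': you must exhibit the kernel. It can be done: given $j_1\ge1$ surviving packets, draw $M$ independently of $j_1$ with $\Pr(M=0)=\ppt/\ppo$ and $\Pr(M=m)=\ppt(\ppo-\ppt)\,q_{p,2}^{\,m-1}/\ppo$ for $m\ge1$, and output $j_2=\max(j_1-M,0)$; a geometric convolution then verifies that the composite head-skip is geometric with parameter $\ppt$, so that $\mathbf{B}_{\text{SH}}(\ppt)=\mathbf{B}_{12}\mathbf{B}_{\text{SH}}(\ppo)$. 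With that kernel supplied your unified argument does cover SH and is arguably cleaner than the paper's separate computation; without it, the SH case is a gap. One further caution: in your equality analysis the inequality $\mathbf{D}_1\ge\mathbf{B}_{12}^\T\mathbf{D}_2\mathbf{B}_{12}$ is only sandwiched against the $W$-dimensional range of $\mathbf{B}_Z(p_1)$ inside $\Real^{W+1}$, so ``equality in Jensen for all $\mathbf{z}$'' is not automatic and the unwinding to $\mathbf{B}_{12}$ being a permutation must handle that codimension-one restriction explicitly (the paper's own equality claim is equally terse, so this is a shared debt).
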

\smallskip
The proof (see Appendix \ref{app:comps}), is a straightforward consequence of closure under 
sampling for all methods except SH.

\smallskip
Next, we confirm that the use of sequence numbers increases Fisher information.
\begin{thm}
 \ $\uncon_{Z+\textrm{SEQ}} \ge \uncon_{Z}$ for each of $Z\!\!=$\,PS and $Z=$PS+SYN.
\label{thm:dpi_seq}
\end{thm}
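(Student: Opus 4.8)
The plan is to exhibit, for each choice of $Z$, a \emph{parameter-independent garbling} of the $Z{+}\mathrm{SEQ}$ observation into the $Z$ observation, and then appeal to the data-processing (convexity) inequality for Fisher information. Recall from (\ref{eq:mtx_j_decomp}) that $\uncon_Z(\vth)=\mathbf{B}_Z^\T\mathbf{D}(\mathbf{B}_Z\vth)\mathbf{B}_Z$, where $\mathbf{D}(\mathbf{d})$ is diagonal with $j$-th entry $1/d_j$. I claim there is a single $(W{+}1)\times(W{+}1)$ column-stochastic matrix $\mathbf{A}$, depending only on $\pp$, such that
\[
  \mathbf{B}_{\mathrm{PS}}=\mathbf{A}\,\mathbf{B}_{\mathrm{PS+SEQ}},
  \qquad
  \mathbf{B}_{\mathrm{PS+SYN}}=\mathbf{A}\,\mathbf{B}_{\mathrm{PS+SYN+SEQ}}.
\]
The matrix $\mathbf{A}$ encodes ``forget the sequence numbers'': given that the retained packets span a contiguous block of length $j'$, its two extreme packets are certainly sampled, while the $j'-2$ interior packets are, by the memorylessness of i.i.d.\ packet sampling, each sampled independently with probability $\pp$, \emph{irrespective of the original size $k$}. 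Hence the physically retained count $j$ has law $2+\mathrm{Bin}(j'-2,\pp)$ for $j'\ge 2$, and $j=j'$ for $j'\le 1$, giving entries $\binom{j'-2}{j-2}\pp^{\,j-2}\qp^{\,j'-j}$ for $2\le j\le j'$, ones on the $(0,0)$ and $(1,1)$ positions, and zeros elsewhere. Column-stochasticity is just the binomial theorem, and each factorization reduces, after substituting the explicit sampling matrices of Section~\ref{sec:methods}, to the hockey-stick identity $\sum_{t=j-1}^{k-1}\binom{t}{j-1}=\binom{k}{j}$.

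With the factorizations in hand, the theorem follows from a standalone lemma: if $\mathbf{B}_2=\mathbf{A}\mathbf{B}_1$ with $\mathbf{A}$ column-stochastic, then $\uncon(\mathbf{B}_2)\le\uncon(\mathbf{B}_1)$ in the sense of Definition~\ref{def:positive_def}. Indeed, for any test vector $\mathbf{v}$, put $\mathbf{u}=\mathbf{B}_1\mathbf{v}$ and $\mathbf{w}=\mathbf{B}_1\vth>0$ (entrywise), so $\mathbf{v}^\T\uncon(\mathbf{B}_1)\mathbf{v}=\sum_i u_i^2/w_i$ and $\mathbf{v}^\T\uncon(\mathbf{B}_2)\mathbf{v}=\sum_j (\mathbf{A}\mathbf{u})_j^2/(\mathbf{A}\mathbf{w})_j$. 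Row-by-row Cauchy--Schwarz gives $(\mathbf{A}\mathbf{u})_j^2/(\mathbf{A}\mathbf{w})_j\le\sum_i A_{ji}\,u_i^2/w_i$, and summing over $j$ while using that each column of $\mathbf{A}$ sums to one yields $\mathbf{v}^\T\uncon(\mathbf{B}_2)\mathbf{v}\le\mathbf{v}^\T\uncon(\mathbf{B}_1)\mathbf{v}$. Applying this with $(\mathbf{B}_1,\mathbf{B}_2)=(\mathbf{B}_{Z+\mathrm{SEQ}},\mathbf{B}_Z)$ settles both cases.

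The only real work is verifying the two matrix factorizations, i.e.\ confirming that ``span $j'$, then binomially refill the interior'' is exactly the map carrying the PS+SEQ (resp.\ PS+SYN+SEQ) column to the PS (resp.\ PS+SYN) column; this is a short combinatorial computation from the explicit forms already recorded in Section~\ref{sec:methods}. Note that the monotonicity lemma is purely algebraic and needs none of the explicit inverses $\Bt^{-1}$ or the diagonal CRLB formulas. I would also remark that this argument does \emph{not} extend to SH or to FS versus PS+SYN+FIN+SEQ, since those schemes do not arise from the others by a $\vth$-independent garbling.
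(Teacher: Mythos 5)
Your proof is correct, but it takes a genuinely different route from the paper's. The paper proves the theorem by introducing the full vector $\mathbf{Y}$ of sampled sequence numbers, invoking the Fisher data processing inequality (Theorem~\ref{thm:fisher_dpi}) on the two chains $\vth \to \mathbf{Y} \to L(\mathbf{Y})$ and $\vth \to \mathbf{Y} \to J(\mathbf{Y})$, and then computing $\Pr(\mathbf{Y}=\mathbf{y})$ explicitly so as to apply the Fisher--Neyman factorization and conclude that the span statistic $L$ is \emph{sufficient}, whence $\uncon_{L}=\uncon_{\mathbf{Y}}\ge\uncon_{J}$. You instead bypass $\mathbf{Y}$ entirely and exhibit $J$ directly as a $\vth$-independent garbling of $L$ via the column-stochastic factorization $\mathbf{B}_{Z}=\mathbf{A}\mathbf{B}_{Z+\mathrm{SEQ}}$, with $\mathbf{A}$ the ``binomially refill the interior of the span'' channel; your factorizations check out (the PS+SYN case is the hockey-stick identity $\sum_{j'}\binom{j'-2}{j-2}=\binom{k-1}{j-1}$, and the PS case uses $\sum_{j'}(k-j'+1)\binom{j'-2}{j-2}=\binom{k}{j}$, which is just counting $j$-subsets of $\{1,\dots,k\}$ by their span), and your Cauchy--Schwarz lemma is a correct self-contained proof of the discrete Fisher DPI in the paper's matrix formalism (\ref{eq:mtx_j_decomp}), needing only $d_j>0$, which holds under the standing assumptions $\theta_k>0$ and $0<\pp<1$. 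What the paper's route buys is the stronger by-product that the span $L$ loses \emph{no} information relative to the full sequence-number record $\mathbf{Y}$; what your route buys is elementarity (no appeal to \cite{Zamir98} or to sufficiency), a reusable algebraic criterion --- whenever one sampling matrix is a column-stochastic left-multiple of another the positive semidefinite comparison follows --- and a transparent explanation of why no such comparison is available for, say, PS versus PS+SYN, where no $\vth$-independent garbling exists.
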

\begin{proof}
We make use of the data processing inequality (DPI) for Fisher
information (see Theorem \ref{thm:fisher_dpi} in Appendix \ref{app:other_lemmas}), which states that if $\vth\rightarrow
Y\rightarrow X$ is a Markov chain, where $X$ is a deterministic function of $Y$, then
$\uncon_Y(\vth)\ge \uncon_X(\vth)$ which holds with equality if $X$ is a sufficient
statistic of $Y$. We first consider PS+SEQ and PS. Flows are selected randomly and are
represented as a random vector of SEQ numbers $\mathbf{V} = \{A,A+1,\ldots,A+K-1\}$ with
realization $\mathbf{v} = \{a,a+1,\ldots,a+k-1\}$. Here, $K$ represents the flow size
(realization $k$) while the $A$, the initial SEQ number (realization $a$) is
uniform over $\lbrack 0, N_a -1 \rbrack$. All operations on them are modulo
$N_a$, thus allowing wrap-arounds. $A$ is {\em independent} of $K$. We also assume that
$W \le N_a$ to avoid problems with multiple wrap-arounds.

After the PS process, we have the SEQ vector $\mathbf{Y} = \{Y_i, 1 \le i \le J\}$,
(realization $\mathbf{y} = \{y_i, 1 \le i \le j\}$). We define two statistics on
$\mathbf{Y}$: $J(\mathbf{Y})$, the actual number of raw packets sampled, and
$L(\mathbf{Y}) = \max_i(Y_i) - \min_i(Y_i) + 1$ (with a sample $l(\mathbf{y})$ and
$L(\mathbf{Y}) = 0$ if $\mathbf{Y}$ is empty), the inferred length of the sequence. The
statistic $L(\mathbf{Y})$ disregards $A$ since it takes the difference of SEQ numbers.
Note that both statistics are deterministic functions of $\mathbf{Y}$ and so form Markov
chains $\vth \to \mathbf{Y} \to L(\mathbf{Y})$ and $\vth \to \mathbf{Y}
\to J(\mathbf{Y})$. A straightforward application of DPI yields $\uncon_\mathbf{Y} \ge
\uncon_{L(\mathbf{Y})}$ and $\uncon_\mathbf{Y} \ge \uncon_{J(\mathbf{Y})}$.

We show that $L(\mathbf{Y})$ is a sufficient statistic of $\mathbf{Y}$ w.r.t.~$\vth$. We
use the Fisher-Neyman factorization theorem \cite[Theorem 6.5, p.~35]{Lehmann98} which
states that if the probability mass function of $\mathbf{Y}$ takes the form
\be
   \Pr(\mathbf{Y}=\mathbf{y}) = \mathrm{h}(\mathbf{y})\mathrm{g}(l(\mathbf{y}),\vth),
   \label{eq:suff_stat}
\ee
where $h$ is independent of the parameters $\vth$, then $L(\mathbf{Y})$ is a sufficient
statistic. Let $m \ge 0$ denote the number of unsampled packets before the first sampled
packet. If $J = j > 0$,
\begin{align*}
	&\Pr(\mathbf{Y}=\mathbf{y})\\
	&= \sum_{k=l}^W \theta_k \sum_{m=0}^{k-l} \Pr(A = y_1 - m)\qp^m\pp (\pp^{j-2} 
	\qp^{l-j}) \pp\qp^{k-l-m}\\
	&= \Pr(A = a) \sum^W_{k=l} \theta_k \sum^{k-l}_{m=0} \qp^{k-j}\pp^j \\
	&= \frac{1}{N_a} \pp^j \qp^{-j} \left(\sum^W_{k=l} (k-l+1)\qp^k \theta_k \right),
\end{align*}
since $A$ is uniform. For the case $j=0$,
\ben
   \Pr(\mathbf{Y}=\emptyset) = \sum^W_{k=1} \qp^k\theta_k .
\een
Clearly, each case satisfies (\ref{eq:suff_stat}), hence $L(\mathbf{Y})$ is a sufficient
statistic of $\vth$.

By the DPI, since $L(\mathbf{Y})$ is a sufficient statistic, $\uncon_\mathbf{Y} =
\uncon_{L(\mathbf{Y})}$. From the previous relation $\uncon_\mathbf{Y} \ge
\uncon_{J(\mathbf{Y})}$, we now have $\uncon_{L(\mathbf{Y})} \ge
\uncon_{J(\mathbf{Y})}$. But $J(\mathbf{Y})$ is equivalent to the statistic used in PS,
proving the result.

As for PS+SYN sampling, we define an additional random variable $S$ taking value 1 if
the SYN packet was sampled, 0 otherwise. $\mathbf{V}$ is defined as before. $\mathbf{Y}
= \emptyset$ if and only if $S = 0$. The same statistics $L(\mathbf{Y})$ and
$J(\mathbf{Y})$ are defined. Then, for $J = j > 0$,
\begin{align*}
\Pr(\mathbf{Y}=\mathbf{y})
&= \sum_{k=l}^W \theta_k \Pr(A = y_1) \pp (\pp^{j-2}\qp^{l-j})\pp \qp^{k-l}\\
&= \Pr(A = a) \sum^W_{k=l} \theta_k \qp^{k-j}\pp^j \\
&= \frac{1}{N_a} \pp^j \qp^{-j} \left(\sum^W_{k=l} \qp^k \theta_k \right),
\end{align*}
and for $j=0$,
\ben
\Pr(\mathbf{Y}=\emptyset) = \sum^W_{k=1} \qp\theta_k = \qp.
\een
Once again, each case satisfies (\ref{eq:suff_stat}), hence $L(\mathbf{Y})$ is a
sufficient statistic. Thus, by DPI, the same relationship $\uncon_{L(\mathbf{Y})} \ge
\uncon_{J(\mathbf{Y})}$ holds, with $J(\mathbf{Y})$ now equivalent to the statistic used
in PS+SYN.
\end{proof}

Intuitively this result seems obvious: if the additional information afforded by the
sequence numbers is available, we should certainly be able to do better by using it.
However, it is tempting to conclude that by the same logic $\uncon_{\textrm{PS}} >
\uncon_{\textrm{PS+SYN}}$ under the PPR normalization,
 since deciding to discard flows without a SYN is also a
deterministic transformation. However, the data processing inequality does not apply
here because some of the information used by PS+SYN (namely the SYN variable $S$),
although available to PS, is not used by it. Indeed we saw in the figures above that, counterintuitively, PS+SYN can
actually outperform PS in terms of the individual variances under PPR, which is a counter-example
to the more general positive semidefinite comparison.  Under ESR this is even more true as we saw from the middle plot in Figure~\ref{fig:longCompare}.

We now give another, even more surprising counter-example which teaches an important
lesson.
\smallskip
\begin{thm}
   $\uncon_{FS} \not\ge \uncon_{PS}$
\end{thm}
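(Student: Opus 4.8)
The plan is to refute the positive-semidefinite inequality by an explicit counterexample: a maximum flow size $W$, a parameter vector $\vth$ (with all $\theta_k\in(0,1)$), and sampling rates $\pf$ for FS and $\pp$ for PS for which $\uncon_{FS}(\vth)-\uncon_{PS}(\vth)$ fails to be positive semidefinite. The mechanism I would exploit is the near-degeneracy of the FS information: recall $\uncon_{FS}(\vth)=\qf\,\mathbf 1_W\mathbf 1_W^\T+\pf\,\diag(\theta_1^{-1},\dots,\theta_W^{-1})$, whereas for $\pp\in(0,1)$ the matrix $\uncon_{PS}(\vth)$ is \emph{strictly} positive definite (all $d_j>0$, and by Lemma~\ref{thm:pos_def_prod}(ii) with \eqref{eq:mtx_j_alt}, $\uncon_{PS}=\tfrac{1}{d_0}\mathbf b_0\mathbf b_0^\T+\tilde\uncon>0$). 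So $\uncon_{FS}$ degenerates towards the rank-one matrix $\mathbf 1_W\mathbf 1_W^\T$ as $\pf\to0$, while $\uncon_{PS}$ keeps full rank.

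Concretely, I would first fix a test vector $\mathbf z$ orthogonal to $\mathbf 1_W$, say $\mathbf z=\mathbf e_1-\mathbf e_2$. Then $\mathbf z^\T\mathbf 1_W\mathbf 1_W^\T\mathbf z=0$, hence $\mathbf z^\T\uncon_{FS}(\vth)\mathbf z=\pf\,\mathbf z^\T\diag(\theta_k^{-1})\mathbf z=\pf(\theta_1^{-1}+\theta_2^{-1})\to0$ as $\pf\to0$. On the other hand $\mathbf z^\T\uncon_{PS}(\vth)\mathbf z=(\uncon_{PS})_{11}-2(\uncon_{PS})_{12}+(\uncon_{PS})_{22}$ is a fixed strictly positive constant depending only on $\pp$ and $\vth$. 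Therefore, for every fixed $\vth$ and every fixed $\pp\in(0,1)$, once $\pf$ is taken small enough we get $\mathbf z^\T\bigl(\uncon_{FS}(\vth)-\uncon_{PS}(\vth)\bigr)\mathbf z<0$, which is exactly a certificate that $\uncon_{FS}\not\ge\uncon_{PS}$.

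To pin this down I would instantiate the smallest case $W=2$. Take $\vth=(\tfrac12,\tfrac12)$ and $\pp=\tfrac12$; then a direct evaluation of $\uncon_{PS}$ (from \eqref{eq:mtx_j_decomp}, or via $\Bt^{-1}$ in Theorem~\ref{thm:ps_Binv}) gives $\uncon_{PS}$ with both diagonal entries $7/6$ and off-diagonal entry $5/6$, so for $\mathbf z=(1,-1)$ one has $\mathbf z^\T\uncon_{PS}\mathbf z=2/3$, while $\mathbf z^\T\uncon_{FS}\mathbf z=4\pf$. Any $\pf<1/6$ (e.g.\ $\pf=1/10$) then yields $\mathbf z^\T(\uncon_{FS}-\uncon_{PS})\mathbf z=4\pf-2/3<0$, establishing the theorem. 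I would accompany this with the interpretive remark that when $\pf$ is small FS samples so few whole flows that its Fisher information essentially only re-encodes the already-known constraint $\sum_k\theta_k=1$ (the $\mathbf 1_W$ direction), so transverse to $\mathbf 1_W$ it is less informative than PS; the lesson is that the positive-semidefinite order is not monotone in the intuitive ranking of methods and hence cannot be the universal basis of comparison, forcing the fall-back to comparing CRLB diagonals.

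There is no real difficulty in the argument since it is a counterexample; the only delicate point is purely about which regime to choose. The rate-matched comparison $\pf=\pp$ is \emph{not} a counterexample (at least for small $W$ one verifies directly that $\uncon_{FS}(p)\ge\uncon_{PS}(p)$), so a valid counterexample necessarily gives FS a strictly smaller rate than PS. I would make that handicap explicit rather than hide it, since it is precisely the point: even the statistically best scheme, once its flow budget is throttled, fails to dominate the statistically worst one in the positive-semidefinite sense.
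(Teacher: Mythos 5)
Your counterexample is arithmetically sound and does refute the statement as literally written: for $W=2$, $\vth=(\tfrac12,\tfrac12)$, $\pp=\tfrac12$ the matrix $\uncon_{PS}$ indeed has diagonal entries $7/6$ and off-diagonal entry $5/6$, so $\mathbf z=(1,-1)^\T$ gives $\mathbf z^\T\uncon_{PS}\mathbf z=2/3$ while $\mathbf z^\T\uncon_{FS}\mathbf z=4\pf$, which is smaller for any $\pf<1/6$. The route, however, is genuinely different from the paper's. The paper insists on the rate-matched comparison $\pp=\pf=p$ (same $\vth$, test vector $\mathbf 1_2$), because that is what makes the result surprising and what the ensuing discussion of negative cross-terms relies on; your construction instead throttles $\pf\to0$ against a fixed $\pp$, which turns the failure of domination into an unsurprising near-rank-one degeneration and strips the theorem of its intended content. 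You are upfront about this, but your closing assertion that a valid counterexample \emph{necessarily} gives FS a strictly smaller rate than PS is unsupported: you verify only the single symmetric $W=2$ instance, whereas the paper claims (plausibly) that rate-matched counterexamples are easy to find for larger $W$ or skewed $\vth$. That sentence should be weakened or removed.

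That said, your parenthetical observation that the paper's own instance satisfies $\uncon_{FS}\ge\uncon_{PS}$ is correct and exposes a real defect in the paper's proof. By Lemma~\ref{lem:prod_rowstoch}, $\uncon\,\diag(\theta_1,\ldots,\theta_W)\onew=\onew$, i.e.\ $\uncon\vth=\onew$ for \emph{every} column-stochastic sampling matrix; hence $(\uncon_{FS}-\uncon_{PS})\vth=\mathbf{0}_W$, and with $\vth=\tfrac12\mathbf 1_2$ the quantity $\mathbf 1_2^\T(\uncon_{FS}-\uncon_{PS})\mathbf 1_2$ that the paper evaluates is identically zero for all $q$, not the negative polynomial displayed. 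Worse, for $W=2$ the difference matrix has rank at most one with kernel spanned by $\vth$, so it is positive semidefinite exactly when its trace is nonnegative, and at $p=q=\tfrac12$ that trace equals $3-\tfrac73=\tfrac23>0$: the paper's chosen instance is \emph{not} a counterexample at all. So your proof, read against the unqualified statement, is the one that actually closes the argument; but to recover the theorem in the fair, rate-matched form that the rest of Section~\ref{ssec:thcompare} needs, one must exhibit a different instance (a test vector not proportional to $\onew$, together with a $\vth$ and $W$ for which some direction transverse to $\vth$ goes negative), which neither your proposal nor the paper supplies.
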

\begin{proof}
Proof is by contradiction via counter-example. Let $W=2$ with $\theta_1 = \theta_2 =
1/2$, and let $\pp =\pf = p$. Evaluate $\mathbf{1}_2^\T (\uncon_{FS} -
\uncon_{PS})\mathbf{1}_2$, (the sum of each element of the matrix difference), which we
expect to be nonnegative by assumption. It can be shown that this reduces to
\begin{equation*}
   2 - \frac{2q(1+q^2)}{1+q} - 2\frac{1+3q-4q^3}{1+2q} - q^2
\end{equation*}
\noindent which can be negative, e.g.~when $q = 1/2$, a contradiction.
\end{proof}

Using Lemma~\ref{thm:positive_def_inv}(ii) one can show that $\uncon_{FS} \not\ge 
\uncon_{PS}$ implies $\uncon_{PS}^{-1} \not\ge \uncon_{FS}^{-1}$ which in turn implies
$\con^+_{FS} \not\le \con^+_{PS}$ (see Lemma \ref{lem:uncon_con}). 
Since we earlier showed that PS is enormously worse than FS for variances, this result
is surprising, particularly as experience shows that it is not difficult to find other
examples for much larger $W$.  We can explain this quandary as follows.  When we focus
on the variances in isolation we highlight the poor performance of PS. However, linear
combinations such as $\sum_k a_k\theta_k$ bring in cross terms, which for PS are negative
because of strong ambiguity in its observations. Strong correlations are a
bad feature of a covariance matrix of an estimator, however if they are negative, they can in some cases cancel other
positive terms resulting in a lower total variance. Hence, paradoxically, it is the
poor behaviour of PS which prevents $\uncon_{FS} \ge \uncon_{PS}$ from holding.

\smallskip


The last example reveals that a ranking of methods based on a positive semidefinite
comparison, although very desirable when true, is not possible in general. 
We therefore turn to a more generally applicable approach in the next section which 
we use for the remainder of the paper.

\subsection{Variance Comparisons: a Partial Ranking}
\label{ssec:var_comp}

In this section we focus on comparing methods via the diagonal elements of $\con^{+}$,
corresponding to the optimal variances of $\theta_i$ estimates, just as in the figures
above. As before, it is sufficient to consider the unconstrained covariance matrix
$\uncon^{-1}$ since  $\con^{+} = \uncon^{-1} - \vth\vth^\T$ by (\ref{eq:inverse_simple}).
To the extent possible, we seek to establish a hierarchy between methods based on diagonal
comparisons. Thanks to Theorem~\ref{thm:dpi_seq}, which applies in particular to diagonal
elements, we already have the answer in some cases.

We begin by comparing PS and PS+SYN.  The examples given so far showed PS+SYN as
enormously superior to PS.  However, this is not always the case. A counterexample under
the PPR normalization with $p = 0.05$ is given by
\be
  \vth = \{\theta_1, \theta_1, \theta_{3 \le k \le 10} = \theta_1/100 \}, \  \theta_1\!=\!0.4808,
    \label{eq:degenerate}
\ee
for which $D\approx1.69$.
As seen in Figure~\ref{fig:comp_degenerate}, PS outperforms PS+SYN for $\theta_9$ and 
$\theta_{10}$, but not otherwise. This makes sense given the bias of PS toward sampling
large flows. 

For these same parameters PS+SYN outperforms PS for all $\theta_k$ under ESR.
This is typically the case. For example, with the above parameter set, PS+SYN 
outperforms PS on all sampling rates.
While it may be true that PS outperforms PS+SYN under ESR in some rare situations, 
we believe this is not the case, and conjecture that PS+SYN defeats PS under ESR for all $\vth$. 
For small $p$, we can prove that this is the case.
\begin{thm}
Under PPR and ESR normalization, for small enough $p$, $(\uncon^{-1}_{\text{PS+SYN}})_{jj}
\le (\uncon^{-1}_{\text{PS}})_{jj}$ for any $\vth$,  $j=1,\ldots,W$.
\label{thm:ps_pssyn_comp}
\end{thm}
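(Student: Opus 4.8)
The plan is to compare the two closed-form diagonal entries (\ref{eq:jinv_ps}) for PS and (\ref{eq:jinv_pssyn}) for PS+SYN \emph{exactly} for small $p$, rather than only through their leading-order asymptotics (which in isolation compare the wrong pair of terms). I would first fix the normalization, since the theorem asserts both. Under PPR both schemes run at the same rate $\pp=p$ (Table~\ref{table:PPR}); under ESR, PS keeps $\pp=p$ while PS+SYN is inflated to the root of $\pp(\pp(D-1)+1)/D=p$, so $\pp_{\text{PS+SYN}}\to pD$ as $p\to0$ (Table~\ref{table:ESR}). I would then substitute the small-$\pp$ expansions $d_k=\pp^k\big(C_k+O(\pp)\big)$ of the outcome probabilities, with $C_k=\sum_{\ell\ge k}\binom{\ell}{k}\theta_\ell$ for PS and $\tilde C_k=\sum_{\ell\ge k}\binom{\ell-1}{k-1}\theta_\ell$ for PS+SYN, reducing each diagonal to a Laurent series in $\pp$ with $\vth$-dependent coefficients.

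The conceptual core, which I would isolate before any algebra, is the difference between the two subtraction (correction) terms, since these are exactly what the unconditional framework contributes through the observable $j=0$ outcome. For PS the correction is the full Schur complement $B^2/E$ in (\ref{eq:jinv_ps}); because PS evaporates in a flow-size-\emph{dependent} way, $d_0=\sum_k\qp^k\theta_k$, this term is large and cancels the most divergent part of the leading sum. For PS+SYN the correction is only $\tfrac{\qp}{\pp}\theta_j^2$, because its evaporation probability $d_0=\qp$ is flat in $k$ and hence uninformative about $\vth$. To expose this cancellation cleanly I would rewrite the PS diagonal as a weighted variance, $(\uncon^{-1}_{\text{PS}})_{jj}=\qp^{-2j}\big(\sum_k u_k\big)\,\Var_u\!\big(\{\tbinom{k}{j}\}\big)$ with weights $u_k=\qp^{2k}\pp^{-2k}d_k$; as $\pp\to0$ the weights concentrate on $k=W$, and the identity $\binom{W}{j}-\binom{W-1}{j}=\binom{W-1}{j-1}$ controls the residual spread.

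With both diagonals reduced to their first surviving $\vth$-dependent coefficients, the claim becomes an inequality between these coefficients, to be checked term by term in $j$ and uniformly over the open simplex $\{0<\theta_k<1,\ \sum_k\theta_k=1\}$, after inserting the appropriate relation for $\pp$. The normalization enters essentially at this stage: the ESR inflation $\pp_{\text{PS+SYN}}\approx pD$ rescales PS+SYN's entire Laurent series relative to PS's, so the ordering of coefficients under ESR and under PPR must be treated separately, and the scalar criterion $(\con^+)_{kk}=(\uncon^{-1})_{kk}-\theta_k^2$ shows the harmless $-\theta_k^2$ shift does not affect the comparison.

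The main obstacle is the determination of the sign of this leading difference, together with its robustness across all $\vth$ and all $j$. This is genuinely delicate because the informative $j=0$ outcome gives PS a Schur-complement cancellation that PS+SYN lacks, so the comparison is not between like-for-like terms: it pits PS's cancelled (lower-order) behaviour against PS+SYN's un-cancelled behaviour, rescaled by the normalization. I would expect the hardest endpoints to be the boundary index $j=W$, where the leading binomials coincide and one must descend an extra order in $\pp$, and near-degenerate $\vth$ with mass concentrated at the largest sizes, where the weighted variance governing PS is smallest. Securing a \emph{uniform} control of the first nonzero coefficient over these regimes — and thereby pinning the sign in the asserted direction for all sufficiently small $p$ — is where essentially all the difficulty resides.
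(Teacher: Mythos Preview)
Your plan diverges sharply from the paper's argument, and the divergence exposes a real problem rather than merely a stylistic difference. The paper's proof is short: it claims a lower bound $(\uncon^{-1}_{\text{PS}})_{jj}\ge T_1-q^{-2j}(W-j+1)W!$, with the subtracted term bounded \emph{independently of $p$}, then notes $T_1\ge T_2$ and that the PS+SYN correction $-\tfrac{\qp}{\pp}\theta_j^2\to-\infty$, forcing the asserted ordering. Your route is a genuine Laurent expansion of both diagonals, and in setting it up you correctly observe that the PS Schur-complement correction is \emph{not} $O(1)$: it is $\Theta(p^{-W})$ and cancels the leading term of $T_1$. That observation is accurate and is exactly why the paper's bound cannot hold.

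The gap in your proposal is the final step, where you assume the sign will fall ``in the asserted direction.'' It does not. Your own weighted-variance rewriting gives $(\uncon^{-1}_{\text{PS}})_{jj}=q^{-2j}\bigl(\sum_k u_k\bigr)\Var_u\!\bigl(\tbinom{k}{j}\bigr)$ with $u_k=q^{2k}p^{-2k}d_k$; as $p\to0$ the weights concentrate on $k=W$, the variance is $O(p)$, and the diagonal is $O(p^{-W+1})$. Meanwhile $(\uncon^{-1}_{\text{PS+SYN}})_{jj}=T_2-\tfrac{\qp}{\pp}\theta_j^2$ has no such cancellation: its leading term is $\tbinom{W-1}{j-1}^2\theta_W\,p^{-W}$, strictly larger order. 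Thus your expansion yields $(\uncon^{-1}_{\text{PS+SYN}})_{jj}\gg(\uncon^{-1}_{\text{PS}})_{jj}$ as $p\to0$, the \emph{reverse} of the theorem. A direct check at $W=2$, $\theta_1=\theta_2=\tfrac12$, $p=0.01$ (PPR) gives $(\uncon^{-1}_{\text{PS}})_{11}\approx 149$ versus $(\uncon^{-1}_{\text{PS+SYN}})_{11}\approx 4975$. So the step you flag as ``where essentially all the difficulty resides'' is not merely difficult; it is impossible as stated, because the inequality runs the other way. The paper's proof masks this by an incorrect $O(1)$ bound on the PS correction; your more careful asymptotics would uncover it rather than confirm the claim.
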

\begin{figure}[h]
  \PSPSSComp
  \caption{PPR comparison between PS and PS+SYN with $\pp = 0.05$ for a particular
distribution highly skewed to small flows.}
  \label{fig:comp_degenerate}
\end{figure}

\medskip
We now consider the above comparison after both methods have benefitted from the use of sequence numbers.
\smallskip

\begin{thm}
Under PPR and ESR normalization, for every $3 \le j \le W$,
$(\uncon^{-1}_{\text{PS+SYN+SEQ}})_{jj} \le (\uncon^{-1}_{\text{PS+SEQ}})_{jj}$.
\label{thm:pssynseq_comp}
\end{thm}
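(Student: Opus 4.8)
The plan is to make both diagonal entries fully explicit as linear forms in $\theta_j,\dots,\theta_W$, compare coefficients directly in the PPR case, and then reduce the ESR case to it by monotonicity in the sampling parameter. For PS+SEQ and $3\le j\le W$ the excerpt gives $(\uncon^{-1}_{\text{PS+SEQ}})_{jj}=\pp^{-4}\bl d_j+4\qp^2 d_{j+1}+\qp^4 d_{j+2}\br$ with $d_{j'}=0$ for $j'>W$; substituting $b_{jk}=(k-j+1)\pp^2\qp^{k-j}$ yields $d_j=\pp^2\sum_{m=0}^{W-j}(m+1)\qp^m\theta_{j+m}$. For PS+SYN+SEQ I would read off the DS diagonals at $\pf=\pp$, namely $(\uncon^{-1}_{\text{PS+SYN+SEQ}})_{jj}=\pp^{-4}\bl d_j+\qp^2 d_{j+1}\br-\tfrac{\qp}{\pp}\theta_j^2$, now with $b_{jk}=\pp^2\qp^{k-j}$, i.e.\ $d_j=\pp^2\sum_{m=0}^{W-j}\qp^m\theta_{j+m}$. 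Collecting powers of $\qp$, both entries reduce to $\pp^{-2}\theta_j$ plus a linear combination of $\theta_{j+1},\dots,\theta_W$ with explicit coefficients (and, for PS+SYN+SEQ, the extra term $-\tfrac{\qp}{\pp}\theta_j^2$).

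Under the PPR normalization both methods run at the common parameter $\pp=p$, since the normalizations of a method $X$ and of $X$+SEQ coincide and PPR$(p)=p$ holds for both PS and PS+SYN. Subtracting the two closed forms, the $\pp^{-2}\theta_j$ terms cancel, the term $-\tfrac{\qp}{\pp}\theta_j^2$ contributes $+\tfrac{\qp}{\pp}\theta_j^2\ge0$, and a short computation shows the coefficient of $\theta_{j+m}$ for each $m\ge1$ equals $\pp^{-2}\bl m\qp^m+(4m-1)\qp^{m+1}+(m-1)\qp^{m+2}\br$, which is nonnegative for every integer $m\ge1$ and $\qp\in[0,1]$. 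Hence $(\uncon^{-1}_{\text{PS+SEQ}})_{jj}-(\uncon^{-1}_{\text{PS+SYN+SEQ}})_{jj}$ is a sum of nonnegative terms, which proves the claim under PPR.

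For the ESR normalization, PS+SEQ still uses $\pp=p$, whereas PS+SYN+SEQ uses the parameter $\pp^\star$ determined by $\pp^\star(\pp^\star(D-1)+1)=pD$; since $p\le1$ and $D\ge1$, this forces $\pp^\star\ge p$. By Theorem~\ref{thm:dpi_sampling} applied to PS+SYN+SEQ, a larger parameter yields a (weakly) larger unconstrained Fisher information, and since $\uncon_1\ge\uncon_2>0$ implies $\uncon_1^{-1}\le\uncon_2^{-1}$, hence in particular for the diagonal entries, we get $(\uncon^{-1}_{\text{PS+SYN+SEQ}}(\pp^\star))_{jj}\le(\uncon^{-1}_{\text{PS+SYN+SEQ}}(p))_{jj}$. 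Chaining this with the PPR inequality evaluated at the common parameter $p$ gives the ESR statement. The only real obstacle I anticipate is the bookkeeping in the first two steps: aligning the three overlapping sums $d_j,d_{j+1},d_{j+2}$ (including the truncation of $d_{j+1},d_{j+2}$ near $j=W$) so that the coefficient $m\qp^m+(4m-1)\qp^{m+1}+(m-1)\qp^{m+2}$ emerges cleanly; once it does, its nonnegativity for $m\ge1$ is immediate.
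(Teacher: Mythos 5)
Your proposal is correct and follows essentially the same route as the paper: a direct comparison of the closed-form diagonal expressions under PPR, followed by reduction of the ESR case to PPR via the fact that the PS+SYN+SEQ parameter must increase to meet the same effective rate (Theorem~\ref{thm:dpi_sampling}). The only difference is cosmetic: the paper compares at the level of the sampled-flow proportions, using $d_{Q,j}\ge d_{SQ,j}$ together with $\qp^2\le 4\qp^2$ and dropping the nonpositive $-\qp\pp^{-1}\theta_j^2$ term, whereas you expand both diagonals into the $\theta$-basis and verify nonnegativity of each coefficient, which your computation of $m\qp^m+(4m-1)\qp^{m+1}+(m-1)\qp^{m+2}$ does correctly.
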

\smallskip

For each of $j=1$ and $2$ counterexamples can be found for certain $\vth$ and $p$, under
both PPR and ESR. For example PS+SEQ has lower variance for $\theta_1$ for the parameter
set (\ref{eq:degenerate}) with $p=0.05$ under both PPR and ESR, and for $\theta_2$ the
family $\vth = \{a,1-2a,a\}$ with $a<0.1$ gives examples for wide ranges of $p$, including
as $p\rightarrow0$ for $a$ small enough. The counterexamples occur in atypical situations
where $\theta_1$ or $\theta_2$ are large relative to other $\theta_j$, and can be excluded
if these are appropriately controlled. For example if $p<1/2$ it can be shown that when
$\theta_2/\theta_3 <\qp^2$, then PS+SEQ is worse under PPR (and hence ESR).
Given that PS+SYN+SEQ defeats PS+SEQ except for perhaps $\theta_1$ or $\theta_2$ under atypical conditions, 
in general PS+SYN+SEQ can be regarded as superior.

\smallskip
The picture emerging from the above comparisons is that PS+SYN+SEQ is clearly superior to 
PS and PS+SEQ. Since PS+SYN+SEQ is just a special case of DS, we now consider this family
in more detail. We focus on the ESR normalization which, apart from being far more 
important than PPR, is the key to our optimality result in Section~\ref{sec:compcompare}. 
The following is one of our main results, a detailed characterization of the performance 
of DS under ESR. It can be shown that an analogous result does not hold for PPR.

\begin{thm}
The diagonal elements $2 \le j \le W$ of $\uncon^{-1}_{\text{DS}}$ under the ESR 
normalization are monotonically decreasing in $\pp$. The property holds for $j=1$ iff the
condition
\be
   \theta_2 \ge \frac{D-1}{D}\theta_1 (1-\theta_1)
   \label{eq:mono_cond}
\ee
is satisfied. Also, monotonicity holds when $D = 1$ or $W=2$.
\label{thm:ds_mono}
\end{thm}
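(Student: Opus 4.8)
The plan is to substitute the ESR constraint (\ref{eq:ESRcurve}) into the closed-form diagonals (\ref{eq:DS_diags1})--(\ref{eq:DS_diags}) of $\uncon^{-1}_{\text{DS}}$, so that each $(\uncon^{-1}_{\text{DS}})_{jj}$ becomes a function of the single variable $\pp$ (with $p$, $D$ and $\vth$ fixed), and then to examine monotonicity in $\pp$ over the feasible range, which always contains the endpoint $\pp=1$ (i.e.\ FS). The three identities we need from $\pf=\pf(\pp;p)=pD/(\pp(D-1)+1)$ are $1/\pf=(\pp(D-1)+1)/(pD)$, $1/(\pf\pp)=\frac1{pD}(D-1+\tfrac1\pp)$, and $\qf/\pf=\frac1{pD}(\pp(D-1)+1-pD)$. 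Writing $\Sigma_j:=\sum_{k=j}^W\qp^{k-j}\theta_k$ (with $\Sigma_{W+1}=0$) and collecting terms, for $2\le j\le W$
\[
(\uncon^{-1}_{\text{DS}})_{jj}=\frac1{pD}\Bigl[\bigl(D-1+\tfrac1\pp\bigr)\bigl(\theta_j+\qp(1+\qp)\Sigma_{j+1}\bigr)-\bigl(\pp(D-1)+1-pD\bigr)\theta_j^2\Bigr],
\]
and, using $\sum_{k\ge2}\qp^{k-1}\theta_k=\qp\Sigma_2$, for $j=1$
\[
(\uncon^{-1}_{\text{DS}})_{11}=\frac1{pD}\Bigl[\bigl(\pp(D-1)+1\bigr)\theta_1+\bigl(D-1+\tfrac1\pp\bigr)\qp\,\Sigma_2-\bigl(\pp(D-1)+1-pD\bigr)\theta_1^2\Bigr].
\]

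For $2\le j\le W$ the claim follows by inspection of the first display, with no condition on $\vth$: the factor $D-1+\tfrac1\pp$ is positive and strictly decreasing; $\theta_j+\qp(1+\qp)\Sigma_{j+1}$ is a constant plus a product of the nonnegative, nonincreasing functions $\qp(1+\qp)=2-3\pp+\pp^2$ and $\Sigma_{j+1}$ (each summand $\qp^{k-j-1}\theta_k$ is nonincreasing in $\pp$), hence is positive and nonincreasing; so their product is strictly decreasing; and $-(\pp(D-1)+1-pD)\theta_j^2$ is nonincreasing because $D-1\ge0$. Thus $(\uncon^{-1}_{\text{DS}})_{jj}$ is strictly decreasing in $\pp$.

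For $j=1$ I would differentiate the second display. Using $\frac{d}{d\pp}\bigl[(D-1+\tfrac1\pp)\qp\bigr]=-(D-1)-\tfrac1{\pp^2}$ and $\Sigma_2'=-\sum_{k=3}^W(k-2)\qp^{k-3}\theta_k\le0$, this gives
\[
pD\,\frac{\partial}{\partial\pp}(\uncon^{-1}_{\text{DS}})_{11}=(D-1)\theta_1(1-\theta_1)-\Bigl(D-1+\tfrac1{\pp^2}\Bigr)\Sigma_2+\bigl(D-1+\tfrac1\pp\bigr)\qp\,\Sigma_2',
\]
where the last term is $\le0$ on $(0,1]$. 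The decisive step is the sharp bound $\bigl(D-1+\tfrac1{\pp^2}\bigr)\Sigma_2\ge D\theta_2$, valid for all $\pp\in(0,1]$ since $1/\pp^2\ge1$ and $\Sigma_2\ge\theta_2$, with equality exactly at $\pp=1$ (there $\qp=0$, so the $\Sigma_2'$ term also drops out and $\Sigma_2=\theta_2$). Hence $pD\,\partial_\pp(\uncon^{-1}_{\text{DS}})_{11}\le(D-1)\theta_1(1-\theta_1)-D\theta_2$ for all feasible $\pp$, with equality at $\pp=1$; so $(\uncon^{-1}_{\text{DS}})_{11}$ is decreasing on the feasible range precisely when $D\theta_2\ge(D-1)\theta_1(1-\theta_1)$, which is (\ref{eq:mono_cond}), and when (\ref{eq:mono_cond}) fails the derivative is strictly positive at the feasible point $\pp=1$, so by continuity monotonicity fails there. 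The two special cases follow immediately: for $D=1$ the term $(D-1)\theta_1(1-\theta_1)$ vanishes and the $j=1$ derivative is $\le0$ outright; for $W=2$ we have $\Sigma_2=\theta_2$, $D-1=\theta_2$ and $D=1+\theta_2$, so (\ref{eq:mono_cond}) reads $(1+\theta_2)\theta_2\ge\theta_1\theta_2^2$, i.e.\ $1+\theta_2\ge\theta_1\theta_2$, which always holds.

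The only real obstacle I anticipate is establishing that $\min_{\pp\in(0,1]}\bigl(D-1+\tfrac1{\pp^2}\bigr)\Sigma_2=D\theta_2$, attained at the FS endpoint $\pp=1$: this is exactly what makes the threshold in (\ref{eq:mono_cond}) sharp rather than merely sufficient, and it also forces one to verify that a left-neighborhood of $\pp=1$ always lies in the feasible range (it does, since $\pf\to p<1$ there). Everything else — the substitution, the term-by-term monotonicity argument for $j\ge2$, and the single differentiation and sign-check for $j=1$ — is routine.
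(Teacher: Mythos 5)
Your proof is correct, and for $j\ge 2$ it is essentially the paper's argument (substitute the ESR curve $\pf(\pp;p)$ into the diagonal formulas and check the sign of the $\pp$-dependence), merely replacing explicit differentiation by a term-by-term monotonicity check of the product $(D-1+\tfrac1\pp)(\theta_j+\qp(1+\qp)\Sigma_{j+1})$; your algebra for the $j=1$ derivative agrees exactly with the paper's equation (\ref{eq:varone_diff}) after regrouping. Where you genuinely diverge is in how the iff for $j=1$ is closed. The paper observes that the derivative is negative for small $\pp$ (dominated by the $1/\pp$ and $1/\pp^2$ terms), asserts that $\frac{d^2}{d\pp^2}(\uncon_{\text{DS}}^{-1})_{11}>0$ so the derivative changes sign at most once, and then evaluates at $\pp=1$ to extract (\ref{eq:mono_cond}); the convexity claim is stated as ``not hard to show'' but never verified. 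You instead prove the pointwise bound $pD\,\partial_\pp(\uncon^{-1}_{\text{DS}})_{11}\le(D-1)\theta_1(1-\theta_1)-D\theta_2$ on all of $(0,1]$, with equality attained at the feasible endpoint $\pp=1$ (where $\qp=0$ kills both the $\Sigma_2'$ term and the excess in $(D-1+\tfrac1{\pp^2})\Sigma_2\ge D\theta_2$). This gives sufficiency of (\ref{eq:mono_cond}) immediately and necessity by continuity at $\pp=1$, without any second-derivative computation; the price is that your argument says nothing about \emph{where} in $(0,1)$ the minimizer sits when (\ref{eq:mono_cond}) fails, which the paper's convexity route also delivers (it underlies the closed-form optimum (\ref{eq:varone_diff_sol}) quoted after the theorem). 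Both handle the special cases $D=1$ and $W=2$ identically, by direct verification of (\ref{eq:mono_cond}). Your version is the more self-contained proof of the theorem as stated; the paper's buys the extra structural information used in the surrounding discussion.
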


The above result shows, provided (\ref{eq:mono_cond}) is satisfied, that the variance
bounds for each $\theta_k$ under DS drop as $\pp$ increases. It follows that the optimal
point $\ppopt$ in the DS family lies  at $\ppopt=1$, that is flow sampling! The
superiority of FS clearly demonstrates that the problem of inverting from imperfect
sampled flows is so difficult that in the tradeoff between flow quality and quantity,
quality wins convincingly.  

The exception is for $\theta_1$ when (\ref{eq:mono_cond}) fails, in which case the optimal
DS lies to the left of $\pp=1$. However, this only occurs when $\theta_2/\theta_1$ is
extremely small, and even then in most cases $\ppopt \approx 1$ unless $D$ is also
very large. Consider the region $\pp \to 1$. By solving for the optimal $\pp^\star$ using
(\ref{eq:varone_diff}), we obtain (provided $D > 1$).
\be
    \pp^\star = \sqrt{\frac{\theta_2}{(D-1)\lbrack \theta_1(1-\theta_1)- \theta_2\rbrack}}.
     \label{eq:varone_diff_sol}
\ee
For $\pp^\star$ to be much smaller than 1, we require $D$ to
be large and the ratio $\theta_2/\theta_1$ very small (effectively, $\theta_2$ being
negligible). Such a scenario is very unlikely to occur in networks and in any case only
affects $\theta_1$, and so it is reasonable to assume that the optimal DS corresponds to
FS in practice.

\smallskip
Our next result compares the final method, SH, against FS. 
\begin{thm}
Under PPR and ESR normalization, for every $2 \le j \le W$, 
$(\uncon^{-1}_{\text{FS}})_{jj} \le (\uncon^{-1}_{\text{SH}})_{jj}$. For $j=1$ the
condition holds for any $\vth$ for $p$ sufficiently small. Sufficient conditions
independent of $p$ are either $W=2$, or for $W > 2$, 
\vspace{-1mm}
\be
   \theta_2 \ge \theta_1(1-\theta_1) .
   \label{eq:mono_condSH}
\ee

\label{thm:sh_comp}
\end{thm}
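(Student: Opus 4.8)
The plan is to reduce the statement to the explicit diagonal formulas already available: from \eqref{eq:fsinv_alt}, $(\uncon^{-1}_{\text{FS}})_{jj}=\theta_j/\pf+(1-1/\pf)\theta_j^2=\theta_j(1-\theta_j)/\pf+\theta_j^2$; and from \eqref{eq:SHone}--\eqref{eq:SHj}, $(\uncon^{-1}_{\text{SH}})_{11}=\theta_1+\pp^{-1}\sum_{k=2}^W\qp^{k-1}\theta_k$ while, for $2\le j\le W$, $(\uncon^{-1}_{\text{SH}})_{jj}$ equals $\theta_j/\pp$ plus a nonnegative remainder. Since neither FS nor SH discards any collected packet, their PPR and ESR normalizations coincide (cf.~Tables~\ref{table:PPR} and~\ref{table:ESR}), so it suffices to argue once: FS runs with $\pf=p$ and SH with the parameter $\pp=\pp(p)$ defined implicitly by the SH average-rate formula. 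Everything then hinges on the single fact $\pp\le p$.

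To obtain $\pp\le p$, rewrite that formula as $p/\pp=\tfrac1D\sum_{k=1}^W\theta_k\sum_{j=1}^k\qp^{k-j}j$. In the inner sum the $j=k$ term alone contributes $k$ and all other terms are nonnegative, so $\sum_{j=1}^k\qp^{k-j}j\ge k$ and hence $p/\pp\ge\tfrac1D\sum_k k\theta_k=1$. Cross-multiplying gives the equivalent refined form $\qp/\pp\ge(1-p)/p$, which I will use for $j=1$.

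For $2\le j\le W$ the conclusion is then immediate: $(\uncon^{-1}_{\text{SH}})_{jj}\ge\theta_j/\pp\ge\theta_j/p$, whereas $(\uncon^{-1}_{\text{FS}})_{jj}=\theta_j(1-\theta_j)/p+\theta_j^2\le\bigl(\theta_j(1-\theta_j)+\theta_j^2\bigr)/p=\theta_j/p$ since $p\le1$. For $j=1$, subtracting the two expressions gives $(\uncon^{-1}_{\text{SH}})_{11}-(\uncon^{-1}_{\text{FS}})_{11}=\pp^{-1}\sum_{k=2}^W\qp^{k-1}\theta_k-\tfrac{1-p}{p}\theta_1(1-\theta_1)$. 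Keeping only the $k=2$ term of the sum and using $\qp/\pp\ge(1-p)/p$ bounds this below by $\tfrac{1-p}{p}\bigl(\theta_2-\theta_1(1-\theta_1)\bigr)$, which is nonnegative exactly when \eqref{eq:mono_condSH} holds; the case $W=2$ is subsumed, since then $\theta_2=1-\theta_1\ge\theta_1(1-\theta_1)$. For arbitrary $\vth$ and small $p$, take $p\to0$: then $\pp\to0$, $\qp\to1$, $p/\pp\to c:=\tfrac1{2D}\sum_k k(k+1)\theta_k\ge1$, and $\sum_{k=2}^W\qp^{k-1}\theta_k\to1-\theta_1$, so $p$ times the difference tends to $(1-\theta_1)(c-\theta_1)>0$ (using $\theta_1<1\le c$), whence the difference is strictly positive for all sufficiently small $p$.

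The algebra is routine; the delicate points are extracting $\pp\le p$ (and its refinement $\qp/\pp\ge(1-p)/p$) from the somewhat opaque SH average-rate expression, and the $j=1$ case, where FS does not dominate term-by-term and one must invoke either the boundary condition \eqref{eq:mono_condSH} controlling $\theta_1$ against $\theta_2$ or the small-$p$ asymptotic. I expect the $j=1$ analysis to be the main obstacle, since $\theta_1$ is the component most hurt by SH's amplified length bias and its variance behaviour is correspondingly the most sensitive.
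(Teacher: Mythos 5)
Your proposal is correct and follows essentially the same route as the paper's proof: the same explicit diagonal formulas, the same reduction to the key fact $\pp\le p$ (which you derive from the SH rate formula rather than citing it from the normalization discussion), the same chain $(\uncon^{-1}_{\text{SH}})_{jj}\ge\theta_j/\pp\ge\theta_j/p\ge(\uncon^{-1}_{\text{FS}})_{jj}$ for $j\ge2$, and for $j=1$ the same retention of only the $k=2$ term together with $\qp/\pp\ge(1-p)/p$ (the paper phrases this as rearranging its condition (\ref{eq:actual}), you as bounding the difference directly). Your small-$p$ limit computation is a marginally more careful rendering of the paper's own limiting argument, but not a different method.
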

\vspace{-2mm}
The proof for (\ref{eq:mono_condSH}) involves bounds which are tight as $p\rightarrow1$.
Hence, when the condition is violated and $p$ is large enough, SH can indeed outperform FS 
for $\theta_1$. An example is given in Figure~\ref{fig:sh_comp_degenerate} with $W=6$ and
$\pp = 0.4351$, which is large and reasonably close to $p=0.5$. However, this effect is
difficult to see in distributions with larger $W$ as SH shifts most of its packet budget
to larger flows, resulting in $\pp$ being orders of magnitude smaller than $p$ and the 
bounds leading to the sufficient condition becoming very loose. In all cases of interest
therefore, FS can be regarded as superior to SH at all flow sizes.

\begin{figure}[b]
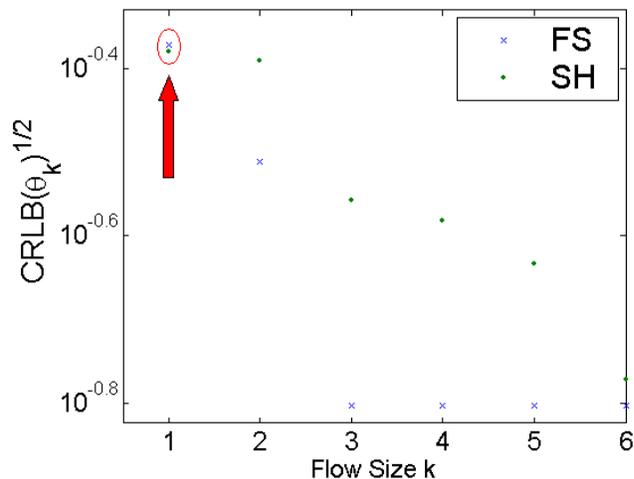

  \SHFSComp
  \caption{PPR and ESR comparison between FS and SH with $p = 0.5$ for a particular distribution highly skewed to small flows.}
  \label{fig:sh_comp_degenerate}
\end{figure}
\smallskip

Finally, we clarify the relationship between DS and SH more generally. DS is a two
parameter family whose performance varies in a wide range. Indeed,
Figure~\ref{fig:shortCompareESR} already shows that it can perform better than SH at the
`FS' end of its range, and worse at the other end. The important observation is that in
all cases where FS outperforms SH for a given $\vth$, it follows from continuity of the
CRLB in $(\pp,\pf(p))$ that there are members of the DS family other than FS which do
likewise. Theorem~\ref{thm:sh_comp} shows that this is true in almost all cases under ESR, 
allowing us to conclude that in general DS outperforms SH at all flow sizes. 

The natural counterexample to this general rule is when (\ref{eq:mono_cond}) is satisfied 
so that the best DS can do is given by FS, and yet $\vth$ is such that SH defeats FS (for 
$\theta_1$), implying that SH defeats all members of the DS family for $\theta_1$. This 
can occur if  (\ref{eq:mono_condSH}) is not satisfied and $p$ is large. Furthermore there
are cases where both SH and the optimal DS defeat FS, in which case the comparison is more 
complex. Since however this battle is contained within a very small and unimportant region 
of $(\vth,p,\pf(p))$ space, we do not characterise these counterexamples fully but instead 
conclude with the following result, which for the first time exhibits specific DS members 
(other than those in a neighbourhood of FS) which defeat SH.

\begin{thm}
Under ESR normalization with rate $p$, if $0\le p_{p,\text{SH}} \le \frac{pD - 1}{D-1}$,
then a sufficient condition for $(\uncon^{-1}_{\text{DS}})_{jj} \le
(\uncon^{-1}_{\text{SH}})_{jj}$ for every $1 \le j \le W$
is that $\ppd$ satisfies
\ben
   p_{p,\text{SH}} \le \ppd .
\een
\label{thm:ds_sh_performance}
\end{thm}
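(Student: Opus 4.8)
The plan is to use monotonicity to collapse the entire ESR--normalized DS family down to the single extreme member sitting on the boundary $\pfd=1$, to observe that there the DS and SH diagonals coincide exactly, and then to ride the monotonicity of SH's variance bounds in its own parameter down to $p_{p,\text{SH}}$. First I would record two order relations. Admissibility of the ESR--normalized DS scheme forces $\pfd=pD/\bigl(\ppd(D-1)+1\bigr)\le 1$, i.e.\ $\ppd\ge (pD-1)/(D-1)=:\pp^{\flat}$; combined with the hypothesis $p_{p,\text{SH}}\le (pD-1)/(D-1)$ this gives $p_{p,\text{SH}}\le\pp^{\flat}\le\ppd$ (so the stated extra hypothesis $p_{p,\text{SH}}\le\ppd$ is in fact automatic, and the result actually covers \emph{every} admissible DS member). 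Crucially, $\pp^{\flat}$ is precisely the value of $\ppd$ at which $\pfd=1$, hence $\qfd=0$, and $\pp^{\flat}\in(0,1)$ whenever the hypothesis is non-vacuous.

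Next I would regard the ESR--normalized family as the one--parameter curve $\ppd\mapsto\bigl(\ppd,\pfd(\ppd;p)\bigr)$ of \eqref{eq:ESRcurve} and invoke Theorem~\ref{thm:ds_mono}: for every $2\le j\le W$ the bound $(\uncon^{-1}_{\text{DS}})_{jj}$ is monotonically decreasing along this curve, so $\ppd\ge\pp^{\flat}$ yields
\[
   (\uncon^{-1}_{\text{DS}})_{jj}\ \le\ (\uncon^{-1}_{\text{DS}})_{jj}\Big|_{\ppd=\pp^{\flat}} \qquad (2\le j\le W),
\]
and the same inequality for $j=1$ whenever \eqref{eq:mono_cond} holds.

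Then I would evaluate the DS diagonals \eqref{eq:DS_diags1}--\eqref{eq:DS_diags} at $\ppd=\pp^{\flat}$: there $\pfd=1$ and $\qfd=0$, so the constraint corrections $-(\qfd/\pfd)\theta_j^2$ disappear and every factor $1/(\pfd\,\ppd)$ collapses to $1/\pp^{\flat}$. Reading off the resulting $\vth$--expressions and comparing them term by term with the SH diagonals \eqref{eq:SHone}--\eqref{eq:SHj}, one sees that $(\uncon^{-1}_{\text{DS}})_{jj}$ at $\ppd=\pp^{\flat}$ equals $(\uncon^{-1}_{\text{SH}})_{jj}$ evaluated at SH parameter $\pp^{\flat}$, for every $j$. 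Finally, inspection of \eqref{eq:SHone}--\eqref{eq:SHj} shows each $(\uncon^{-1}_{\text{SH}})_{jj}$ is decreasing in the SH parameter (every summand carries a factor $\pp^{-1}$ together with a nonincreasing power of $\qp$), so $p_{p,\text{SH}}\le\pp^{\flat}$ ensures that $(\uncon^{-1}_{\text{SH}})_{jj}$ evaluated at $\pp^{\flat}$ is no larger than $(\uncon^{-1}_{\text{SH}})_{jj}$ itself. Chaining the three inequalities proves $(\uncon^{-1}_{\text{DS}})_{jj}\le(\uncon^{-1}_{\text{SH}})_{jj}$ for all $j$ (and for $j=1$ under \eqref{eq:mono_cond}).

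The main obstacle is the case $j=1$ when \eqref{eq:mono_cond} fails: then the monotonicity of Theorem~\ref{thm:ds_mono} breaks, $(\uncon^{-1}_{\text{DS}})_{11}$ being U--shaped in $\ppd$ with interior minimiser $\pp^\star$ of \eqref{eq:varone_diff_sol}, so its maximum over the admissible range $[\pp^{\flat},1]$ is attained at an endpoint. The endpoint $\ppd=\pp^{\flat}$ is already handled above; the endpoint $\ppd=1$ is FS, where the needed $(\uncon^{-1}_{\text{FS}})_{11}\le(\uncon^{-1}_{\text{SH}})_{11}$ is exactly Theorem~\ref{thm:sh_comp} for $j=1$, valid when $W=2$, when \eqref{eq:mono_condSH} holds, or for $p$ small. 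The genuinely uncovered sliver --- $W>2$ with both \eqref{eq:mono_cond} and \eqref{eq:mono_condSH} failing and $p$ large --- is precisely the ``small and unimportant region'' the paper elects not to pin down, so in the write-up I would either append \eqref{eq:mono_cond} to the $j=1$ clause or route $j=1$ through Theorem~\ref{thm:sh_comp} at the FS endpoint, noting the proviso; the only routine bookkeeping left is to confirm that the boundary scheme $\pfd=1$ is an admissible DS member for which \eqref{eq:DS_diags1}--\eqref{eq:DS_diags} remain valid.
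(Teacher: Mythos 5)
The paper never actually proves Theorem~\ref{thm:ds_sh_performance} --- it is not among the proofs supplied in Appendix~\ref{app:comps} --- so your proposal can only be judged on its own terms. Its central observation is correct and checks out against the paper's formulas: at the admissibility boundary $\ppd=\pp^{\flat}:=(pD-1)/(D-1)$ of the ESR curve \eqref{eq:ESRcurve} one has $\pfd=1$ and $\qfd=0$, and the DS diagonals \eqref{eq:DS_diags1}--\eqref{eq:DS_diags} collapse exactly onto the SH diagonals \eqref{eq:SHone}--\eqref{eq:SHj} evaluated at packet rate $\pp^{\flat}$. Combined with Theorem~\ref{thm:ds_mono} (monotone decrease of the DS diagonals along the ESR curve for $2\le j\le W$) and the evident monotone decrease of the SH diagonals in their own parameter, the chain $(\uncon^{-1}_{\text{DS}}(\ppd))_{jj}\le(\uncon^{-1}_{\text{DS}}(\pp^{\flat}))_{jj}=(\uncon^{-1}_{\text{SH}}(\pp^{\flat}))_{jj}\le(\uncon^{-1}_{\text{SH}}(p_{p,\text{SH}}))_{jj}$ is valid and settles every $j\ge 2$. (In fact, since $\pfd\ppd\ge\pp^{\flat}\ge p_{p,\text{SH}}$ and $\qpd\le q_{p,\text{SH}}$ along the whole admissible curve, a direct term-by-term comparison of \eqref{eq:DS_diags} with \eqref{eq:SHj} gives $j\ge2$ without invoking Theorem~\ref{thm:ds_mono} at all.)

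The genuine gap is $j=1$, which the theorem claims unconditionally. Your argument covers it only when \eqref{eq:mono_cond} holds; otherwise you bound the U-shaped $(\uncon^{-1}_{\text{DS}})_{11}$ by its two endpoint values and must invoke Theorem~\ref{thm:sh_comp} at the FS endpoint, whose $j=1$ clause is itself conditional (small $p$, or $W=2$, or \eqref{eq:mono_condSH}). The term-by-term route also fails at $j=1$: writing $(\uncon^{-1}_{\text{DS}})_{11}=\theta_1^2+\theta_1(1-\theta_1)/\pfd+(\pfd\ppd)^{-1}\sum_{k\ge2}\qpd^{k-1}\theta_k$ against $(\uncon^{-1}_{\text{SH}})_{11}=\theta_1^2+\theta_1(1-\theta_1)+p_{p,\text{SH}}^{-1}\sum_{k\ge2}q_{p,\text{SH}}^{k-1}\theta_k$, the first nonconstant terms compare the wrong way (since $1/\pfd\ge1$), so one must show the surplus $\theta_1(1-\theta_1)\qfd/\pfd$ is absorbed by the deficit between the two sums. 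It is precisely here that the hypothesis $p_{p,\text{SH}}\le(pD-1)/(D-1)$ --- which bounds $\qfd/\pfd$ by $(1-p)/p$ and which your argument otherwise uses only to make $p_{p,\text{SH}}\le\ppd$ automatic --- has to do real work; you neither carry out that estimate nor show the hypothesis excludes the problematic region, and instead propose weakening the statement. As written, the proposal therefore establishes a strictly weaker result than the one claimed.
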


\noindent \textbf{Summary:} We have confirmed that a ranking of methods based on a
positive semidefinite comparison is impossible, since even a diagonal comparison does not 
yield a simple hierarchy. Nonetheless, ignoring the counterexamples which sometimes occur 
for the CRLB variance for $\theta_1$ and perhaps $\theta_2$, a clear overall picture 
emerges from the comparisons above. First, from Theorem~\ref{thm:dpi_seq},  the
utilization of sequence numbers yields consistent information theoretic dividends. In
particular, PS need not be considered further since it has extremely poor performance and
by Theorem~\ref{thm:dpi_seq} PS+SEQ is better. Theorems~\ref{thm:ps_pssyn_comp} and \ref{thm:pssynseq_comp}
show that the use of SYN sampling as a technique to eliminate bias is very powerful, 
which makes PS+SYN+SEQ the leading candidate, and focussed attention on its 
generalization, DS. Theorem~\ref{thm:ds_mono} then shows that DS out-performs PS+SYN+SEQ
under the ESR normalization (the most relevant one for estimation performance) provided 
$\pp>\pp^*$, and furthermore that FS is the favored member of DS.  
Theorem~\ref{thm:sh_comp} shows that FS is also superior to SH. In conclusion, FS is
the best sampling method for all flow sizes. This strongly motivates the adoption of 
sampling methods which approach FS as closely as possible, but which are efficiently
implementable. This refocuses attention back to DS as it enjoys both of these properties.

\section{Analysis of the SEQ Dividend}
\label{sec:seq_div}

The utilization of sequence numbers provides additional `virtual' packets for free. The
extent to which this improves our information on $\vth$ clearly depends on parameters. For
example, flows with only $k=1$ or $2$ packets are not helped at all. Intuitively (for
example, under PS) it is the flows for which $k\gg 1/\pp$ which will receive the most
benefit: since only the packets before the first and after the last physically sampled
packets will not be recovered using sequence numbers, and on average there are $2/\pp$ of
these since $1/\pp$ is the average gap between two physically sampled packets within a
flow. 

To quantify the benefit of sequence numbers better we define the \emph{effective packet
gain}, which is the ratio $r=\E[\tilde N_k]/\E[N_k]$, where $\tilde N_k$ and $N_k$ are
the number of SEQ assisted and unassisted (physical) packets sampled respectively from a 
flow of size $k$. By definition, $R = \tilde N_k / N_k\ge 1$ and so $r\ge1$, and
asymptotically the gain saturates at $r= 1/\pp$.  

For DS it can be shown (see \techrep\ for more details) that to obtain a ratio $\alpha$ of
this maximum gain, a flow must have a size of the order of $\frac{\qp(1+\alpha)}
{\pp(1-\alpha)}$, suggesting that flows must have a size of roughly $1/\pp$ or more before
the sequence number `information dividend' effectively switches on. Although intuitively
appealing, this is however quite misleading. There is a high variance associated with the
random variable $\tilde N_k$, which for large flows under PS+SYN+SEQ goes as 
\ben
   \Var(\tilde N_k) \approx \frac{4}{\pp} + k(k-2)\pp + (2k-3).
\een
Therefore the gain $R$ is not sharply concentrated around its mean and is highly dependent 
on the sampling parameter. 

More generally, the sequence number dividend `switch' is primarily about whether at least 
two packets are sampled. When $p$ is small this is a very rare event, but can nonetheless 
be responsible for most of the available information about flow size. For example in
Figure~\ref{fig:shortComparePPR} the probability of sampling two packets is of the order 
of $10^{-6}$, and yet the ratio of variance of PS to PS+SEQ is around 400 -- even rare 
dividends are worth having! Thus even for smaller to medium sized flows (`mice' and
`rabbits'), the use of sequence numbers provides a huge reduction in estimator variance,
as shown by numerical and empirical evidence throughout this paper.  

It is interesting to compare the SEQ dividend with SH, which implements a kind of extreme
SEQ effect in the sense that if only a single packet is sampled, then all subsequent ones 
will be. This is a more powerful `switch' than the SEQ mechanism which requires at least 
two packets.   For large flows however, packets inferred under DS approach this level of 
packet recovery since, because the SYN packet is sampled, only a geometric number of 
packets will be missed at the end of the flow, compared to a geometric number at the 
beginning under SH. Of course, SH achieves this solely from its real packet budget 
whereas DS does not, which explains why DS outperforms SH even in the latter's area of 
strength, namely very large flows.


\section{Computational Issues}
\label{sec:compcompare}

\subsection{Optimizing DS with Computational Constraints}

From our result on the monotonic behaviour of ESR normalized DS, what is optimal in
terms of information is clear: simply move down the ESR constraint curve to FS. However,
there are other constraints from the resource side which will constrain the parts of the
curve that will be accessible. The solution to the joint information/resource
optimization problem is therefore clear:  move as far down the ESR curve as possible.
Our task now is to determine those constraints and hence the region in the $(\pp,\pf)$
plane which is feasible.

The bottlenecks in the implementation of any sampling method are the memory access time
and memory size. CPU processing is included in the memory access, since the
CPU has to read out values from memory during lookup, performing modifications and
write-backs when necessary. These tasks constitute the main portion of what is required
of the CPU for measurement, as the measurement process is basically all about efficient
counting \cite{VargheseNA}. 
This can be done for example with a hybrid SRAM-DRAM counter achitecture \cite{Shah02,
Ramabhadran03, Zhao06}, where a small amount of (fast but expensive) SRAM is used for the
counter and its value periodically exported to a (cheaper but slower) DRAM store. With
about 10 ns access time for SRAM, such a counter can be implemented even for OC-768 links
which run at 40 Gbps.

We now consider how to optimize DS based on these
constraints. Regarding flows, let $T_{max}$ be the maximum flow table size measured in
terms of flow records and $\lambda_F$ be the flow arrival rate. As for packets, let $C$
be the capacity of the link, $P$ the size of the smallest packet ($\approx$ 40 bytes),
and $\tau$ the access time of memory (nominally DRAM).

Our simple analysis of the above bottleneck constraints is based on the following.
In terms of packet arrivals we assume the worst case, namely the
smallest size packets arriving back-to-back at line rate. By bounding
the processing in such a case we guarantee that the front line of packet processing
(which occurs at the highest speeds) is not under-dimensioned.
In terms of flow arrivals we assume the `average case' based on the
average number of active flows. This can easily be made more
conservative by replacing the average by some quantile to take into
account fluctuations in flow arrivals.
For simplicity, we ignore data export constraints from the measurement center to a central
data collection center. We also do not consider rate adaptation based on the traffic
condition although such schemes are compatible with DS.

Consider the processing of a single packet. The SYN bit is first tested to see which
sampling parameter will apply, the cost of this is negligible, and if a packet is not
sampled no further action is needed. Now consider the cost of a packet which is sampled.
Each SYN packet which is sampled is inserted into the flow table. No prior lookup is
necessary since it must be the first packet of its flow. Each non-SYN packet which is
sampled must first perform a lookup of its flow-ID in the flow table to see if that flow
is being tracked. If not it is discarded. The cost of this wasteful per packet
implementation of the `flow discarding' step (inherent to any SYN based method such as
DS) is not the bottleneck because the following case is the most expensive of all and is
the one we model: a non-SYN packet which is sampled and whose flow \textit{is} being
tracked requires both a lookup followed by an update. 

Using the above, the constraints are
\be
\pf \le \widehat \pf = \min{(\frac{T_{max}}{D\lambda_F}, \frac{P}{\tau C})}, \quad \pp \le
\widehat \pp = \frac{P}{2\tau C}.
\label{eq:comp_constraints}
\ee
The constraint $T_{max}/(D\lambda_F)$ ensures that the average number $D\lambda_F$ of
active flows does not exceed the flow table size. Constraint $P/(\tau C)$ provides the
worst case bound for per packet processing, for a single operation (insert or update).
The factor of 2 that appears in the denominator for $\widehat \pp$ is to account for the
worst case, in which a lookup \emph{and} update is needed. The analysis is based on the
use of a single per flow counter to track flow size. In practice, there may be more
counters needed to track other quantities of interest, necessitating tighter
constraints.

In the sequel, our examples consider traffic mixes that have manageable numbers of SYN
packets, so that from (\ref{eq:comp_constraints}), $\widehat \pf = T_{max}/(D\lambda_F)$.
This is done mainly to illustrate the relationship between the CRLB and parameter $\pf$.
Indeed, at lower link speeds (OC-48 for example), $\pf$ is determined by the number of
flows arriving at the measurement point, rather than packet processing time. Secondly,
to increase accuracy, from the previous discussions, we would like $\pf < \pp$,
i.e.~fewer, but better quality sampled flows.

The constraints form a simple region on the $(\pp,\pf)$ plane that is convex,
since it is rectangular with a corner at the origin. We want to minimize the variance for each $\theta_k$ subject to
these constraints. Since the ESR curve is convex with respect to $\pf$ and
$\pp$, the optimal value must lie on the vertex of the convex constraint set
\cite[Corollary 32.3.1, p.~344]{Rockafellar70}.
For this to hold, we require that the optimum for $\theta_k$ on the ESR curve (Section~\ref{ssec:var_comp}) 
is outside the constrained region.  This will be the case for all $k$ for any reasonable traffic mix.
Under such conditions, the solution is therefore $\pf = \widehat \pf$ and $\pp = \widehat \pp$.

We then have a relationship between the flow table size and link capacity and the
diagonal elements of $\uncon^{-1}_{\text{DS}}$. Since it is apparent from (\ref{eq:DS_diags})
that the diagonals are dominated by $1/\pf$, by substituting the optimal solution, we
have for $1 \le j \le W$,
\ba
\nonumber
\hspace{-4mm} (\uncon^{-1}_{\text{DS}})_{jj}\!\! & = &\!\! O(\frac{1}{\widehat \pf}) = O(
\frac{1}{T_{max}}).
\ea
Thus, with a larger table size (i.e.~more memory available), variance of the estimator
can be reduced.

As for the relation to capacity, we assume that $C$ is large and
use the approximation $\widehat \qp \approx 1$. This can
be justified considering that sampling is required beyond OC-3 link speeds. 
The diagonals become
\ba
\nonumber
\hspace{-4mm} (\uncon_{\text{DS}}^{-1})_{11}\!\! &=&\!\!
   \frac{\theta_1(1-\theta_1)}{\widehat \pf} + \frac{1}{\widehat \pf \widehat \pp}\sum_{k=2}^W \theta_k + \theta^2_1
\\
\nonumber
\hspace{-4mm}  (\uncon_{\text{DS}}^{-1})_{jj}\!\! &=&
    \frac{\theta_j(1-\widehat \pp\theta_j)}{\widehat \pf \widehat \pp} + \frac{1}{\widehat \pf \widehat
    \pp}\sum_{k=j+1}^W 2\theta_k + \theta^2_j \\
\nonumber
    \hspace{-4mm}  (\uncon_{\text{DS}}^{-1})_{WW}\!\! &=&
    \frac{\theta_W}{\widehat \pf \widehat \pp} - \frac{1}{\widehat \pf}\theta_W^2 + \theta^2_W
\ea
which are approximately inversely proportional to $\widehat\pp$, and hence are $O(C)$. 
We conclude that the variance of DS is inversely proportional to memory usage and proportional to
the link capacity.

As an example, consider an OC-192 link with $D\lambda_F = 1 \times 10^6$ flows/sec,
$T_{max} = 100,000$ and an access time of 100 ns for DRAM. Let us assume a further 
100 ns is required for further processes (e.g. sequence number information). Thus, $\widehat\pf =
0.1$ and $\widehat\pp = 0.08$. With our numerical evaluation on the Leipzig-II trace (discussed
in the following section), this would imply that the trace contain at least $9.5 \times 10^8$ original flows in
to achieve a standard deviation of $10^{-8}$ or better. If we compare this to PS
with a sampling rate of $\pp = 0.1$, we require a staggering $5.6 \times 10^{44}$ flows
to achieve the same performance!

To observe the dependence on memory and link capacity, now consider an OC-768 link
instead with $T_{max} = 10,000$. This time, we have $\widehat\pf = 0.01$ and $\widehat\pp = 0.02$. The
number of flows required to achieve the same standard deviation now increases to $8.8
\times 10^9$ which is still orders of magnitude better than PS. 


If we consider the SRAM-DRAM architecture discussed earlier, with state-of-the-art SRAM
having access times of about 5-10 ns, this can only increase the value of $\widehat\pp$.
Ideally, we still keep $\pf$ low to reduce the number of flow entries while increasing
$\pp$ to approach FS performance. Continuing on from the previous example of the OC-192
link and assuming $\tau=5$ ns, $\widehat\pp$ can now approach 1. For the OC-768 example, 
$\widehat\pf =0.01$ while $\widehat\pp = 0.8$, giving huge performance gains. 


\subsection{Other Issues}

Checking for the SYN bit can be done in a simple way by testing the
payload type in the IP header and then verifying the presence of the SYN bit. This takes
much less effort than deep packet inspection systems. Furthermore, sampling decisions
can be implemented using precalculated values, much faster than a straight random
number generator implementation.

We address the problem of flow table overload by using the method proposed by Estan et
al.~\cite{Estan04} by defining discrete measurement time bins, where sampled flows are
exported at the end of each time bin. Consequently, overload of the flow table can be
avoided at the cost of increased export rate.

\section{Case Study on Internet Data}
\label{sec:data}

In this section we test the performance of DS on two traffic traces under the ESR
normalization, and compare it to FS and its closest competitor, SH.
We also further examine the benefit of sequence numbers both in a fully empirical setting, and an idealised one.
Since we work with real data, we require an unbiased estimator for each method. 
We propose closed-form estimators that achieve the CRLB asymptotically. We test their
statistical performance by examining how close they approach the empirical CRLB
on the traces we used.
In general, the empirical results match the theoretical ones from earlier sections well.

\subsection{Data Traces}

We used two publicly available network traces,  Leipzig-II \cite{Leipzig} and
Abilene-III \cite{AbileneIII}, which are each collections of anonymized packet headers
passing through a single router. Since the raw traffic is at packet level, specialized software such 
as CoralReef \cite{CoralreefCAIDA} are required to reconstruct flows. We modified the software
 so that only TCP flows were analyzed.
Summary statistics of these traces appear in Table~\ref{table:leipzig_stats}. Both
traces are unidirectional, presenting problems when constructing a sequence number
function, as elaborated later. 

There are many flows whose SYN packet is missing as they began before 
the measurement interval. To be consistent with our model assumptions, we remove these.
A similar situation was encountered in \cite{duffsamplingToN2005}. 
Fortunately, such flows are in the minority, for example with Leipzig-II they account for only 18\%.
Furthermore, when sampling the trace, we assume an infinite timeout, that is, flows are
expired at the end of the measurement interval. This is in accordance with the fact that we do not 
consider flow splitting. 
In practice, timeouts would split a flow, resulting underestimation of flow size.

Note that some flows may be malformed and have one or more SYN packets within the flow.
Potentially, DS may sample one of these packets and treat it as a new flow, instead of
part of a longer one. However, such cases are rare. In Leipzig-II, there are only 468 of
such flows, or $\approx 0.02\%$, and there are none in Abilene-III. These flows were left in the trace.
\begin{table}[h]
\begin{tabular}{|p{1.3cm}|p{1.3cm}|p{2cm}|p{1.5cm}|p{0.8cm}|}
\hline
\small \centering \textbf{Trace} & \centering \small \textbf{Link Capacity} &
\centering \small \textbf{\!\!Active TCP Flows\!\!} & \centering\small \textbf{Duration
(hh:mm:ss)} & \centering \small $D$
\tabularnewline
\hline \hline
\centering Leipzig-II & \centering 50 Mbps & \centering 2,277,052 & \centering 02:46:01
& \centering 19.76
\tabularnewline
\centering Abilene-III & \centering 10 Gbps & \centering 23,806,285 & \centering 00:59:49
& \centering 16.12
\tabularnewline 
\hline
\end{tabular}
\caption{Summary of the traces used}
\label{table:leipzig_stats}
\end{table}

The value of $D$ in Table \ref{table:leipzig_stats} is the actual average 
flow size. In our experiments, we truncate Leipzig-II to $W=1000$ and Abilene-III to
$W = 2000$, resulting in $D$ being 1.94 and 7.65 packets for each trace respectively.  
Truncation is performed by discarding all flows with size above $W$, which ensures that the assumption $\theta_k > 0$, $k = 1,2,\ldots$, $W$ is met.

\begin{figure}[t]
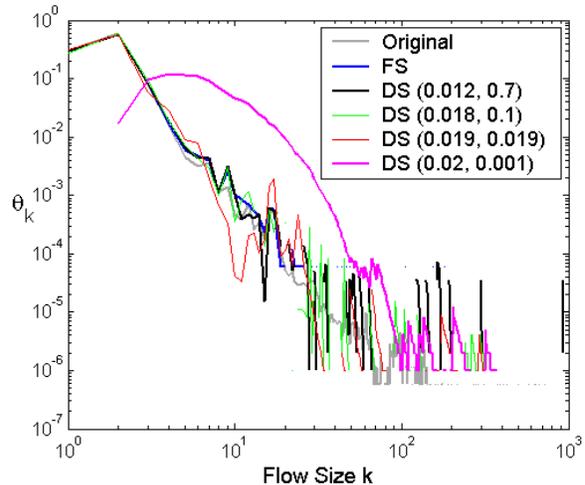

  \ESRcompareone
  \caption{Comparison of DS on Leipzig-II, using $W=1000$ and varying parameters
	under ESR normalization with $p = 0.01$.}
  \label{fig:esrcompare1}
\end{figure}

\begin{figure}[t]
  \ESRcomparetwo
  \caption{Comparison of DS on Leipzig-II, using $W=1000$ and varying parameters
	under ESR normalization with $p = 0.001$.}
  \label{fig:esrcompare2}
\end{figure}

\begin{figure}[t]
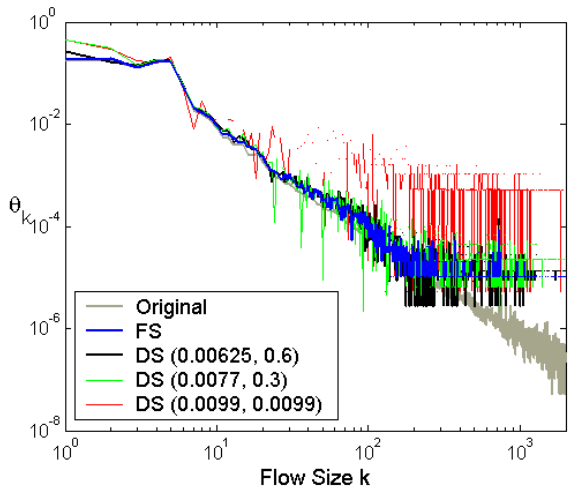

  \ESRabiper
  \caption{Comparison of DS on Abilene-III, using $W=2000$ and varying parameters
	under ESR normalization with $p = 0.005$.}
  \label{fig:esrabiper}
\end{figure}

\begin{figure}[b]
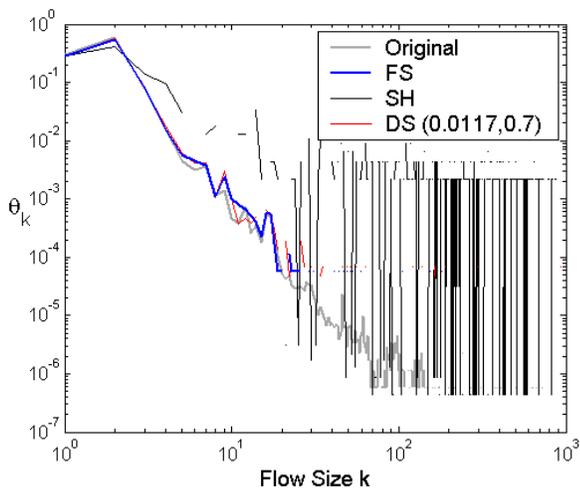

  \ESRshcompare
  \caption{Comparison of FS, SH ($p_r = 0.0002$) and DS ($\pf = 0.0117$, $\pp = 0.7$)
	on Leipzig-II, using $W=1000$ and varying parameters under ESR normalization with $p =
	0.01$.}
  \label{fig:esrleipsh}
\end{figure}

\begin{figure}[b]
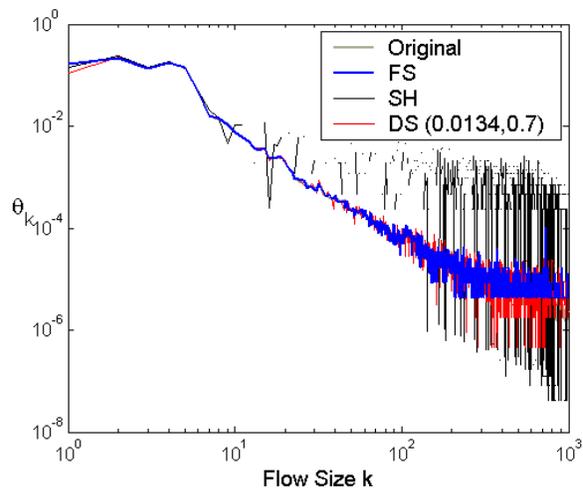

  \ESRabish
  \caption{Comparison of FS, SH ($p_r = 0.00018$) and DS ($\pf = 0.0134$, $\pp = 0.7$)
	on Abilene-III, using $W=1000$ and varying parameters under ESR normalization with $p
	= 0.01$.}
  \label{fig:esrabish}
\end{figure}

\subsection{Closed-Form Unbiased Estimators}
\label{ssec:closed_form_mle}

The analysis of earlier sections were centered on the CRLB, which bounds what is achievable by any unbiased
estimator, but it is neither a construction of such an estimator nor even a proof of its
existence. When working with real data an actual estimator is required, ideally one that
achieves the previously computed CRLB. The maximum likelihood estimator (MLE) is an
attractive candidate as it is {\em asymptotically efficient},
guaranteeing that its performance approaches the CRLB asymptotically \cite{KayVolI}. 
However, the MLE, although unbiased asymptotically, is in general biased, especially
in the small sample regime.

We propose the following estimator for FS and DS (or other SYN based methods such as 
PS+SYN):
\be
   \veth = \frac{\mathbf{\tilde B}^{-1}}{\nf} [M'_1,M'_2,\ldots,M'_W]^\T .
   \label{eq:alt_mle}
\ee
This estimator has a natural interpretation as an empirical histogram based on observed sampled flow counts, inverted by
$\mathbf{\tilde B}^{-1}$ to the original flow distribution. 
It is easy see that it is unbiased since
$\E\lbrack \veth \rbrack = \mathbf{\tilde B}^{-1} \mathbf{\tilde B} \vth = \vth$.  The
matrix $\mathbf{\tilde B}^{-1}$ can be computed explicitly for these methods, as shown in
Section~\ref{sec:methods}.

Our estimator is further motivated by its relation to the MLE, as we now show (all
proofs are deferred to Appendix~\ref{sec:mle}).
\begin{thm}
If the sampling matrix of a method satisfies $\mathbf{b}_0 = q \mathbf{1}_W$, 
then the MLE is
\be
  \veth = \mathbf{\tilde B}^{-1} \left( \frac{p}{\sum^W_{j = 1} M'_j} [M'_1,M'_2, \ldots,M'_W]^\T   \right).
  \label{eq:closed_form}
\ee 
\label{thm:mle_closed}
\end{thm}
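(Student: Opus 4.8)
The plan is to attack the constrained maximum-likelihood problem directly. Since flows are sampled independently and every flow's outcome $j\in\{0,\dots,W\}$ is observed, the likelihood of the data is $\prod_{i=1}^{\nf} d_{m'_i}=\prod_{j=0}^W d_j^{M'_j}$, so up to an additive constant the log-likelihood is $\ell(\vth)=\sum_{j=0}^W M'_j\log d_j$ with $\mathbf d=\mathbf B\vth$, to be maximised over the simplex $\{\vth:\onew^\T\vth=1,\ 0<\theta_k<1\}$. The observation that unlocks everything is that the hypothesis $\mathbf b_0=q\onew$ forces $d_0=\mathbf b_0^\T\vth=q\,\onew^\T\vth=q$, a constant on the constraint set; hence the $j=0$ term is irrelevant and we need only maximise $\tilde\ell(\vth)=\sum_{j=1}^W M'_j\log(\Bt\vth)_j$. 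Being a nonnegatively weighted sum of concave functions of $\vth$, $\tilde\ell$ is concave, and strictly so because $\Bt$ is invertible; combined with the standing assumption $0<\theta_k<1$ (so the inequality constraints stay inactive), this guarantees that a stationary point of the Lagrangian for the single equality constraint is \emph{the} unique MLE.

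Next I would carry out the Lagrangian computation. Writing $g(\vth)=\onew^\T\vth-1$ and differentiating, $\nabla_{\vth}\tilde\ell=\Bt^\T\mathbf u$ where $\mathbf u=\big(M'_1/d_1,\dots,M'_W/d_W\big)^\T$, so stationarity reads $\Bt^\T\mathbf u=\mu\onew$ for a multiplier $\mu$. To invert this I use column-stochasticity of $\mathbf B$ together with $\mathbf b_0=q\onew$: each column of $\mathbf B$ sums to one, so $\onew^\T\Bt=(1-q)\onew^\T=p\,\onew^\T$ with $p=1-q$, i.e. $\Bt^\T\onew=p\,\onew$, whence $(\Bt^\T)^{-1}\onew=p^{-1}\onew$. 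Applying $(\Bt^\T)^{-1}$ to the stationarity equation gives $\mathbf u=(\mu/p)\onew$, i.e. $M'_j/d_j=\mu/p$ for every $j$, so $d_j=(p/\mu)M'_j$.

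Finally I pin down $\mu$ from the constraint. The same identity $\onew^\T\Bt=p\,\onew^\T$ shows that on the constraint set $\sum_{j=1}^W d_j=\onew^\T\Bt\vth=p\,\onew^\T\vth=p$; substituting $d_j=(p/\mu)M'_j$ gives $p=(p/\mu)\sum_{j=1}^W M'_j$, hence $\mu=\sum_{j=1}^W M'_j$. Thus the MLE of $\tilde{\mathbf d}=\Bt\vth$ is $\big(p/\sum_{j=1}^W M'_j\big)[M'_1,\dots,M'_W]^\T$, and since $\vth=\Bt^{-1}\tilde{\mathbf d}$ is a bijective reparametrisation the MLE transforms accordingly, giving $\veth=\Bt^{-1}\big(\frac{p}{\sum_{j=1}^W M'_j}[M'_1,\dots,M'_W]^\T\big)$, which is the claim. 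The one point requiring care — the main obstacle — is legitimising the passage from ``stationary point of the equality-constrained Lagrangian'' to ``the MLE'': this rests on the concavity of $\tilde\ell$ and on the interiority guaranteed by the assumption $0<\theta_k<1$ (and, for the solution to be well defined with $\mathbf u$ finite, on $M'_j>0$ for all $j$, which holds with probability tending to one); without interiority the MLE could in principle lie on the boundary of the simplex and the clean closed form would require an additional projection step.
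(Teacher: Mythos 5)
Your proof is correct and follows essentially the same route as the paper's: a Lagrangian treatment of the equality-constrained log-likelihood, exploiting column-stochasticity together with $\mathbf{b}_0=q\onew$ to get $\Bt^\T\onew=p\onew$ and hence invert the stationarity condition. Your two small streamlinings --- discarding the $j=0$ term up front since $d_0=q$ is constant on the simplex, and fixing the multiplier via $\sum_{j\ge1}d_j=p$ rather than multiplying the stationarity equation by $\vth^\T$ --- are equivalent to the paper's steps and change nothing of substance.
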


By rewriting $\sum^W_{j = 1} M'_j$ as $\nf(1 - M'_0/\nf)$
and observing that $M'_0/\nf$ converges to $q$ with high probability 
(this follows from Hoeffding's inequality \cite[p.~303]{MitzenmacherProb05}), 
the MLE (\ref{eq:closed_form})  tends to our  estimator (\ref{eq:alt_mle}), which therefore 
approaches the CRLB asymptotically. Our proposed estimator only obeys the constraint 
$\onew^\T \veth = 1$ on average, while the estimator in Theorem \ref{thm:mle_closed} always obeys 
the constraint.
(Note that (\ref{eq:closed_form}) remains a viable estimator in the conditional framework, see 
Remark \ref{rem:conditional_mle} in Appendix \ref{sec:mle}.) 

\smallskip
\noindent The following applies to SH.
\begin{thm}
The MLE for SH is given by
\be
  \veth_{\text{SH}} = \frac{\mathbf{\tilde B}_{\text{SH}}^{-1}}{\nf}\cdot[p(M'_0+M'_1),M'_2,\ldots,M'_W]^\T.
  \label{eq:mle_sh_closed_form}
\ee
\label{thm:uncond_sh_mle}
\end{thm}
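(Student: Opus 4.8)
The plan is to derive the maximum likelihood estimator for SH directly from the likelihood function, exploiting the special structure of the SH sampling matrix established in Section~\ref{ssec:constrained_fim} and the explicit inverse $\Bt^{-1}_{\text{SH}}$ given in Theorem~\ref{thm:sh_Binv}. The likelihood for a single sampled flow is $f(j;\vth)=d_j$, so the log-likelihood for the full sample of $\nf$ flows is $\sum_{j=0}^W M'_j \log d_j$, where $d_0=\bt\vth$ and $d_j$ for $j\ge1$ are the entries of $\tilde{\dt}=\Bt_{\text{SH}}\vth$. The key observation for SH is that $b_{0k}=\qp^k$, so $d_0$ is \emph{not} of the form $q\,\onew^\T\vth$ with constant $q$; the Corollary~\ref{cor:inv_uncon_syn} hypothesis fails. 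Hence the SH case genuinely differs from Theorem~\ref{thm:mle_closed} and the pooling of $d_0$ with $d_1$ must be handled explicitly.

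First I would parametrize the optimization using $\tilde{\dt}=(d_1,\ldots,d_W)^\T=\Bt_{\text{SH}}\vth$ as the free variable (a valid reparametrization since $\Bt_{\text{SH}}$ is invertible), so $\vth=\Bt^{-1}_{\text{SH}}\tilde{\dt}$, and then express $d_0$ in terms of $\tilde{\dt}$. Here I would use the column-stochasticity of $\mathbf{B}$: summing the SH matrix column by column, or equivalently using the relation $d_j=\pp\theta_j+\qp d_{j+1}$ already noted after \eqref{eq:SHj} together with $d_0 = \sum_k \qp^k\theta_k$, one finds a linear relation. Concretely, from the recursion and telescoping (with $d_{W+1}=0$) one gets $d_0 = \qp d_1/\pp$ — this is exactly the identity $d_1=\frac{\pp}{\qp}d_0$ used in the remark following Theorem~\ref{thm:sh_Binv}. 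Therefore $d_0+d_1=(1+\qp/\pp)\,d_1 \cdot$... more precisely $d_1 = \frac{\pp}{\qp}d_0$ gives $d_0 = \frac{\qp}{\pp}d_1$, so $d_0$ is a \emph{linear} function of the single coordinate $d_1$, while $d_2,\ldots,d_W$ remain free subject only to the overall normalization $\sum_{j=0}^W d_j = 1$.

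The main step is then the constrained maximization of $M'_0\log\!\big(\tfrac{\qp}{\pp}d_1\big) + \sum_{j=1}^W M'_j\log d_j$ over $(d_1,\ldots,d_W)$ subject to $\tfrac{\qp}{\pp}d_1 + \sum_{j=1}^W d_j = 1$, i.e.\ $\tfrac{1}{\pp}d_1 + \sum_{j=2}^W d_j = 1$. Grouping the $d_1$ terms, the objective in $d_1$ is $(M'_0+M'_1)\log d_1 + \text{const}$, so this is a standard weighted-multinomial MLE: introducing a Lagrange multiplier and solving the stationarity equations, the optimal values are proportional to the effective counts, with the weight $1/\pp$ on the $d_1$ coordinate absorbing appropriately. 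Carrying this through yields $\hat d_1 = \pp\,(M'_0+M'_1)/\nf$ and $\hat d_j = M'_j/\nf$ for $j\ge2$ (the normalization constant works out to $\nf$ because $\pp\cdot\tfrac{1}{\pp}(M'_0+M'_1) + \sum_{j\ge2}M'_j = M'_0+\sum_{j\ge1}M'_j = \nf$). Substituting back via $\veth_{\text{SH}} = \Bt^{-1}_{\text{SH}}\hat{\tilde{\dt}}$ gives exactly \eqref{eq:mle_sh_closed_form}.

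The step I expect to be the main obstacle is getting the bookkeeping of the $d_0$–$d_1$ coupling exactly right under the normalization constraint — in particular verifying that the Lagrange-multiplier solution does land on $\hat d_1 = \pp(M'_0+M'_1)/\nf$ with the $\pp$ in precisely that place, rather than, say, $(M'_0+M'_1)/\nf$ or some other rescaling. A clean way to sidestep sign/placement errors is to note that, once $d_0$ is eliminated, the problem is literally ``maximize $\sum_j c_j \log x_j$ subject to $\sum_j a_j x_j = 1$'' whose solution is $x_j = c_j/(a_j \sum_i c_i)$; plugging in $c_1 = M'_0+M'_1$, $a_1 = 1/\pp$ and $c_j = M'_j$, $a_j=1$ for $j\ge2$ gives the result immediately. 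One should also check the boundary/regularity conditions (all $M'_j$ positive with high probability, so the interior critical point is the maximum), which is routine given the assumption $\theta_k>0$ for all $k$.
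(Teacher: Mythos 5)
Your derivation is correct and reaches the same estimator, but by a genuinely different route from the paper. The paper starts from the general Lagrangian stationarity equation \eqref{eq:general_mle_eq} (derived in $\vth$-coordinates in the proof of Theorem~\ref{thm:mle_closed}) and then specializes it to SH via three matrix identities of the sampling matrix, namely $\mathbf{b}^\T_0 = \frac{\qp}{\pp} \mathbf{\tilde B}^\T \mathbf{e}_1$, $d_0 = \frac{\qp}{\pp}d_1$, and $(\mathbf{\tilde B}^{-1})^\T\onew = \diag(\pp^{-1},1,\ldots,1)\onew$. You instead reparametrize by $\tilde{\dt}=\Bt_{\text{SH}}\vth$, eliminate $d_0$ via the same key identity $d_0=\frac{\qp}{\pp}d_1$, and reduce the problem to a weighted multinomial MLE (maximize $\sum_j c_j\log x_j$ subject to $\sum_j a_j x_j=1$) with $c_1=M'_0+M'_1$, $a_1=1/\pp$. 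Both arguments hinge on the same structural fact about SH --- that $d_0$ is proportional to $d_1$, so the evaporated-flow count pools with the size-one count --- but your version makes the mechanism transparent and explains \emph{why} the factor $\pp$ lands where it does, whereas the paper's version is a more mechanical reuse of the machinery from Theorem~\ref{thm:mle_closed}. Your bookkeeping checks out: the constraint becomes $\frac{1}{\pp}d_1+\sum_{j\ge2}d_j=1$, the multiplier evaluates to $\sum_i c_i=\nf$, and $\hat d_1=\pp(M'_0+M'_1)/\nf$, $\hat d_j=M'_j/\nf$ for $j\ge2$ follow, giving \eqref{eq:mle_sh_closed_form} after applying $\Bt^{-1}_{\text{SH}}$. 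The only point worth being explicit about (and which you flag) is that the reparametrized feasible set should also carry the inactive inequality constraints $(\Bt^{-1}_{\text{SH}}\tilde{\dt})_k>0$; as in the paper, the interiority assumption $0<\theta_k<1$ makes these inactive at the optimum, so the unconstrained stationary point is indeed the maximizer.
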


This time the MLE is unbiased (see Appendix \ref{sec:mle}), so we use it directly on the data. 
It closely resembles the
estimator (\ref{eq:alt_mle}), with a slight difference: for the estimate of $\theta_1$,
the number of missing flows $M'_0$ plays a significant role.
This can be interpreted as follows.
Since SH works by geometrically skipping the first few packets in a flow,  flows of size 1 are those most
likely to have been entirely missed. Hence, the simplest way  to incorporate a
knowledge of $M'_0$ is to assume that it arises solely from evaporated flows originally
of size 1. 

The advantage of simple closed form estimators such as these, 
which only require a matrix multiplication, is that they eliminate the need for iterative 
estimation algorithms such as Expectation-Maximization (EM), often employed in the 
literature \cite{duffsamplingToN2005,Ribeiro06fischer,YangMichailidisGlobe06}. From a
computational viewpoint, this is highly advantageous.

\subsection{Testing with a Perfect Sequence Number Function}

We begin our case study by testing DS with a \textit{perfect} sequence number function,
which returns the exact number of packets between two sampled packets.  
As we have access to the original unsampled flows, this is easily evaluated.
Apart from being a benchmark 
for sequence number functions that may be designed in the future, a perfect function allows clean comparisons between alternative methods employing sequence numbers to be made.

In Figure~\ref{fig:esrcompare1}, for an ESR of $p = 0.01$, values of $\pp$ from $\pp = 1$ (equivalent 
to FS) steadily decreasing to $\pp = 0.001$ are shown (DS with $\pf = \pp = 0.019$ corresponds to
PS+SYN+SEQ).
As $\pp \to 1$ the performance vastly improves.
Similar results were observed in Figure~\ref{fig:esrcompare2}, where $p =
0.001$.    In all cases, a chronic lack of samples at the tail end of the distribution causes
inaccurate estimates, with zero samples showing as discontinuities. 
FS holds to the true distribution the longest, as we expect, and the best DS performs similarly to it.
With a much larger sample set in Figure~\ref{fig:esrabiper}, we can
see better agreement between the estimates and the original distribution.

\begin{figure}[t]
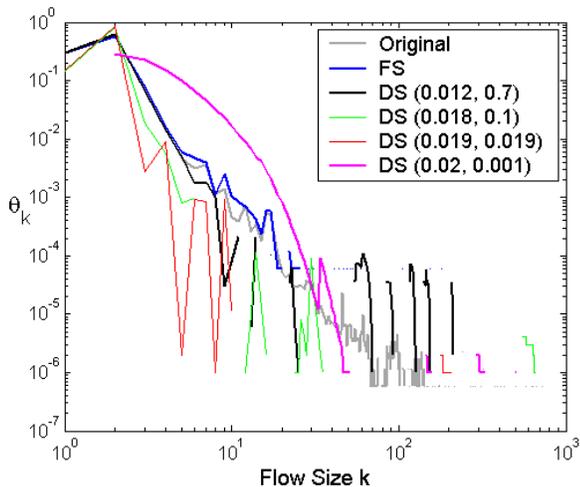

  \ESRseqcompone
  \caption{Comparison of DS on Leipzig-II, using $W=1000$ and varying parameters
	under ESR normalization with $p=0.01$. An imperfect sequence number function is used 	 
	here.}
  \label{fig:esrcompare3}
\end{figure}

\begin{figure}[t]
  \ESRseqcomptwo
  \caption{Comparison of DS on Leipzig-II, using $W=1000$ and varying parameters
	under ESR normalization with $p=0.001$. An imperfect sequence number function is used
	here.}
  \label{fig:esrcompare4}
\end{figure}

\begin{figure}[t]
  \ESRabiseq
  \caption{Comparison of DS on Abilene-III, using $W=2000$ and varying parameters
	under ESR normalization with $p = 0.005$. An imperfect sequence number function is
	used here.}
  \label{fig:esrabiseq}
\end{figure}

\begin{figure}[t]
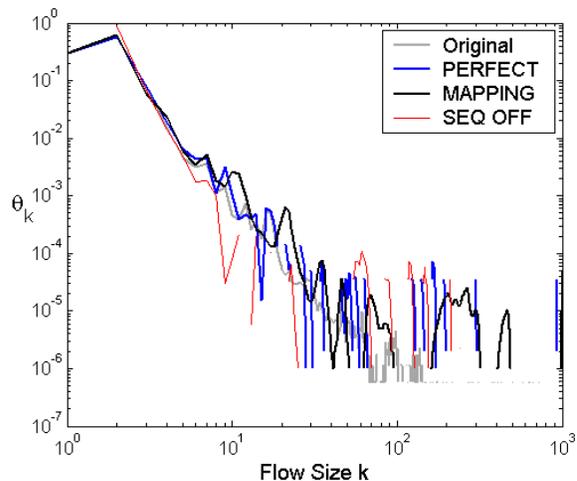

  \datacompare
  \caption{Benefits of using sequence numbers. Three cases are shown: PERFECT uses a
  perfect sequence number function, MAPPING uses an imperfect function and SEQ OFF
  uses no sequence number information.}
  \label{fig:datacompare}
\end{figure} 

We also tested SH, as seen in Figures~\ref{fig:esrleipsh} and \ref{fig:esrabish}. 
To simplify the comparison across the traces, we truncate each to $W =1000$, so that 
$D$ becomes 1.94 and 6.45 packets for Leipzig-II and Abilene-III respectively, and use the
same ESR value $p=0.01$. In both cases the performance is much worse for SH than DS, which
tracks FS and the true distribution well. In Leipzig-II the performance is very poor
almost everywhere, while in Abilene-III the front end of the distribution, up to about
flows of size 8, is estimated quite well, before deteriorating badly at larger sizes.
The good performance of SH at $j=1$ can be attributed to the fact that this event is 
dominated by the sheer number of original flows of size one. Both FS and DS tracks the
distribution much better since both methods have sampled flows of far superior quality to
SH.

\subsection{Testing with an Imperfect Sequence Number Function}

We now test DS with an imperfect sequence number function. Our function
is similar to that outlined in \cite{Ribeiro06fischer}. However, as we do not
have statistics of popular TCP payload sizes available, we infer the most likely payload
size as follows. If the sequence number difference is divisible by a popular payload
size (for example, 1460 bytes), we take this as the most likely payload size.
Otherwise, we use the average payload size. 
This function is subject to errors, especially when a flow has variable payloads, however it suffices for our
purposes. 

In addition to TCP sequence numbers, we
exploit IPID numbers. As mentioned in  \cite{Ribeiro06fischer}, the IPID field of Linux
machines is incremented sequentially for each TCP flow every time a packet in the flow
is transmitted. Given that the majority of web-servers on the Internet are Linux-based,
we exploit IPID numbers to check the accuracy of our estimate when a flow has packets
with variable payloads.

Furthermore, the unidirectional nature of the trace presents a significant challenge.
As one side in a TCP connection usually transmits more data than the other, some sampled
flows may consist mainly of TCP ACK packets, which means that they will have zero-byte TCP payloads.
In this case sequence numbers may not be incremented at all, and so will not provide information about the number of bytes transmitted. A solution is to use the TCP
acknowledgement numbers instead to infer the number of bytes transmitted from the opposite
direction, which would yield an estimate of the number of packets in the TCP ACK flow.
This may not be the most ideal solution, as this method would be susceptible to delayed
ACKs, thus underestimating the size of the flow.

Modern web browsers rely on maintaining TCP connections rather than initiating new connections,
which require more memory. This is the persistent HTTP protocol. The prevalence of this protocol
amongst web browsers presents a challenge, since empty payload packets are periodically sent
to keep the connection alive. These packets do not increment the sequence number. The best we can do in such cases is to infer using IPID numbers, or possibly counting ACK packets coming
in from the other direction, if bidirectional information is available.

Even with the imperfect and relatively simple sequence number function used here, results are consistent with theory.
Figures~\ref{fig:esrcompare3} and \ref{fig:esrcompare4} illustrate this. In both
cases, the imperfection of the function affects the accuracy of DS, but not to a large degree.
A similar observation applies to the
Abilene-III trace, shown in Figure~\ref{fig:esrabiseq}, where the artifacts due to the imperfect function (the sawtooth pattern) are clearly seen.

Finally, Figure~\ref{fig:datacompare} illustrates the effect of using sequence
numbers in recovering the flow size distribution. The three cases shown are for DS
with parameters $\pf = 0.00117$ and $\pp = 0.7$, with an ESR of $p = 0.01$.
The PERFECT case is when DS is given a perfect sequence number function, MAPPING when
using our sequence number function, and SEQ OFF when no sequence numbers were used at all.
It is apparent that using sequence numbers, even with an imperfect function, provides
significantly more information to an estimator.

\subsection{Empirical estimator variance}

Here, we see how closely the estimator variance matches the CRLB by computing the {\em
observed Fisher information} \cite{McLachlanEM08} of the estimator in Figure~\ref{fig:empirical_crlb}. 
To improve readability, smoothing was applied to the tail
end of the observed Fisher information where samples are scarse 
using a simple moving average filter with a window size of 100. 
Even when using an imperfect sequence number function, the variance of the estimator closely matches the
CRLB, effectively proving the benefit of sequence numbers.

\begin{figure}[h]
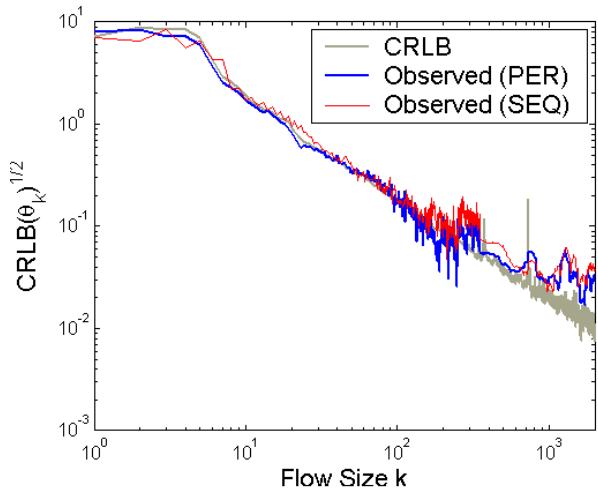

  \ESRabicrlb
  \caption{CRLB versus observed MLE variance of DS with $\pf = 0.00625,\ \pp = 0.6$ on Abilene-III.}
  \label{fig:empirical_crlb}
\end{figure}
\section{Conclusions and Future Work}
\label{sec:conc}

We have re-examined the question of sampling for flow size estimation in the context of
TCP flows from a theoretical point of view. We used the Fisher information to examine
the inherent potential of a number of sampling methods. Most of these had been
examined previously, but we showed how the usual conditional framework can be
made unconditional and thereby simplified, which actually changes the sampling methods
themselves and their performance. The new framework led to a number of new rigorous
results regarding the performance of sampling methods which we studied under two
different normalizations. It also enabled flow sampling to be compared to methods using
TCP sequence numbers and Sample and Hold for the first time, and we showed that it is
far superior to them, except in very special cases which are not important for network measurement applications.

We introduced a new two parameter family of methods, Dual Sampling, which allows the
statistical benefits of flow sampling to be traded off against the computational
advantages of packet sampling. We discussed how, as an unbiased `Hold and Sample' method,
it differs from Sample and Hold, and proved that it is superior to it. We argue that the
scheme is implementable and offers an efficient way of approaching flow sampling in
practice to the extent possible. We also proposed closed-form unbiased estimators for
SYN-based methods and SH which asymptotically achieve the CRLB, saving computational time
in the estimation stage. 

We performed a case study of Dual Sampling and Sample and Hold on two Internet data traces using our proposed
estimators, and found results entirely consistent with the theoretical predictions, despite
the fact that the function which maps sequence numbers to packet counts introduces a new
source of error, and was not highly optimized. Although there is high variation at the
tail end of the distribution, our proposed estimator closely matches the CRLB.

In future work, we intend to search over the space of all possible sampling matrices to
find an optimal sampling method for flow size estimation, and to compare it to flow sampling.
 Our framework is applicable to 
other traffic metrics such as anomaly detection and a future direction is to extend the 
work to those areas. It would also be of interest to improve the sequence number mapping function, 
and also to explore using our approach for the direct estimation of the byte size of
flows, for which sequence numbers are more naturally suited, rather than the packet size.
Finally, we will develop a more detailed case for DS and its implementability in high
speed routers.

\appendices{}

\section{Our Matrix Lemmas}
\label{sec:our_lemmas}

\subsection{General}

\begin{lem}
The matrix $\tilde \uncon(\vth)$ and its inverse $\tilde
\uncon(\vth)^{-1}$ are symmetric and positive definite.
\label{thm:t_uncon_posdef}
\end{lem}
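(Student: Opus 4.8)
The plan is to read everything off the congruence factorization $\tilde \uncon(\vth) = \Bt^\T \Dt(\vth)\,\Bt$ derived just above, where $\Dt(\vth)=\diag(d_1^{-1},\ldots,d_W^{-1})$ and $\Bt$ is upper triangular with diagonal entries $b_{11},\ldots,b_{WW}$. Two standing facts will be used: first, each $\theta_k>0$ (an assumption of Section~\ref{sec:framework}); second, $\Bt$ is invertible, which, since $\mathbf{B}$ is column stochastic so that $b_{kk}\ge 0$, is the same as $b_{kk}>0$ for all $k$, and holds for every method in Section~\ref{sec:methods}.

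\textit{Symmetry.} Because $\Dt(\vth)$ is diagonal, $\Dt(\vth)^\T=\Dt(\vth)$, so $\tilde \uncon(\vth)^\T = \Bt^\T\Dt(\vth)^\T\Bt = \Bt^\T\Dt(\vth)\Bt = \tilde \uncon(\vth)$. Since $\Bt$ and $\Dt(\vth)$ are both invertible, so is $\tilde \uncon(\vth)$, and then $(\tilde \uncon(\vth)^{-1})^\T = (\tilde \uncon(\vth)^\T)^{-1} = \tilde \uncon(\vth)^{-1}$, i.e.\ the inverse is symmetric too.

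\textit{Positive definiteness.} I would first check that $d_k>0$ for $k=1,\ldots,W$: indeed $d_k = \sum_{\ell=1}^W b_{k\ell}\theta_\ell \ge b_{kk}\theta_k>0$, all summands being nonnegative. Now take any $\mathbf{z}\in\Real^W\backslash\{\mathbf{0}_W\}$ and set $\mathbf{w}=\Bt\mathbf{z}$; invertibility of $\Bt$ gives $\mathbf{w}\ne\mathbf{0}_W$, so
\[
   \mathbf{z}^\T\tilde \uncon(\vth)\mathbf{z} = (\Bt\mathbf{z})^\T\Dt(\vth)(\Bt\mathbf{z}) = \mathbf{w}^\T\Dt(\vth)\mathbf{w} = \sum_{k=1}^W \frac{w_k^2}{d_k} > 0 ,
\]
proving $\tilde \uncon(\vth)>0$. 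For the inverse, given $\mathbf{z}\ne\mathbf{0}_W$, put $\mathbf{y}=\tilde \uncon(\vth)^{-1}\mathbf{z}\ne\mathbf{0}_W$; then, using the symmetry just established, $\mathbf{z}^\T\tilde \uncon(\vth)^{-1}\mathbf{z} = \mathbf{y}^\T\tilde \uncon(\vth)\mathbf{y}>0$ by the previous display (this is also the standard fact that the inverse of a positive definite matrix is positive definite).

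There is no real obstacle here: the whole argument is essentially a one-line consequence of the $\Bt^\T\Dt(\vth)\Bt$ structure. The only point requiring a word of justification is that $\Bt$ is invertible and each $d_k$ strictly positive, and both follow from the paper's blanket assumptions ($\theta_k>0$ for all $k$, and $b_{kk}>0$, which is part of the sampling-matrix setup for all methods considered).
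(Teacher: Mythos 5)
Your proof is correct and follows essentially the same route as the paper's: both rest on the congruence $\tilde{\uncon}=\Bt^\T\Dt(\vth)\Bt$ with $\Bt$ invertible and $\Dt(\vth)$ a positive definite diagonal matrix (the paper phrases the positive-definiteness step via the factorization $(\Dt^{1/2}\Bt)^\T(\Dt^{1/2}\Bt)$ plus invertibility, whereas you evaluate the quadratic form directly, which amounts to the same thing). Your explicit checks that $d_k>0$ and that $\Bt$ is invertible are details the paper simply asserts via rank statements, so nothing is missing on either side.
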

\begin{proof}
For simplicity we omit the $\vth$ dependencies.
Recall that $\tilde \uncon = \mathbf{\tilde B}^\T\mathbf{\tilde
D}\mathbf{\tilde B}$ where $\mathbf{\tilde D}$ is a real diagonal positive definite
matrix with $(\mathbf{\tilde D})_{jj} = d^{-1}_j$ and $\rank(\mathbf{\tilde D}) = W$.
Matrix $\mathbf{\tilde B}$ is a $W \times W$ matrix and $\rank(\mathbf{\tilde B}) =
W$. It follows that an inverse exists for both $\mathbf{\tilde D}$ and
$\mathbf{\tilde B}$. Hence, an inverse also exists for $\tilde \uncon$ since
$\tilde \uncon^{-1} = \mathbf{\tilde B}^{-1}\mathbf{\tilde D}^{-1}(\mathbf{\tilde
B}^\T)^{-1}$.

An equivalent expression for $\tilde \uncon$ is
\be
  \tilde \uncon =  \mathbf{\tilde B}^\T   \mathbf{\tilde D}^{1/2}\mathbf{\tilde
D}^{1/2}\mathbf{\tilde B} = (\mathbf{\tilde D}^{1/2}\mathbf{\tilde B})^\T
(\mathbf{\tilde D}^{1/2}\mathbf{\tilde B})
\ee
since $\mathbf{\tilde D}^{1/2}$ is symmetric.

We can now show that $\tilde \uncon$ is symmetric. We have
\begin{eqnarray*}
\tilde \uncon^\T & = & \lbrack
(\mathbf{\tilde D}^{1/2}\mathbf{\tilde B})^\T(\mathbf{\tilde D}^{1/2}\mathbf{\tilde
B}) \rbrack^\T \\
& = & (\mathbf{\tilde D}^{1/2}\mathbf{\tilde B})^\T (\mathbf{\tilde
D}^{1/2}\mathbf{\tilde B}) = \tilde \uncon .
\end{eqnarray*}
The form $(\mathbf{\tilde D}^{1/2}\mathbf{\tilde B})^\T (\mathbf{\tilde
D}^{1/2}\mathbf{\tilde B})$ is at least positive semidefinite
(Theorem~\ref{thm:pos_def_prod}). However, since $\tilde \uncon^{-1} = \mathbf{\tilde
B}^{-1}\mathbf{\tilde D}^{-1}(\mathbf{\tilde B}^\T)^{-1}$,  $\tilde \uncon$ is invertible,
it is also positive definite. By definition of symmetric, positive definite matrices,
its inverse is also symmetric, positive definite.
\end{proof}

\begin{lem}
The unconstrained Fisher information matrix $\uncon(\vth)$ and its inverse
$\uncon(\vth)^{-1}$ are symmetric and positive definite.
\label{thm:uncon_posdef}
\end{lem}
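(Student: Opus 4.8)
The plan is to read the result off Lemma~\ref{thm:t_uncon_posdef} via the additive decomposition (\ref{eq:mtx_j_alt}),
\ben
  \uncon(\vth) = \frac{1}{d_0}\,\mathbf{b}_0\bt + \tilde\uncon(\vth),
\een
which holds whenever $d_0>0$. First I would dispose of the degenerate case: if $\mathbf{b}_0=\mathbf{0}_{W\times1}$ then the outcome $j=0$ occurs with probability $d_0=\bt\vth=0$ and contributes nothing to (\ref{matrix_j}), so $\uncon(\vth)=\tilde\uncon(\vth)$ and the claim is exactly Lemma~\ref{thm:t_uncon_posdef}. Hence I may assume $\mathbf{b}_0\neq\mathbf{0}$, in which case $d_0=\bt\vth>0$ since every $\theta_k>0$ and $\mathbf{b}_0$ has nonnegative entries, so (\ref{eq:mtx_j_alt}) applies.

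Symmetry is immediate: $\mathbf{b}_0\bt=\mathbf{b}_0\mathbf{b}_0^\T$ is a rank-one outer product, and $\tilde\uncon(\vth)$ is symmetric by Lemma~\ref{thm:t_uncon_posdef}, so their positively weighted sum $\uncon(\vth)$ is symmetric (equivalently one can argue directly from $\uncon(\vth)=\mathbf{B}^\T\mathbf{D}(\vth)\mathbf{B}$ in (\ref{eq:mtx_j_decomp}) with $\mathbf{D}(\vth)$ diagonal). For positive definiteness I would take an arbitrary $\mathbf{z}\in\Real^W\setminus\{\mathbf{0}_W\}$ and bound
\ben
  \mathbf{z}^\T\uncon(\vth)\mathbf{z} = \frac{(\bt\mathbf{z})^2}{d_0} + \mathbf{z}^\T\tilde\uncon(\vth)\mathbf{z} \ge \mathbf{z}^\T\tilde\uncon(\vth)\mathbf{z} > 0,
\een
the first term being nonnegative because $d_0>0$ and the final strict inequality being the positive definiteness of $\tilde\uncon(\vth)$ from Lemma~\ref{thm:t_uncon_posdef}. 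Thus $\uncon(\vth)>0$.

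It then remains to transfer these properties to the inverse, which is standard: a symmetric positive definite matrix is nonsingular; $(\uncon^{-1}(\vth))^\T=(\uncon(\vth)^\T)^{-1}=\uncon^{-1}(\vth)$ gives symmetry; and for any $\mathbf{y}\neq\mathbf{0}_W$, setting $\mathbf{z}=\uncon^{-1}(\vth)\mathbf{y}\neq\mathbf{0}_W$ yields $\mathbf{y}^\T\uncon^{-1}(\vth)\mathbf{y}=\mathbf{z}^\T\uncon(\vth)\mathbf{z}>0$, which is positive definiteness. I do not expect any real obstacle here, since all the content is inherited from Lemma~\ref{thm:t_uncon_posdef}; the only point that needs care is keeping the degenerate case $\mathbf{b}_0=\mathbf{0}$ (where (\ref{eq:mtx_j_alt}) is unavailable) separate from the main argument.
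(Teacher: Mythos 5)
Your proof is correct, but it takes a different route from the paper's. The paper works from the Gram form $\uncon=(\mathbf{D}^{1/2}\mathbf{B})^\T(\mathbf{D}^{1/2}\mathbf{B})$ of (\ref{eq:mtx_j_decomp}): symmetry is read off directly, positive semidefiniteness comes from Lemma~\ref{thm:pos_def_prod}, and the upgrade to positive \emph{definiteness} is obtained by citing the invertibility of $\uncon$ established in Proposition~\ref{prop:sub}. You instead use the additive decomposition (\ref{eq:mtx_j_alt}) and reduce everything to Lemma~\ref{thm:t_uncon_posdef}: the rank-one term $\frac{1}{d_0}\mathbf{b}_0\bt$ contributes $(\bt\mathbf{z})^2/d_0\ge0$ to the quadratic form and $\tilde\uncon$ contributes something strictly positive. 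Both arguments ultimately rest on the positive definiteness of $\tilde\uncon$ (the paper via Proposition~\ref{prop:sub}, whose proof uses it; you directly), so the logical dependence is the same, but yours is shorter and avoids the detour through the matrix inversion lemma. It also has one genuine advantage: you explicitly treat the case $\mathbf{b}_0=\mathbf{0}_{W\times1}$ (equivalently $d_0=0$, since all $\theta_k>0$), where the entry $d_0^{-1}$ of $\mathbf{D}$ and the factor $1/d_0$ in both (\ref{eq:mtx_j_alt}) and Proposition~\ref{prop:sub} are undefined; the paper's proof silently assumes $d_0>0$. Your reduction of that case to $\uncon=\tilde\uncon$ is the right fix, and is consistent with the equality case of Theorem~\ref{thm:upper_bound_inv}. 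The transfer of symmetry and positive definiteness to $\uncon^{-1}$ is standard in both treatments.
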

\begin{proof}
Recall that $\uncon = \mathbf{B}^\T\mathbf{D}\mathbf{B}$ where $\mathbf{D}$ is a real
diagonal positive definite matrix with $(\mathbf{D})_{jj} = d^{-1}_{j-1}$ and
$\rank(\mathbf{D}) = W+1$. Matrix $\mathbf{B}$ is a $(W+1) \times W$ matrix and
$\rank(\mathbf{B}) = W$.
An equivalent expression for $\uncon$ is
\begin{eqnarray*}
\uncon & = & \mathbf{B}^\T\mathbf{D}^{1/2}\mathbf{D}^{1/2}\mathbf{B} =
(\mathbf{D}^{1/2}\mathbf{B})^\T(\mathbf{D}^{1/2}\mathbf{B})
\end{eqnarray*}
since $\mathbf{D}^{1/2}$ is symmetric.  Now
\begin{eqnarray*}
\uncon^\T & = & \lbrack
(\mathbf{D}^{1/2}\mathbf{B})^\T(\mathbf{D}^{1/2}\mathbf{B})\rbrack^\T \\
& = & (\mathbf{D}^{1/2}\mathbf{B})^\T (\mathbf{D}^{1/2}\mathbf{B}) = \uncon .
\end{eqnarray*}
From Theorem~\ref{thm:pos_def_prod}, $(\mathbf{D}^{1/2}\mathbf{B})^\T
(\mathbf{D}^{1/2}\mathbf{B})$ is positive semidefinite. 
Moreover, $\uncon$ is invertible by Proposition~\ref{prop:sub}, implying
it is positive definite. By definition of symmetric, positive definite
matrices, its inverse is also symmetric and positive definite.
\end{proof}


\smallskip
\begin{lem}
$\uncon_1 \ge \uncon_2$ if and only if $\con^{+}_1 \le \con^{+}_2$.
\label{lem:uncon_con}
\end{lem}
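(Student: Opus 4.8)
The plan is to reduce the statement to the standard order-reversing behaviour of matrix inversion on the cone of positive definite matrices, exploiting the key feature of \eqref{eq:inverse_simple} that the constraint correction $\vth\vth^\T$ is \emph{identical} for the two methods: it depends on $\vth$ alone, not on the sampling matrix $\mathbf{B}$.

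First I would record that, by Lemma~\ref{thm:uncon_posdef}, both $\uncon_1$ and $\uncon_2$ are symmetric and positive definite, so their inverses exist and are themselves positive definite. I would then invoke Lemma~\ref{thm:positive_def_inv}(ii): for positive definite matrices $\mathbf{A},\mathbf{B}$, $\mathbf{A}\ge\mathbf{B}$ holds iff $\mathbf{B}^{-1}\ge\mathbf{A}^{-1}$. Applying this with $\mathbf{A}=\uncon_1$ and $\mathbf{B}=\uncon_2$ gives the equivalence $\uncon_1\ge\uncon_2 \iff \uncon_2^{-1}\ge\uncon_1^{-1}$.

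It then remains to transfer this to $\con^{+}$. By \eqref{eq:inverse_simple}, $\con^{+}_i=\uncon_i^{-1}-\vth\vth^\T$ for $i=1,2$, with the \emph{same} subtracted matrix. Since $\mathbf{X}\ge\mathbf{Y}$ iff $\mathbf{X}+\mathbf{Z}\ge\mathbf{Y}+\mathbf{Z}$ for any fixed symmetric $\mathbf{Z}$ (here $\mathbf{Z}=-\vth\vth^\T$), we obtain $\uncon_2^{-1}\ge\uncon_1^{-1} \iff \con^{+}_2\ge\con^{+}_1$. Composing the two equivalences yields $\uncon_1\ge\uncon_2 \iff \con^{+}_1\le\con^{+}_2$, as claimed.

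I do not expect any genuine obstacle. The only point needing a little care is that each $\con^{+}_i$ is singular (rank $W-1$), so the final comparison is between positive \emph{semi}definite matrices; this is harmless, because the nontrivial inversion step is carried out on the honestly positive definite matrices $\uncon_1,\uncon_2$, and thereafter we only add the common term $-\vth\vth^\T$. Should Lemma~\ref{thm:positive_def_inv}(ii) not be quotable in exactly this form, I would establish the inversion equivalence in-line: write $\uncon_1-\uncon_2 = \uncon_1^{1/2}(\mathbf{I}_W - \uncon_1^{-1/2}\uncon_2\uncon_1^{-1/2})\uncon_1^{1/2}$, note that this is $\ge 0$ iff the eigenvalues of $\uncon_1^{-1/2}\uncon_2\uncon_1^{-1/2}$ lie in $(0,1]$, iff the eigenvalues of its inverse $\uncon_1^{1/2}\uncon_2^{-1}\uncon_1^{1/2}$ lie in $[1,\infty)$, iff $\uncon_2^{-1}-\uncon_1^{-1}\ge 0$, with the reverse direction symmetric.
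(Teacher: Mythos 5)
Your proposal is correct and follows essentially the same route as the paper's own proof: apply Lemma~\ref{thm:positive_def_inv}(ii) to pass from $\uncon_1 \ge \uncon_2$ to $\uncon_1^{-1} \le \uncon_2^{-1}$, then subtract the common constraint term $\vth\vth^\T$ from \eqref{eq:inverse_simple} via the quadratic-form definition. Your additional remarks on the singularity of $\con^{+}_i$ and the in-line derivation of the inversion equivalence are careful touches the paper omits, but the argument is the same.
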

\begin{proof}
Since by Lemma~\ref{thm:positive_def_inv}(ii), $\uncon_1 \ge \uncon_2$ iff $\uncon^{-1}_1
\le \uncon^{-1}_2$, then by definition of positive semidefinite matrices, $\mathbf{x}^\T
\uncon^{-1}_1 \mathbf{x} \le \mathbf{x}^\T \uncon^{-1}_2 \mathbf{x}$. Hence, this implies
$\mathbf{x}^\T (\uncon^{-1}_1 - \vth \vth^\T) \mathbf{x} \le \mathbf{x}^\T  (\uncon^{-1}_2
- \vth \vth^\T) \mathbf{x}$, implying $\uncon_1 \ge \uncon_2$.
\end{proof}

\subsection{Proof of Theorem~\ref{thm:upper_bound_inv}}

\noindent Let $\mathbf{E} = (1/d_0) \mathbf{b}_0\mathbf{b}_0^\T$ from (\ref{eq:mtx_j_alt}).
It has rank 1 and is therefore positive semidefinite
since its eigenvalues are $\tr(\mathbf{E}) = \sum^W_{k=1} b^2_{0k}/d_0$ with
multiplicity 1 and $0$ with multiplicity $W-1$.
It follows from Lemma~\ref{thm:possem_results} that $\uncon(\vth) \ge
\tilde \uncon(\vth)$ since $\uncon(\vth) = \mathbf{E} + \tilde
\uncon(\vth)$, and from Lemma~\ref{thm:positive_def_inv}, this implies
that $\tilde \uncon^{-1}(\vth) - \uncon^{-1}(\vth) \ge \mathbf{0}_{W
\times W}$ and therefore $\uncon^{-1}(\vth) \le \tilde
\uncon^{-1}(\vth)$. Equality can only hold iff $\mathbf{E} = \mathbf{0}_{W \times W}$ since
this is the only case where all eigenvalues of $\mathbf{E}$ are zero. This implies that 
$\mathbf{b}_0 = \mathbf{0}_{W \times 1}$ is required for equality, that is that no flow can `evaporate'.

\section{Other Matrix Lemmas}
\label{app:other_lemmas}

We collect some useful results required in this paper here. This first result comes from
\cite{Harville97}.
\begin{lem}
Let $\mathbf{A}$ be a $n \times n$ symmetric positive definite matrix and
 $\mathbf{B}$ an $n \times n$ positive definite matrix. Then
\begin{enumerate}
  \item If $\mathbf{B}-\mathbf{A}$ is positive definite, then so is $\mathbf{A}^{-1} -
\mathbf{B}^{-1}$,
  \item If $\mathbf{B}-\mathbf{A}$ is symmetric and positive semidefinite (implying
$\mathbf{B}$ is symmetric), then $\mathbf{A}^{-1} - \mathbf{B}^{-1}\ge0$.
\end{enumerate}
\label{thm:positive_def_inv}
\end{lem}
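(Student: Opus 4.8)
The plan is to deduce both parts from the elementary fact that, for a symmetric positive definite matrix $\mathbf{C}$, one has $\mathbf{C} > \mathbf{I}_n$ if and only if $\mathbf{C}^{-1} < \mathbf{I}_n$ (and likewise with $\ge$), combined with the fact that a congruence $\mathbf{M} \mapsto \mathbf{X}^\T\mathbf{M}\mathbf{X}$ by a nonsingular $\mathbf{X}$ preserves both positive definiteness and positive semidefiniteness, since $\mathbf{z}^\T\mathbf{X}^\T\mathbf{M}\mathbf{X}\mathbf{z} = (\mathbf{X}\mathbf{z})^\T\mathbf{M}(\mathbf{X}\mathbf{z})$ and $\mathbf{X}\mathbf{z}$ ranges over all nonzero vectors. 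I would work throughout with symmetric matrices: this is the only setting in which the lemma is used here (all the Fisher information matrices and their inverses were already shown to be symmetric), and in general positive (semi)definiteness in the sense of Definition~\ref{def:positive_def} depends only on the symmetric part $\tfrac12(\mathbf{M}+\mathbf{M}^\T)$. Since $\mathbf{A}$ is symmetric positive definite it has a symmetric positive definite square root $\mathbf{A}^{1/2}$, with symmetric positive definite inverse $\mathbf{A}^{-1/2}$; these are the tools I would use to ``normalise'' $\mathbf{A}$ to the identity.

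For part (i), I would first congruence-transform the hypothesis $\mathbf{B}-\mathbf{A} > 0$ by $\mathbf{A}^{-1/2}$, obtaining $\mathbf{A}^{-1/2}(\mathbf{B}-\mathbf{A})\mathbf{A}^{-1/2} = \mathbf{C} - \mathbf{I}_n > 0$, where $\mathbf{C} := \mathbf{A}^{-1/2}\mathbf{B}\mathbf{A}^{-1/2}$ is symmetric. Diagonalising $\mathbf{C} = \mathbf{Q}\boldsymbol{\Lambda}\mathbf{Q}^\T$ with $\mathbf{Q}$ orthogonal, the relation $\mathbf{C} > \mathbf{I}_n$ says every eigenvalue exceeds $1$, hence every reciprocal is below $1$, so $\mathbf{C}^{-1} < \mathbf{I}_n$; and $\mathbf{C}^{-1} = \mathbf{A}^{1/2}\mathbf{B}^{-1}\mathbf{A}^{1/2}$. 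Congruence-transforming $\mathbf{I}_n - \mathbf{C}^{-1} > 0$ back by $\mathbf{A}^{-1/2}$ then gives $\mathbf{A}^{-1} - \mathbf{A}^{-1/2}\mathbf{C}^{-1}\mathbf{A}^{-1/2} = \mathbf{A}^{-1} - \mathbf{B}^{-1} > 0$, which is the claim. For part (ii) I would run the identical chain with every strict inequality relaxed to its non-strict counterpart: $\mathbf{C} \ge \mathbf{I}_n$ gives eigenvalues $\ge 1$, hence reciprocals $\le 1$, hence $\mathbf{C}^{-1} \le \mathbf{I}_n$, hence $\mathbf{A}^{-1} - \mathbf{B}^{-1} \ge 0$.

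The one point needing a little care — the nearest thing to an obstacle — is the invertibility of $\mathbf{B}$ (and hence of $\mathbf{C}$) that is implicitly used when passing to $\mathbf{C}^{-1}$. In part (i) this is free, since $\mathbf{B} > \mathbf{A} > 0$ is itself positive definite. In part (ii) I would note that $\mathbf{B} = \mathbf{A} + (\mathbf{B}-\mathbf{A})$ is the sum of a symmetric positive definite matrix and a symmetric positive semidefinite matrix, and is therefore symmetric positive definite, so $\mathbf{B}^{-1}$ exists; this also re-derives the parenthetical remark in the statement that $\mathbf{B}$ is symmetric. Everything else is the routine eigenvalue bookkeeping above, so I would expect the write-up to be short.
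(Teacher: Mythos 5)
The paper never proves this lemma: it is quoted verbatim from \cite{Harville97} at the top of Appendix~B and used as a black box, so there is no in-paper argument to compare yours against. Your proof is the standard congruence-and-spectral one and is correct in the symmetric setting: normalising $\mathbf{A}$ to the identity via $\mathbf{A}^{\pm 1/2}$, inverting eigenvalues, and transforming back is exactly right, the congruence-preservation step is justified properly, and your handling of the invertibility of $\mathbf{B}$ is fine (indeed the lemma's preamble already assumes $\mathbf{B}$ positive definite, so this is free in both parts). Your observation that every application in the paper (Theorem~\ref{thm:upper_bound_inv}, Lemma~\ref{lem:uncon_con}, the FS-vs-PS counterexample, Theorem~\ref{thm:dpi_sampling}) involves only symmetric matrices is also accurate.

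The one substantive caveat concerns the scope of part~(i). Under the paper's Definition~\ref{def:positive_def} positive definiteness does not entail symmetry, and part~(i) imposes no symmetry on $\mathbf{B}$ (unlike part~(ii), which does so explicitly). Your stated justification for restricting to symmetric matrices --- that definiteness depends only on the symmetric part --- does not reduce the general case to the symmetric one, because the symmetric part of $\mathbf{B}^{-1}$ is not the inverse of the symmetric part of $\mathbf{B}$; moreover $\mathbf{C}=\mathbf{A}^{-1/2}\mathbf{B}\mathbf{A}^{-1/2}$ is not symmetric when $\mathbf{B}$ is not, so the eigenvalue bookkeeping breaks down. Hence your argument proves strictly less than part~(i) as literally stated. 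The general statement is nevertheless true and follows in one line from the identity
\ben
\mathbf{A}^{-1}-\mathbf{B}^{-1}=(\mathbf{B}^{-1})^\T\bigl[(\mathbf{B}-\mathbf{A})^\T\mathbf{A}^{-1}(\mathbf{B}-\mathbf{A})+(\mathbf{B}-\mathbf{A})\bigr]\mathbf{B}^{-1},
\een
valid for any symmetric $\mathbf{A}$ and nonsingular $\mathbf{B}$ (expand $(\mathbf{B}-\mathbf{A})^\T\mathbf{A}^{-1}(\mathbf{B}-\mathbf{A})$ and conjugate by $\mathbf{B}^{-1}$): the first bracketed term is always positive semidefinite, so the bracket inherits the (semi)definiteness of $\mathbf{B}-\mathbf{A}$, and the outer congruence by the nonsingular $\mathbf{B}^{-1}$ preserves it --- this is essentially Harville's proof and handles both parts simultaneously. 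Since the restriction costs nothing for this paper, either approach is acceptable here, but you should either switch to the identity or add ``symmetric'' to the hypothesis on $\mathbf{B}$ in part~(i) so that the statement matches what you prove.
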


\noindent The following theorem appears in \cite[Theorem 6.3, p.~161]{Zhang99}.
\begin{lem}
The following statements are equivalent:
\begin{enumerate}
   \item $\mathbf{A}$ is positive semidefinite;
   \item $\mathbf{A} = \mathbf{B}^* \mathbf{B}$ for some matrix $\mathbf{B}$, where
   $\mathbf{B}^*$ is the conjugate transpose of $\mathbf{B}$. 
\end{enumerate}
\label{thm:pos_def_prod}
\end{lem}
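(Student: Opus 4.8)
The plan is to prove the two implications separately: the direction (ii)$\Rightarrow$(i) is a one-line computation, while (i)$\Rightarrow$(ii) rests on the spectral theorem for Hermitian matrices.

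First I would establish (ii)$\Rightarrow$(i). Suppose $\mathbf{A} = \mathbf{B}^*\mathbf{B}$ for some $m \times n$ matrix $\mathbf{B}$. Then $\mathbf{A}^* = (\mathbf{B}^*\mathbf{B})^* = \mathbf{B}^*\mathbf{B} = \mathbf{A}$, so $\mathbf{A}$ is Hermitian, and for every $\mathbf{z} \in \Complex^n$,
\[
   \mathbf{z}^*\mathbf{A}\mathbf{z} = \mathbf{z}^*\mathbf{B}^*\mathbf{B}\mathbf{z} = (\mathbf{B}\mathbf{z})^*(\mathbf{B}\mathbf{z}) = \norm{\mathbf{B}\mathbf{z}}^2 \ge 0,
\]
which is precisely positive semidefiniteness. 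In the real setting used throughout the paper this reads $\mathbf{z}^\T\mathbf{A}\mathbf{z} = \norm{\mathbf{B}\mathbf{z}}^2 \ge 0$, matching Definition~\ref{def:positive_def}.

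Next I would establish (i)$\Rightarrow$(ii). Assuming $\mathbf{A}$ is positive semidefinite, it is in particular Hermitian, so the spectral theorem provides a unitary $\mathbf{U}$ and a real diagonal $\mathbf{\Lambda} = \diag(\lambda_1,\ldots,\lambda_n)$ with $\mathbf{A} = \mathbf{U}\mathbf{\Lambda}\mathbf{U}^*$. Testing $\mathbf{A}$ against each eigenvector shows $\lambda_i \ge 0$ for all $i$, so $\mathbf{\Lambda}^{1/2} = \diag(\sqrt{\lambda_1},\ldots,\sqrt{\lambda_n})$ is well defined and real. Taking $\mathbf{B} = \mathbf{\Lambda}^{1/2}\mathbf{U}^*$ then gives $\mathbf{B}^*\mathbf{B} = \mathbf{U}\mathbf{\Lambda}^{1/2}\mathbf{\Lambda}^{1/2}\mathbf{U}^* = \mathbf{U}\mathbf{\Lambda}\mathbf{U}^* = \mathbf{A}$, as required. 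One may equally take $\mathbf{B}$ to be the Hermitian positive semidefinite square root $\mathbf{U}\mathbf{\Lambda}^{1/2}\mathbf{U}^*$, or a Cholesky factor when $\mathbf{A}$ is additionally nonsingular.

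The only real content lies in (i)$\Rightarrow$(ii), and the sole obstacle is invoking the spectral decomposition of a Hermitian matrix; everything else is bookkeeping. Since the paper applies this lemma only to real symmetric matrices (e.g.\ in Lemmas~\ref{thm:t_uncon_posdef} and~\ref{thm:uncon_posdef}), one may restrict to the real spectral theorem and replace $(\cdot)^*$ by $(\cdot)^\T$ throughout without losing anything used elsewhere.
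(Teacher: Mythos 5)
Your proof is correct, but note that the paper does not actually prove this lemma: it is imported verbatim from the literature (cited as Theorem~6.3 of the matrix-theory reference), so there is no in-paper argument to compare against. Your two-directional argument is the standard one and is complete: the computation $\mathbf{z}^*\mathbf{B}^*\mathbf{B}\mathbf{z} = \norm{\mathbf{B}\mathbf{z}}^2 \ge 0$ for (ii)$\Rightarrow$(i), and the spectral factorization $\mathbf{A} = \mathbf{U}\mathbf{\Lambda}\mathbf{U}^*$ with $\mathbf{B} = \mathbf{\Lambda}^{1/2}\mathbf{U}^*$ for (i)$\Rightarrow$(ii). One small point worth flagging: your step ``positive semidefinite, hence in particular Hermitian'' is automatic in the complex setting (since $\mathbf{z}^*\mathbf{A}\mathbf{z}$ real for all $\mathbf{z}\in\Complex^n$ forces $\mathbf{A}=\mathbf{A}^*$), but under the paper's Definition~\ref{def:positive_def}, which quantifies only over real vectors and does not require symmetry, a real matrix with nonnegative quadratic form need not be symmetric, whereas $\mathbf{B}^\T\mathbf{B}$ always is. Since the lemma as quoted presupposes the Hermitian/symmetric convention and the paper only ever applies it to symmetric matrices (as you observe), this is a matter of stating the hypothesis cleanly rather than a gap in your argument.
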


\smallskip
\noindent The following result gives more properties of positive semidefinite matrices
\cite[Theorem 6.5, p. 166]{Zhang99}.
\begin{lem}
Let $\mathbf{A} \ge 0$ and $\mathbf{B} \ge 0$ have same size. Then
\begin{enumerate}
\item $\mathbf{A} + \mathbf{B} \ge \mathbf{B}$,
\item $\mathbf{A}^{1/2}\mathbf{B}\mathbf{A}^{1/2} \ge 0$,
\item $\tr(\mathbf{A}\mathbf{B}) \le \tr(\mathbf{A})\tr(\mathbf{B})$,
\item the eigenvalues of $\mathbf{A}\mathbf{B}$ are all nonnegative.
\end{enumerate}
\label{thm:possem_results}
\end{lem}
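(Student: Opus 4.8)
The plan is to establish the four items as standard consequences of the definition of positive semidefiniteness (Definition~\ref{def:positive_def}) together with the factorization characterization of Lemma~\ref{thm:pos_def_prod}. Throughout I would use that a symmetric positive semidefinite matrix $\mathbf{A}$ admits a symmetric positive semidefinite square root $\mathbf{A}^{1/2}$, obtained by applying $\sqrt{\cdot}$ to the eigenvalues in the spectral decomposition, so that $\mathbf{A} = \mathbf{A}^{1/2}\mathbf{A}^{1/2}$ and $(\mathbf{A}^{1/2})^\T = \mathbf{A}^{1/2}$. Item (i) is then immediate: since $(\mathbf{A} + \mathbf{B}) - \mathbf{B} = \mathbf{A}$, for every $\mathbf{z}$ we have $\mathbf{z}^\T(\mathbf{A}+\mathbf{B})\mathbf{z} - \mathbf{z}^\T\mathbf{B}\mathbf{z} = \mathbf{z}^\T\mathbf{A}\mathbf{z} \ge 0$ by the hypothesis $\mathbf{A}\ge 0$.

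For item (ii) I would set $\mathbf{C} = \mathbf{B}^{1/2}\mathbf{A}^{1/2}$ and compute $\mathbf{C}^\T\mathbf{C} = \mathbf{A}^{1/2}\mathbf{B}^{1/2}\mathbf{B}^{1/2}\mathbf{A}^{1/2} = \mathbf{A}^{1/2}\mathbf{B}\mathbf{A}^{1/2}$, so that Lemma~\ref{thm:pos_def_prod} yields $\mathbf{A}^{1/2}\mathbf{B}\mathbf{A}^{1/2}\ge 0$ at once. Equivalently, using the symmetry of $\mathbf{A}^{1/2}$, the quadratic form is $\mathbf{z}^\T\mathbf{A}^{1/2}\mathbf{B}\mathbf{A}^{1/2}\mathbf{z} = (\mathbf{A}^{1/2}\mathbf{z})^\T\mathbf{B}(\mathbf{A}^{1/2}\mathbf{z}) \ge 0$.

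For item (iii) the key step is the cyclic invariance of the trace, giving $\tr(\mathbf{A}\mathbf{B}) = \tr(\mathbf{A}^{1/2}\mathbf{A}^{1/2}\mathbf{B}) = \tr(\mathbf{A}^{1/2}\mathbf{B}\mathbf{A}^{1/2})$, which by (ii) is the trace of a positive semidefinite matrix. I would then insert the spectral decomposition $\mathbf{A} = \sum_i \lambda_i \mathbf{u}_i\mathbf{u}_i^\T$ with $\lambda_i \ge 0$ and orthonormal $\mathbf{u}_i$, obtaining $\tr(\mathbf{A}\mathbf{B}) = \sum_i \lambda_i\, \mathbf{u}_i^\T\mathbf{B}\mathbf{u}_i$. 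Since each $\mathbf{u}_i$ is a unit vector and $\mathbf{B}\ge 0$, we have $0 \le \mathbf{u}_i^\T\mathbf{B}\mathbf{u}_i \le \tr(\mathbf{B})$, the upper bound holding because $\mathbf{u}_i^\T\mathbf{B}\mathbf{u}_i$ cannot exceed the largest eigenvalue of $\mathbf{B}$, which in turn is at most the sum of its (nonnegative) eigenvalues. Hence $\tr(\mathbf{A}\mathbf{B}) \le \tr(\mathbf{B})\sum_i\lambda_i = \tr(\mathbf{A})\tr(\mathbf{B})$.

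The \emph{main obstacle} is item (iv), since $\mathbf{A}\mathbf{B}$ need not be symmetric, so it is not even clear a priori that its eigenvalues are real. I would resolve this using the standard fact that for any conformable $\mathbf{X},\mathbf{Y}$ the products $\mathbf{X}\mathbf{Y}$ and $\mathbf{Y}\mathbf{X}$ share the same nonzero eigenvalues with multiplicity. Taking $\mathbf{X} = \mathbf{A}^{1/2}$ and $\mathbf{Y} = \mathbf{A}^{1/2}\mathbf{B}$ gives $\mathbf{A}\mathbf{B} = \mathbf{X}\mathbf{Y}$ while $\mathbf{A}^{1/2}\mathbf{B}\mathbf{A}^{1/2} = \mathbf{Y}\mathbf{X}$, and the latter is positive semidefinite by (ii) and therefore has only real nonnegative eigenvalues. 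Consequently the nonzero eigenvalues of $\mathbf{A}\mathbf{B}$ coincide with these and are nonnegative, while any zero eigenvalue is trivially nonnegative. This single argument simultaneously handles the non-symmetry of $\mathbf{A}\mathbf{B}$ and the case where $\mathbf{A}$ is singular, so no separate perturbation or invertibility argument is required.
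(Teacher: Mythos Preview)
Your proof is correct and self-contained. Note, however, that the paper does not actually prove this lemma: it is stated as a quotation from \cite[Theorem 6.5, p.~166]{Zhang99} in Appendix~\ref{app:other_lemmas}, with no argument given. So there is no ``paper's own proof'' to compare against; you have supplied a complete standard argument where the paper simply cites an external reference.
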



\medskip
\noindent The matrix inversion lemma (also known as Woodbury's formula) can be found in
\cite[Theorem 18.2.8, p.~424]{Harville97}.

\begin{lem}[Matrix Inversion Lemma]
Let $\mathbf{R}$ be a $n \times n$ matrix, $\mathbf{S}$ a $n \times m$
matrix, $\mathbf{T}$ a $m \times m$ matrix, and $\mathbf{U}$ a $m \times n$ matrix.
Suppose that $\mathbf{R}$ and $\mathbf{T}$ are nonsingular. Then,
$$(\mathbf{R} + \mathbf{S}\mathbf{T}\mathbf{U})^{-1} = \mathbf{R}^{-1} -
\mathbf{R}^{-1}\mathbf{S}(\mathbf{T}^{-1} + \mathbf{U}\mathbf{R}^{-1}\mathbf{S})^{-1} \mathbf{U}\mathbf{R}^{-1}.
$$
\label{lem:matrix_inversion}
\end{lem}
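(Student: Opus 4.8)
The plan is to prove the identity by direct verification. Write the claimed inverse as $\mathbf{X} = \mathbf{R}^{-1} - \mathbf{R}^{-1}\mathbf{S}\mathbf{W}\mathbf{U}\mathbf{R}^{-1}$, where $\mathbf{W} = (\mathbf{T}^{-1} + \mathbf{U}\mathbf{R}^{-1}\mathbf{S})^{-1}$, and show that $(\mathbf{R} + \mathbf{S}\mathbf{T}\mathbf{U})\mathbf{X} = \mathbf{I}_n$. Since $\mathbf{R} + \mathbf{S}\mathbf{T}\mathbf{U}$ is a square $n \times n$ matrix, exhibiting a right inverse is enough to conclude that $\mathbf{X}$ is its unique two-sided inverse, so no separate left-multiplication check is needed.

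First I would expand the product $(\mathbf{R} + \mathbf{S}\mathbf{T}\mathbf{U})\mathbf{X}$ into four terms. The term $\mathbf{R}\mathbf{R}^{-1}$ yields $\mathbf{I}_n$, while each of the remaining three terms carries a leading factor $\mathbf{S}$ and a trailing factor $\mathbf{U}\mathbf{R}^{-1}$. Factoring these out reduces the problem to showing that the inner bracket $\mathbf{T} - \mathbf{W} - \mathbf{T}\mathbf{U}\mathbf{R}^{-1}\mathbf{S}\mathbf{W}$, sandwiched between $\mathbf{S}$ on the left and $\mathbf{U}\mathbf{R}^{-1}$ on the right, equals the zero matrix $\mathbf{0}_{m \times m}$.

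The key algebraic step --- and really the only nonroutine move --- is recognizing that $\mathbf{I}_m + \mathbf{T}\mathbf{U}\mathbf{R}^{-1}\mathbf{S} = \mathbf{T}(\mathbf{T}^{-1} + \mathbf{U}\mathbf{R}^{-1}\mathbf{S}) = \mathbf{T}\mathbf{W}^{-1}$. Rewriting the bracket as $\mathbf{T} - (\mathbf{I}_m + \mathbf{T}\mathbf{U}\mathbf{R}^{-1}\mathbf{S})\mathbf{W}$ and substituting this identity collapses it to $\mathbf{T} - \mathbf{T}\mathbf{W}^{-1}\mathbf{W} = \mathbf{T} - \mathbf{T} = \mathbf{0}_{m \times m}$. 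Hence all three remaining terms vanish and $(\mathbf{R} + \mathbf{S}\mathbf{T}\mathbf{U})\mathbf{X} = \mathbf{I}_n$, establishing the formula.

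There is no deep obstacle here; the proof is a short computation. The only point requiring care is the implicit hypothesis that the inner ``capacitance'' matrix $\mathbf{T}^{-1} + \mathbf{U}\mathbf{R}^{-1}\mathbf{S}$ is nonsingular, which is needed for $\mathbf{W}$, and hence the right-hand side of the identity, to be well defined. Given that $\mathbf{R}$ and $\mathbf{T}$ are already assumed nonsingular, I would state this invertibility explicitly at the outset so that every inverse appearing in the manipulation is justified, and then carry out the verification above.
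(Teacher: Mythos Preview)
Your verification is correct and is the standard route to Woodbury's formula: expand $(\mathbf{R}+\mathbf{STU})\mathbf{X}$, factor out $\mathbf{S}(\cdot)\mathbf{U}\mathbf{R}^{-1}$, and collapse the bracket via $\mathbf{I}_m+\mathbf{TUR}^{-1}\mathbf{S}=\mathbf{T}\mathbf{W}^{-1}$. Your remark that a right inverse of a square matrix is automatically two-sided is the right way to avoid a redundant second computation, and your caveat about the implicit nonsingularity of $\mathbf{T}^{-1}+\mathbf{U}\mathbf{R}^{-1}\mathbf{S}$ is well taken.

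There is nothing to compare against: the paper does not prove this lemma at all but simply quotes it from \cite[Theorem~18.2.8, p.~424]{Harville97} as a known result in the appendix of auxiliary matrix facts. So your proposal supplies strictly more than the paper does here.
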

\vspace{-2mm}
\noindent  The data processing inequality for Fisher information from \cite{Zamir98} is as follows.
\begin{thm}
If $\Theta \to Y \to X$ satisfies a relation of the form $f(y, x | \Theta) = f_{\Theta}
(y)f(x|y)$ (i.e.~the conditional distribution of $X$ given $Y$ is independent of
$\Theta$), then $\uncon_X(\Theta) \le \uncon_Y(\Theta)$ with the deterministic version
being $\uncon_{\gamma(Y)}(\Theta) \le \uncon_Y(\Theta)$. Equality holds if $\gamma(Y)$ is
a {\em sufficient statistic} relative to the family ${f_{\Theta}(\mathbf{y})}$,
i.e.~$\Theta \to \gamma(Y) \to Y$ forms a Markov chain.
\label{thm:fisher_dpi}
\end{thm}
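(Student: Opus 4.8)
The plan is to prove the inequality through the classical score-function decomposition, representing each Fisher information as the second moment of a score and exploiting that conditioning is an $L^2$-contraction. Write the scores $s_Y(\Theta)=\nabla_\Theta\log f_\Theta(Y)$ and $s_X(\Theta)=\nabla_\Theta\log f_\Theta(X)$, where $f_\Theta(x)=\int f_\Theta(x,y)\,dy$ is the marginal law of $X$, so that $\uncon_Y(\Theta)=\E[s_Y s_Y^\T]$ and $\uncon_X(\Theta)=\E[s_X s_X^\T]$. Throughout I assume the standard regularity permitting differentiation under the integral sign (the discrete case of the paper replaces integrals by sums verbatim).

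First I would record the effect of the Markov hypothesis on the joint score. Since $f(y,x\mid\Theta)=f_\Theta(y)\,f(x\mid y)$ with $f(x\mid y)$ free of $\Theta$, applying $\nabla_\Theta\log(\cdot)$ to the joint density gives $\nabla_\Theta\log f_\Theta(x,y)=\nabla_\Theta\log f_\Theta(y)=s_Y(\Theta)$, the factor $f(x\mid y)$ contributing nothing. Hence the score of the pair $(X,Y)$ equals $s_Y$, and in particular $\uncon_{(X,Y)}(\Theta)=\uncon_Y(\Theta)$.

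The key step is the identity $s_X(\Theta)=\E[s_Y(\Theta)\mid X]$: the marginal score is the conditional expectation of the joint score. I would obtain it by differentiating $f_\Theta(x)=\int f_\Theta(x,y)\,dy$ under the integral, $\nabla_\Theta f_\Theta(x)=\int f_\Theta(x,y)\,\nabla_\Theta\log f_\Theta(x,y)\,dy$, and dividing by $f_\Theta(x)$, which turns the right-hand side into $\int f_\Theta(y\mid x)\,s_Y\,dy=\E[s_Y\mid X=x]$. With this identity the law of total covariance applied to $s_Y$ yields
\begin{align*}
\uncon_Y(\Theta) &= \E[s_Y s_Y^\T]
= \E\big[\E[s_Y\mid X]\,\E[s_Y\mid X]^\T\big] + \E\big[\mathrm{Cov}(s_Y\mid X)\big]\\
&= \uncon_X(\Theta) + \E\big[\mathrm{Cov}(s_Y\mid X)\big],
\end{align*}
where $\mathrm{Cov}(s_Y\mid X)$ denotes the conditional covariance matrix of $s_Y$ given $X$. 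For any $\mathbf{z}$ we have $\mathbf{z}^\T\mathrm{Cov}(s_Y\mid X)\mathbf{z}=\Var(\mathbf{z}^\T s_Y\mid X)\ge0$, so each conditional covariance is positive semidefinite (Definition~\ref{def:positive_def}), and hence so is its expectation. Therefore $\uncon_Y(\Theta)-\uncon_X(\Theta)\ge0$, i.e.\ $\uncon_X(\Theta)\le\uncon_Y(\Theta)$. The deterministic version is the special case in which $f(x\mid y)$ is degenerate at $x=\gamma(y)$, and the same computation gives $\uncon_{\gamma(Y)}(\Theta)\le\uncon_Y(\Theta)$.

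Finally I would settle the equality condition. Equality forces the remainder $\E[\mathrm{Cov}(s_Y\mid X)]=\mathbf{0}$, and since each summand is positive semidefinite this occurs iff $\mathrm{Cov}(s_Y\mid X)=\mathbf{0}$ almost surely, i.e.\ $s_Y$ is almost surely $X$-measurable, namely $s_Y=\E[s_Y\mid X]=s_X$. When $X=\gamma(Y)$ this is precisely the sufficiency of $\gamma(Y)$: by the Fisher--Neyman factorization $f_\Theta(y)=h(y)\,g_\Theta(\gamma(y))$, so $s_Y=\nabla_\Theta\log g_\Theta(\gamma(Y))$ depends on $Y$ only through $\gamma(Y)$, giving $s_Y=s_{\gamma(Y)}$ and hence equality. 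I expect the only real obstacle to be the analytic justification of differentiation under the integral and of the score identity $s_X=\E[s_Y\mid X]$; once these regularity facts are in place, the positive-semidefinite remainder in the total-covariance decomposition makes the monotonicity and its equality case immediate.
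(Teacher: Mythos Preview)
Your proof is correct and follows the standard score-decomposition route: identify the joint score with $s_Y$ under the Markov hypothesis, establish $s_X=\E[s_Y\mid X]$ by differentiating the marginal, and then read off the inequality from the law of total covariance, with the positive semidefinite remainder $\E[\mathrm{Cov}(s_Y\mid X)]$ controlling both the inequality and its equality case.

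Note, however, that the paper does \emph{not} supply its own proof of this theorem. It is listed in Appendix~\ref{app:other_lemmas} among ``Other Matrix Lemmas'' and is simply quoted from \cite{Zamir98} as a known result, then invoked as a tool (notably in the proof of Theorem~\ref{thm:dpi_seq}). So there is no in-paper argument to compare against; your write-up is a self-contained proof of a result the authors take as given. One small remark on the equality discussion: you correctly show that sufficiency of $\gamma(Y)$ implies $s_Y=s_{\gamma(Y)}$ and hence equality, which is all the theorem asserts; your converse observation (equality forces $s_Y$ to be $\gamma(Y)$-measurable) is stronger than what is stated and amounts to score-sufficiency at the given $\Theta$, not full sufficiency, so be careful not to overclaim an ``iff'' there.
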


\section{Sampling Methods}
\label{app:sampling}

\subsection{Proof of equation \eqref{eq:jinv_ps}}

\noindent  Expanding $(1+(-1))^k$ leads to the useful identity
\be
   \sum_{\ell=1}^k (-1)^{k-\ell}\dbinom{k}{\ell} = (-1)^{k-1}.
   \label{eq:binom_sum}
\ee
Using \eqref{eq:gen_inverse_diag} and $b'_{jk} = 
(-1)^{k-j}\dbinom{k}{j}\qp^{k-j}\pp^{-k}$ for PS, we have
\begin{align*}
\nonumber
 ({\mathbf{J}}^{-1})_{jj} &= \sum_{k=j}^W \dbinom{k}{j}^2 \qp^{2(k-j)} \pp^{-2k}
	d_k \\
	&- \tiny{\text{$\frac{\Big( \sum_{k=j}^W \sum_{\ell=1}^k (-1)^{2k-j-\ell} \binom{k}{j} 
	\binom{k}{\ell}	\qp^{2k-j} \pp^{-2k} d_k \Big)^2}{d_0 + \sum_{k=1}^W d_k \Big(\qp^k \pp^{-k}
	\sum_{\ell=1}^k (-1)^{k-\ell} \binom{k}{\ell} \Big)^2}	
	$}}.
\end{align*}
The denominator can be simplified as follows:
\begin{align*}
&d_0 + \sum_{k=1}^W d_k \Big(\qp^k \pp^{-k} \sum_{\ell=1}^k (-1)^{k-\ell} \binom{k}{\ell} 
\Big)^2\\
&= d_0 + \sum_{k=1}^W d_k \qp^{2k} \pp^{-2k} \Big( \sum_{\ell=1}^k (-1)^{k-\ell} \binom{k}{\ell} 
\Big)^2\\
&= d_0 + \sum_{k=1}^W (-1)^{2k-2} d_k \qp^{2k} \pp^{-2k}\\
&= \sum_{k=0}^W \qp^{2k} \pp^{-2k} d_k,
\end{align*}
where we used identity \eqref{eq:binom_sum} in the second line.

Similarly, the numerator can be simplified:
\begin{align*}
&\left( \sum_{k=j}^W \sum_{\ell=1}^k (-1)^{2k-j-\ell} \binom{k}{j} 
	\binom{k}{\ell}	\qp^{2k-j} \pp^{-2k} d_k \right)^2\\
&= \left( \sum_{k=j}^W  d_k \dbinom{k}{j}(-1)^{k-j} \qp^{2k-j} \pp^{-2k} 
\sum_{\ell=1}^k (-1)^{k-\ell} \dbinom{k}{\ell} \right)^2\\
&=\left( \sum_{k=j}^W  d_k \dbinom{k}{j}(-1)^{k-j} \qp^{2k-j} \pp^{-2k} 
(-1)^{k-1} \right)^2\\
&=\left( \sum_{k=j}^W  (-1)^{2k-j-1} d_k \dbinom{k}{j} \qp^{2k-j} \pp^{-2k} \right)^2,
\end{align*}
where \eqref{eq:binom_sum} is used in the third line. 
This proves \eqref{eq:jinv_ps}.

\section{Comparisons}
\label{app:comps}

\subsection{Proof of Theorem \ref{thm:dpi_sampling}}

\noindent Let $Z$ denote any method except SH, and $Y$ the complete outcome  
(i.e.~the vector of SEQ numbers and the SYN variable in the richest case) of sampling with $Z$ using parameter $p_1$ (for DS $(\ppo,\pfo)$).
Now sample  $Y$ using $Z$ with parameter $p_2/p_1$ to form $X$ (for DS $(\ppt/\ppo,\pft/\pfo)$). 
Since $X$ is a function only of $Y$ and the new sampling, the DPI for Fisher applies  
(Theorem~\ref{thm:fisher_dpi}) (and $X \subseteq Y$). 
Furthermore, it is easy to see that $X$ is statistically equivalent to the outcome of sampling the original data 
using $Z$ with probability $p_1(p_2/p_1)=p_2$ (for DS $(\ppo(\ppt/\ppo),\pfo(\pft/\pfo))=(\ppt,\pft)$).
It follows that $\uncon_Z(p_1) \ge \uncon_Z(p_2)$.
Equality holds iff $p_2 = p_1$ (for DS  $\ppo=\ppt$ and $\pfo=\pft$) implying $X = Y$, since sampling inherently discards information.

The above proof does not apply to SH as it does not have a closure
property, meaning that $X$ is not equivalent to applying SH with some $\pp$.
We turn instead to a direct method, exploiting the tridiagonal structure of $\uncon^{-1}_{\text{SH}}$.

Denote $(\uncon^{-1}_{\text{SH}})_{ij} = a_{i,j}$, and consider the quadratic form 
$\mathbf{x}^\T \uncon^{-1}_{\text{SH}} \mathbf{x}$ for any non-zero vector $\mathbf{x} \in \Real^W$:
\begin{align}
\nonumber
 \mathbf{x}^\T \uncon^{-1}_{\text{SH}} \mathbf{x} &= (a_{1,1} + a_{1,2}) x_1^2 + (a_{W,W} + a_{W-1,W})x^2_W \\
  \label{eq:sh_quadratic}
    &+\sum_{j=2}^{W-1} (a_{j,j} + a_{j-1,j} + a_{j,j+1}) x^2_j \\\nonumber
    & + \sum_{j=1}^W (-a_{j,j+1}) (x_j - x_{j+1})^2.
\end{align}
From Lemma~\ref{lem:prod_rowstoch}  we know $\uncon^{-1}\onew = \vth$. 
It follows that
\begin{align*}
  (a_{1,1} + a_{1,2}) &= \theta_1\\
  (a_{j,j} + a_{j-1,j} - a_{j,j+1}) &= \theta_j,\ j = 2,\ldots ,W-1\\
  (a_{W,W} + a_{W-1,W}) &= \theta_W
\end{align*}
which are each independent of $\pp$.  Consider then the term involving 
$-a_{j,j+1} = \pp^{-2} \qp d_{j+1} = \frac{1}{\pp} \sum_{k=j+1}^W \qp^{k-j} \theta_k$, $1 \le j < W$. 
Differentiating with respect with $\pp$, we obtain
\ben
   -\frac{1}{\pp^2} \sum_{k=j+1}^W \qp^{k-j} \theta_k - \frac{1}{\pp} \sum^W_{k=j+1} (k-j) \qp^{k-j-1} \theta_k.
\een
Since this is negative, it follows that  $\mathbf{x}^\T \uncon^{-1}_{\text{SH}} \mathbf{x}$ 
decreases monotonically in $\pp$, and so $\uncon^{-1}_{\text{SH}}(p_1) \le \uncon^{-1}_{\text{SH}}(p_2)$ for any $p_1\ge p_2$.
The result then follows by Lemma~\ref{thm:positive_def_inv}(ii).

\subsection{Proof of Theorem \ref{thm:ps_pssyn_comp}}

\noindent We first consider PPR normalization, where $\pp = p$ for both methods. Recall from 
(\ref{eq:jinv_ps}) that 
\begin{align*}
 ({\mathbf{J}}^{-1}_{\text{PS}})_{jj} &= \sum_{k=j}^W \dbinom{k}{j}^2 q^{2(k-j)}
	p^{-2k} d_{k, \text{PS}} \\
	&\hspace{5mm} - \frac{\Big( \sum_{k=j}^W  (-1)^{2k-j-1} d_{k, \text{PS}} \dbinom{k}{j} 
	q^{2k-j} p^{-2k} \Big)^2}{\sum_{k=0}^W q^{2k} p^{-2k} d_{k, \text{PS}}}	\\
       & \ge \underbrace{\sum_{k=j}^W \dbinom{k}{j}^2 q^{2 (k-j)} p^{-2k} d_{k, \text{PS}}}_{T_1} 
           - q^{-2j} (W\!-\!j\!+\!1)W! 
\end{align*}
where the lower bound follows by, in the second term, 
dropping the first $j$ terms in the denominator and upper bounding $\dbinom{k}{j}$ by $W!$.
Also, from (\ref{eq:jinv_pssyn}),
\ben
   ({\mathbf{J}_{\text{PS+SYN}}^{-1}})_{jj} = \underbrace{\sum_{k=j}^W	\dbinom{k-1}
{j-1}^2 q^{2(k-j)} p^{-2k} d_{k, \text{PS+SYN}}}_{T_2} - \frac{q}{p} \theta^2_j.
\een
Since $\binom{k}{j} = \binom{k-1}{j-1} + \binom{k-1}{j} \ge \binom{k-1}{j-1}$,
\begin{align*}
    d_{j, \text{PS}} &\ge \sum_{k=j}^W \dbinom{k-1}{j-1} p^{j} q^{k-j} \theta_k = d_{j,\text{PS+SYN}},
\end{align*}
implying that $T_1 \ge T_2$.
 
Now under the limit $p\to 0$, for $j \ge 1$, the second term for PS becomes
$-(W-j+1)W!$ which is finite, whereas for PS+SYN, $-\frac{q}{p} \theta_j^2 \approx -\frac{1}{p}
\theta_j^2 \to -\infty$. It follows that $({\mathbf{J}_{\text{PS+SYN}}^{-1}})_{jj} \le 
({\mathbf{J}}^{-1}_{\text{PS}})_{jj}$. Since $\pp> p$ for PS+SYN under ESR, the result also holds 
for ESR by Theorem~\ref{thm:dpi_sampling}.

\subsection{Proof of Theorem \ref{thm:pssynseq_comp}}

\noindent We begin by examining the simplest case, where $j=W$. We have
\be
   \nonumber
   (\uncon^{-1}_{\text{PS+SEQ}})_{WW} = \frac{\theta_W}{\pp^2}
\ee
and
\be
   \nonumber
   (\uncon^{-1}_{\text{PS+SYN+SEQ}})_{WW} = \frac{\theta_W}{\pp^2} - \frac{\qp} {\pp}\theta^2_W,
\ee
and so $(\uncon^{-1}_{\text{PS+SYN+SEQ}})_{WW} \le (\uncon^{-1}_{\text{PS+SEQ}})_{WW}$ under PPR.

Now consider the case $3 \le j \le W-1$. Let $d_{Q,j}$ and $d_{SQ,j}$ be
the proportion of sampled flows of size $j$ after sampling by PS+SEQ and PS+SYN+SEQ
respectively. For $j \ge 1$, a straightforward comparison would yield $d_{Q,j} \ge
d_{SQ,j}$. Intuitively, since more flows are discarded in the PS+SYN+SEQ scheme, the
proportion of sampled flows must be less than the proportions of PS+SEQ. Therefore,
expressing the diagonals in terms of $d_{Q,j}$ and $d_{SQ,j}$, we have for $3 \le j \le
W-1$,
\be
   \nonumber
   (\uncon^{-1}_{\text{PS+SEQ}})_{jj} = \pp^{-4}d_{Q,j} + 4\qp^2\pp^{-4}d_{Q,j+1} + \qp^4 \pp^{-4} d_{Q,j+2}
\ee
and
\begin{align*}
   (\uncon^{-1}_{\text{PS+SYN+SEQ}})_{jj}
          & = \pp^{-4}d_{SQ,j} + \qp^2\pp^{-4}d_{SQ,j+1} - \qp\pp^{-1}\theta^2_j \\
          & \le \pp^{-4}d_{SQ,j} + \qp^2\pp^{-4}d_{SQ,j+1},
\end{align*}
which, by a direct comparison, shows $(\uncon^{-1}_{\text{PS+SYN+SEQ}})_{jj} \\ \le
(\uncon^{-1}_{\text{PS+SEQ}})_{jj}$. Similarly, the ESR comparison follows since the
sampling rate for PS+SYN+SEQ must increase, thereby reducing its CRLB.

\subsection{Proof of Theorem \ref{thm:ds_mono}}

\noindent First consider the simplest case, $j=W$. By substituting (\ref{eq:ESRcurve}) into
(\ref{eq:DS_diags}) and then differentiating w.r.t $\pp$
\begin{align*}
   \frac{d}{d\pp}(\uncon_{\text{DS}}^{-1})_{W\!W} =  - \frac{\theta_W}{\pp^2 pD} -
\frac{(D-1)\theta_W^2}{pD} < 0,
\end{align*}
implying $(\uncon_{\text{DS}}^{-1})_{W\!W}$ is monotonically decreasing with $\pp$.

For $2 \le j \le W-1$ the derivative $\frac{d}{d\pp}(\uncon_{\text{DS}}^{-1})_{jj}$ is given by
\begin{align*}
    & -\frac{\theta_j}{\pp^2 pD} - \frac{(D-1)\theta_j^2}{pD} - \frac{1}{\pp^2 pD}
         \left( \sum^W_{k=j+1} \qp^{k-j} (1+\qp)\theta_k \right) \\
    & \quad - \frac{1}{\pf\pp} \left(\sum^W_{k=j+1} \qp^{k-j-1}((k-j) + (k-j+1)\qp) \theta_k
     \right),
\end{align*}
which is negative since each term is negative for $0 < \pp \le 1$.

Finally, for $j=1$ we have
\begin{align}
\nonumber
  \frac{d}{d\pp}(\uncon_{\text{DS}}^{-1})_{11} & = \frac{D-1}{pD} \theta_1 (1-\theta_1) -\frac{1}{\pp^2 pD} \left( \sum_{k=2}^W
        \qp^{k-1} \theta_k \right) \\
 \label{eq:varone_diff}
                                                                 & - \frac{\pp(D-1)+1}{\pp pD} \left( \sum^W_{k=2} (k-1)\qp^{k-2}\theta_k \right).
\end{align}
For small values of $\pp$ the expression is dominated by terms in $1/\pp$ and is therefore
again negative, but as the first term is positive, for large $\pp$ it may change sign.
It is not hard to show that $\frac{d}{d\pp^2}(\uncon_{\text{DS}}^{-1})_{11} > 0$,  so at
most one sign change is possible. Setting $\pp=1$ in (\ref{eq:varone_diff}) yields
(\ref{eq:mono_cond}) as the necessary and sufficient condition for this not to occur in
the feasible region $\pp\le 1$. The special cases follow simply from (\ref{eq:mono_cond}).

\subsection{Proof of Theorem \ref{thm:sh_comp}}

\noindent First consider the case $2 \le j \le W$.  From \eqref{eq:fsinv_alt} and (\ref{eq:SHj})
\ben
   (\uncon^{-1}_{\text{SH}})_{jj}  \ge \frac{\theta_j}{\pp}  \ge \frac{\theta_j}{p}  \ge \frac{\theta_j}{p} - \frac{q}
             {p}\theta_j^2 = (\uncon^{-1}_{\text{FS}})_{jj}
\een
since $\pf=p$ and $\pp=\pp(p)\le p$ under both PPR and ESR.

\smallskip
\noindent Now consider $j=1$.  It is convenient to recall (\ref{eq:SHone}) and 
\eqref{eq:fsinv_alt}:
 $(\uncon^{-1}_{\text{SH}})_{11} = \theta_1 + \frac{1}{\pp} \sum_{k=2}^W \qp^{k-1} \theta_k$, 
 and  $(\uncon^{-1}_{\text{FS}})_{11} = \frac{1}{\pf}\theta_1 - \frac{\qf}{\pf} \theta_1^2$.
It follows that $(\uncon^{-1}_{\text{FS}})_{11} \le (\uncon^{-1}_{\text{SH}})_{11}$ when
\be
   \pp \le \frac{p \sum_{k=2}^W \qp^{k-1} \theta_k}{\theta_1(1-\theta_1)(1-p)}.
\label{eq:actual}
\ee
A sufficient condition implying (\ref{eq:actual}) is obtained by using the lower bound 
$\qp\theta_2 \le \sum_{k=2}^W \qp^{k-1} \theta_k$ and rearranging,
yielding
\be
  \pp \le \frac{p \theta_2}{\theta_1(1-\theta_1)(1-p) + p\theta_2}.
\label{eq:suff_cond_sh}
\ee
Furthermore, since $\pp \le p$, a more restrictive sufficient condition is given by replacing the l.h.s.~with $p$,  
which reduces to $p(1-p)\Big( \theta_1(1-\theta_1) - \theta_2\Big) \le 0$, which shows that for any  
$0 \le p \le 1$, if $\theta_2 \ge \theta_1(1-\theta_1)$, then  (\ref{eq:actual}) holds and 
$(\uncon^{-1}_{\text{FS}})_{11} \le (\uncon^{-1}_{\text{SH}})_{11}$.  The condition is satisfied if $W=2$.

Now let $\vth$ be arbitrary and consider the small $p$ (and hence small $\pp$) limit.  
Then (\ref{eq:actual}) becomes $\pp\le p/\theta_1$, which is always satisfied since $\pp\le p$.

\section{Maximum Likelihood Estimators}
\label{sec:mle}

\subsection{Proof of Theorem \ref{thm:mle_closed}}
\label{ssec:closed_mle}

\noindent The likelihood function for $\nf$ flows is
\ben
    f(\vth, \nf) = \prod_i f(j_i;\vth) = \prod_{j \ge 0} d_j^{M'_j}.
\een
The MLE is the $\vth$ which maximizes the log-likelihood 
\begin{align*}
    l(\vth, \nf) &= \sum_{j \ge 0} M'_j \log d_j 
\end{align*}
subject to the constraint $\sum_{k \ge 1} \theta_k = 1$, $\theta_k > 0,\ \forall k$. The
optimization problem admits a feasible solution by the Bolzano-Weierstrass theorem 
\cite[p.~517]{TaylorCalculus83}, since the log-likelihood function is concave and
continuous, and optimization is performed over a compact, convex set. Furthermore, the
solution obtained will be unique under our assumptions, since the Hessian of the
log-likelihood is the Fisher information, which is positive definite given  $0 < \theta_k 
< 1$ for all $k$.

Given the assumptions, the method of Lagrange multipliers would yield the optimal solution 
since strong duality holds as the problem satisfies Slater's constraint qualification 
\cite[Section 5.2.3, p.~226]{Boyd04}. 
The Lagrangian is
\begin{align*}
\mathcal{L}(\vth, \lambda, \boldsymbol{\nu}) &= \sum_{j\ge 0} M'_j \log d_j -
\lambda\Big( \sum_{k \ge 1} \theta_k - 1 \Big) - \boldsymbol{\nu}^\T \vth,
\end{align*}
where the vector $\boldsymbol{\nu}$ has elements $\nu_k \ge 0$ and $\lambda \in \Real$. By
differentiating with respect to $\theta_k,\ \forall k$ and the
multipliers,
\begin{align*}
\frac{\partial}{\partial \theta_k} \mathcal{L}(\vth, \lambda, \boldsymbol{\nu}) &=
\sum_{k \ge j} \frac{M'_j} {d_j} b_{jk} - \lambda
- \nu_k = 0, \\
\frac{\partial}{\partial \lambda} \mathcal{L}(\vth, \lambda, \boldsymbol{\nu}) &=
1 - \sum_{k \ge 1} \theta_k = 0,\\
\frac{\partial}{\partial \nu_k} \mathcal{L}(\vth, \lambda, \boldsymbol{\nu}) &= \theta_k
= 0.
\end{align*}
The second equation is just the equality constraint while the third yields a solution
$\vth = \mathbf{0}_W$, which lies on the boundary, yielding an unbounded solution 
(observed by substituting the solution into the likelihood function). That leaves
the first equation, and by the Karush-Kuhn-Tucker condition, $\boldsymbol{\nu}^\T \vth = 0$ 
\cite[Section 5.5.3, p.~243]{Boyd04}, implying that $\boldsymbol{\nu} = \mathbf{0}_W$ 
(our assumptions require that the original parameters $0 < \theta_k < 1$ for all $k$,
hence the optimal must lie within the region where the constraints are inactive). Thus, we have,
in matrix form,
\be
\mathbf{\tilde B}^\T \mathbf{\tilde D}\, \diag(M'_1, \ldots, M'_W) \onew = \lambda
\onew - \frac{M'_0}{d_0} \mathbf{b}_0,
\label{eq:general_mle_eq}
\ee
recalling that $\mathbf{\tilde D} = \diag(d^{-1}_1, \ldots, d^{-1}_W)$.

We proceed to solve for $\lambda$ using the equality constraint $\sum_{k \ge 1} \theta_k =
1$ and $\mathbf{\tilde d} = \mathbf{\tilde B} \vth$, as follows, by multiplying both
sides of (\ref{eq:general_mle_eq}) by $\vth^\T$ to obtain
\begin{align*}
\vth^\T \mathbf{\tilde B}^\T \mathbf{\tilde D}\, \diag(M'_1, \ldots, M'_W) \onew &= 
\lambda - M'_0\\
\mathbf{\tilde d}^\T \mathbf{\tilde D}\, \diag(M'_1, \ldots, M'_W) \onew &= \lambda - 
M'_0\\
\onew^\T \diag(M'_1, \ldots, M'_W) \onew &= \lambda - M'_0,
\end{align*}
implying $\lambda = \nf$.

For methods that pick flows in an unbiased manner, $\mathbf{b}_0 = q \onew$, and
thus $\mathbf{\tilde B}^\T \onew = p \onew$, implying $(\mathbf{\tilde B}^{-1})^\T \onew =  
p^{-1} \onew$, therefore (\ref{eq:general_mle_eq}) reduces to
\be
  \mathbf{\tilde B}^\T \mathbf{\tilde D} \diag(M'_1, \ldots, M'_W) \onew = (\nf -M'_0) \onew.
  \label{eq:optimal}
\ee
which can be rewritten as
\begin{align*}
   \mathbf{\tilde D}^{-1} (\mathbf{\tilde B}^{-1})^\T \onew &= \frac{1}{\nf - M'_0} \diag(M'_1, 
   \ldots, M'_W) \onew\\
   \mathbf{\tilde B} \vth &= \frac{p}{\sum^W_{j = 1} M'_j} [M'_1,\ldots,M'_W]^\T.
\end{align*}  
The result follows.

\begin{rem}
In the conditional framework, expressing the log-likelihood function is difficult, due
the fact that normalization of the likelihood involves division by random variables. 
However, the estimator above would, with high probability, be close to the actual MLE. 
The flow selection process is a Bernoulli process. The denominator, $\sum_{j=1}^W M'_j$
encapsulates information about $\nf$, because asymptotically, the deviation between $p
\nf$ and $\sum_{j=1}^W M'_j$ is extremely small, a consequence of the concentration of
Bernoulli samples around its mean. This property is not found amongst other methods,
such as PS, where samples are biased towards large flows.
\label{rem:conditional_mle}
\end{rem}

\subsection{Proof of Theorem \ref{thm:uncond_sh_mle}}
\label{ssec:sh_mle_closed}

\noindent We begin with the optimization equation~(\ref{eq:general_mle_eq}),
\ben
\mathbf{\tilde B}^\T \mathbf{\tilde D}\, \diag(M'_1, \ldots, M'_W) \onew +
\frac{M'_0}{d_0} \mathbf{b}_0= \nf \onew.
\een
Using properties of the sampling matrix for SH, we have
\begin{align}
\label{eq:b0_reduction}
\hspace{-3mm}\mathbf{\tilde D}\, \diag(M'_1, \ldots, M'_W) \onew +
\frac{M'_0}{d_0} \cdot \frac{q}{p} \mathbf{e}_1 &= \nf (\mathbf{\tilde B}^{-1})^\T\onew\\
\label{eq:d0_simplify}
\hspace{-3mm}\mathbf{\tilde D}\, \diag(M'_0 + M'_1, \ldots, M'_W) \onew &= \nf (\mathbf{\tilde 
B}^{-1})^\T\onew\\
\label{eq:B_simplify}
\hspace{-3mm}\mathbf{\tilde D}\, \diag(p(M'_0 + M'_1), \ldots, M'_W) \onew &= \nf \onew.
\end{align}
where in (\ref{eq:b0_reduction}) we use the property
$\mathbf{b}^\T_0 = \frac{q}{p} \mathbf{\tilde B}^\T \mathbf{e}_1$, where $\mathbf{e}_i$ is
the canonical basis vector, in (\ref{eq:d0_simplify}) we use $d_0 = \frac{q}{p}d_1$, and
in (\ref{eq:B_simplify}) we use $(\mathbf{\tilde B}^{-1})^\T\onew = \diag(p^{-1},1\ldots,1)
\onew$. All these properties can be obtained by a straightforward evaluation using the
sampling matrix. The final line reduces to
\ben
\mathbf{\tilde B}\vth = \frac{1}{\nf}\cdot[p(M'_0+M'_1),\ldots,M'_W]^\T,
\een
proving the result. 

The estimator is unbiased, as by taking the expectation, we obtain
$\E\lbrack p(M'_0+M'_1)\rbrack = \nf (p d_0 + p d_1) = \nf (p+q) d_1 = d_1$, by using the
identity $d_1 = \frac{\pp}{\qp}d_0$ while clearly $\E\lbrack M'_j\rbrack = d_j$ for all $j
\ge 2$. Thus $\E \lbrack \hat \vth_{\text{SH}} \rbrack = \mathbf{\tilde
B}^{-1}_{\text{SH}} \mathbf{\tilde B}_{\text{SH}} \vth = \vth$.

%
\small
\bibliographystyle{IEEEtran}
\bibliography{darryls_publications,darryl}

\begin{biography}
[{\includegraphics[width=1in,height=1.25in,clip,keepaspectratio]{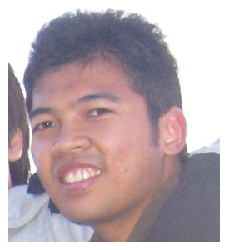}}]
{Paul Tune (M'09) received the B.E.~(Hon.) and B.Sc.~degrees in Electrical and
Electronics Engineering and Computer Science respectively, both in 2005, from the
University of Melbourne, Australia. He has recently completed a Ph.D.~degree at
the ARC Special Center for Ultra-Broadband Information Networks (CUBIN) at the
University of Melbourne, Australia. His research interests are in communication networks,
particularly in traffic inversion and sampling, information theory, statistics and
signal processing, specifically sparse signal recovery.}
\end{biography}
\vspace*{-2\baselineskip}
\begin{biography}
[{\includegraphics[width=1in,height=1.25in,clip,keepaspectratio]{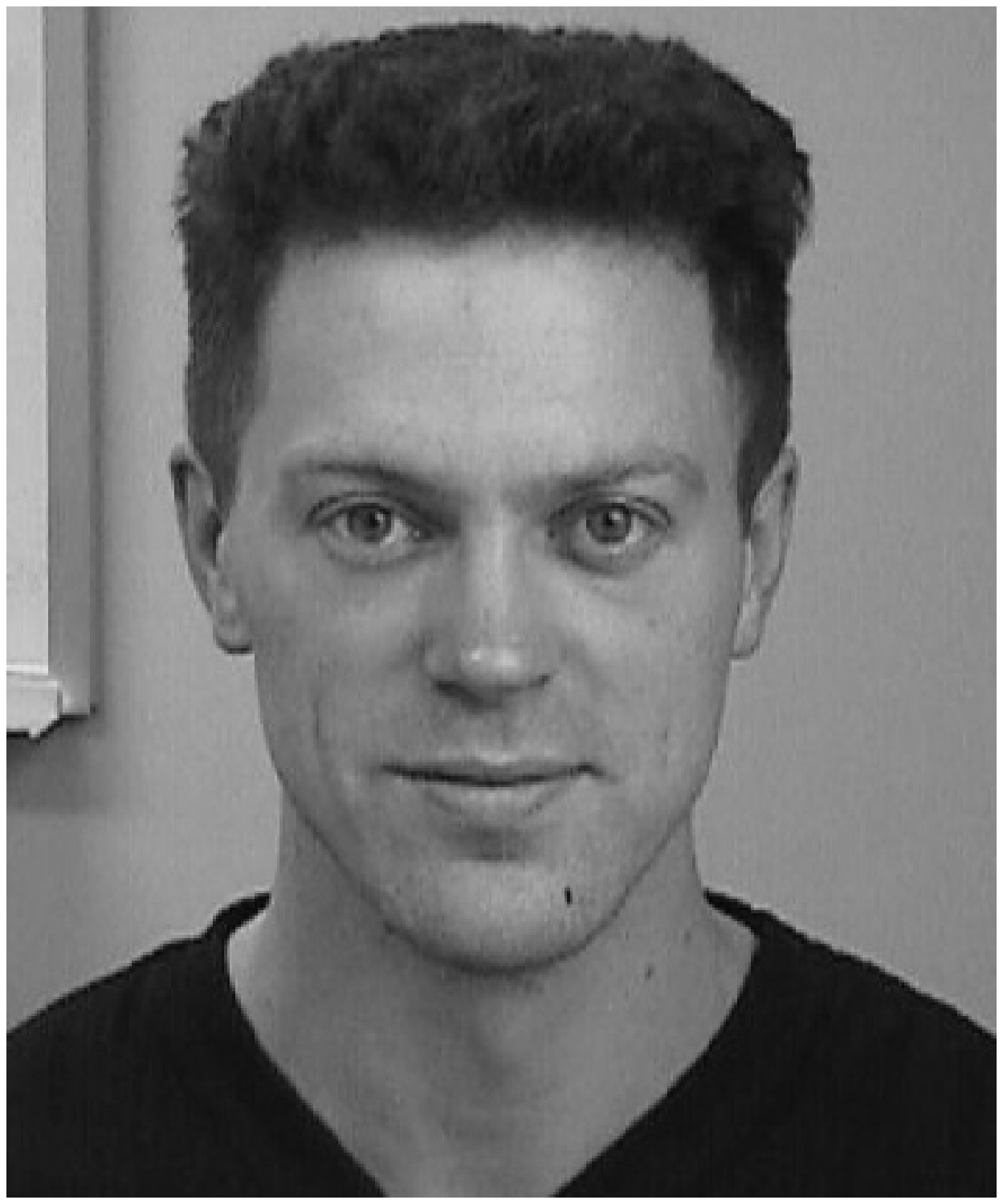}}]
{Darryl Veitch (F'10) completed a BSc.~Hons.~at Monash University, Australia (1985)
and a mathematics Ph.D.~from Cambridge, England (1990). He worked  at
TRL (Telstra, Melbourne), CNET (France Telecom, Paris), KTH
(Stockholm), INRIA (Sophia Antipolis, France), Bellcore (New Jersey), RMIT (Melbourne)
and EMUlab and CUBIN at The University of
Melbourne, where he is a Professorial Research Fellow.
His research interests include traffic modelling, parameter
estimation, active measurement, traffic sampling,  and clock synchronization.}
\end{biography}

\end{document}